\definecolor{MyDarkRed}{rgb}{0.5,0,0.1}
\definecolor{MyDarkBlue}{rgb}{0.1,0.1,0.5}
\definecolor{MyDarkGreen}{rgb}{0.1,0.5,0.1}
\definecolor{MyRed}{rgb}{1.0,0,0}
\definecolor{MyBlue}{rgb}{0,0,1.0}
\definecolor{MyGreen}{rgb}{0,0.8,0}
\definecolor{purple}{rgb}{0.8,0,0.8}
\definecolor{lightgray}{rgb}{0.96,0.96,0.96}
\definecolor{gray}{rgb}{0.6,0.6,0.6}
\definecolor{darkgray}{rgb}{0.4,0.4,0.4}
\definecolor{MyCoffee}{rgb}{0.435, 0.306, 0.216}
\definecolor{MyCyan}{rgb}{0.0, 0.5, 0.5}
\newtheorem{proposition}{\bf Proposition}
\newtheorem{definition}{\bf Definition}
\newtheorem{lemma}{\bf Lemma}
\newcommand{\algcomment}[1]{{\color{gray}{\small~~// #1}}}
\newcommand{\changed}[1]{{#1}} 
\newcommand{\changedB}[1]{{#1}} 
\newcommand{\changedC}[1]{{#1}} 
\newcommand{\changedF}[1]{{#1}} 
\begin{document}
	
\title{
	\changed{Inter-cluster Transmission Control Using Graph Modal Barriers}
}

\author{
	Leiming Zhang 
	\!,~
	Brian M. Sadler
	\!,~
	Rick S. Blum
	~and~ Subhrajit Bhattacharya 
\thanks{L. Zhang and S. Bhattacharya are with the Department of Mechanical Engineering and Mechanics, Lehigh University, Bethlehem, PA 18015 USA. e-mail: \texttt{[lez316,sub216]@lehigh.edu}.}%
\thanks{B. M. Sadler is with Army Research Laboratory, MD USA. \texttt{Brian.m.sadler6.civ@mail.mil}.}%
\thanks{R. S. Blum is with the Department of Electrical \& Computer Engineering, Lehigh University, Bethlehem, PA 18015 USA. e-mail: \texttt{rb0f@lehigh.edu}}%
\thanks{Manuscript received: ; revised: .}
	\vspace{-0.18in}
}

\maketitle
\thispagestyle{empty}
\pagestyle{empty}

\begin{abstract}
\changed{In this paper we consider the problem of transmission across a \changedF{graph} 
and how to effectively control/restrict it with limited resources. 
\changedF{Transmission} can \changedB{represent} information \changedB{transfer} across a social network, \changedB{spread} of a malicious virus across a computer network, or \changedB{spread} of an infectious disease across communities. The key insight is to assign \changedB{proper} weights to \changedF{bottleneck} edges of the \changedF{graph} based on their role in 
\changedF{reducing the connection} between two or more strongly-connected clusters within the \changedF{graph}. 
\changedF{Selectively reducing the weights (implying reduced transmission rate) on the critical} edges helps limit the transmission from one cluster to another. We refer to these as \emph{barrier weights} and their computation is based on the eigenvectors of the graph Laplacian. 
\changedB{Unlike other work on} graph partitioning and clustering,  
we completely circumvent the associated computational complexities by assigning weights to edges instead of performing discrete graph cuts. \changedF{This} allows us to provide strong theoretical results on our proposed methods. We also develop approximations that allow \changedF{low complexity} distributed computation of \changedF{the barrier} weights using only neighborhood communication on the \changedF{graph}.}
\end{abstract}

\section{Introduction}

\changed{
\IEEEPARstart{W}{e} \changedF{consider transmission} or flow control across a \changedF{graph} with the aim of lowering the rate of transmission from one cluster (a strongly-connected subgraph) to another. 
This is an important problem in many different contexts. In computer networks, for example, this may represent the transmission of a virus between computers, and hence the problem is to limit its spread from \changedB{one} sub-network to another. In \changedB{the} context of an epidemic or pandemic this represents the problem of controlling/slowing the transmission of a commutable disease from one community to another.
The premise of this work is based on identifying the relatively weak inter-cluster edges, and restricting flow/transmission across those 
edges. 
	
We use eigenvectors of a graph Laplacian for identifying edges that form weak connections between well-connected clusters within the graph. Instead of determining the cuts in the graph explicitly, we compute values, called \emph{resistance}, that indicate the potential of an edge of being an inter-cluster edge (as opposed to an intra-cluster edge).
This gives us a means of adjusting flow/transmission rates across edges that prevent or slow down transmission from one cluster to another.
We provide strong theoretical results to that end, and describe a decentralized algorithm that can be used to compute the edge resistances in a distributed manner through neighborhood communication only.
}

\subsection{Related Work}


Graph clustering and partitioning are well-researched topics in graph and network theory.
In \cite{cheng2017network} the authors use \changedF{the} Cheeger constant as a measure of \changedF{a graph bottleneck} and \changedF{solve} the clustering problem by formulating it as a linear program. This method also allows reducing the bottleneck by adding additional relay nodes and links, but the overall time complexity is relatively high.
In \cite{hagen1992new}, eigenvectors \changedF{of the graph Laplacian are} used for partitioning of circuit netlists. This is a similar approach to our work. However, the authors didn't go beyond the second smallest eigenvalue and corresponding eigenvector.
\changedF{A method is proposed in \cite{le2017approximate}} that approximately obtains Fast Fourier Transforms on graphs using \changedF{the} graph Laplacian matrix and eigenvector matrix.
In \changedF{the} context of image segmentation, \cite{shi2000normalized} developed a graph partitioning and segmentation method using more than one \changedF{eigenvector} of 
of the graph Laplacian\changedF{, and the idea was further developed for multi-layer graphs \cite{dong2012clustering}.}
However, when using multiple eigenvectors, \changed{this} approach requires a separate clustering methods, such as k-means. Furthermore, the cluster assignment (\emph{i.e.}, determination of the \emph{cuts}) is a binary/discrete process.
We, on the other hand, assign
numerical values to \changedF{those} edges representing weak connections between strongly-connected clusters, which is more suitable for applications such as inter-cluster transmission control. 
%
\changedF{Online multi-agent optimization methods can be used to form local consensus 
within larger graphs~\cite{koppel2017proximity}, and these methods have some similarity to the approach presented in this paper.}
An ordered transmission approach is proposed in \cite{chen2019testing} to test the covariance matrix of Gaussian graph model where the global transmissions can be reduced without compromising
\changedF{performance, which is particularly effective for graphs that have sparsely connected clusters.}
\changedF{Graph clustering can also be related to discrete and continuous time dynamic processes on graphs such as diffusion and epidemics using a Z-Laplacian framework \cite{yan2017graph}.}
%
\changedF{Identifying critical edges may also be a precursor to placement of protection nodes whose purpose is to stop cascading network failures~\cite{Yu:2018:cascading}.}




\begin{figure*}
	\centering
	{\includegraphics[width = 1.95\columnwidth]{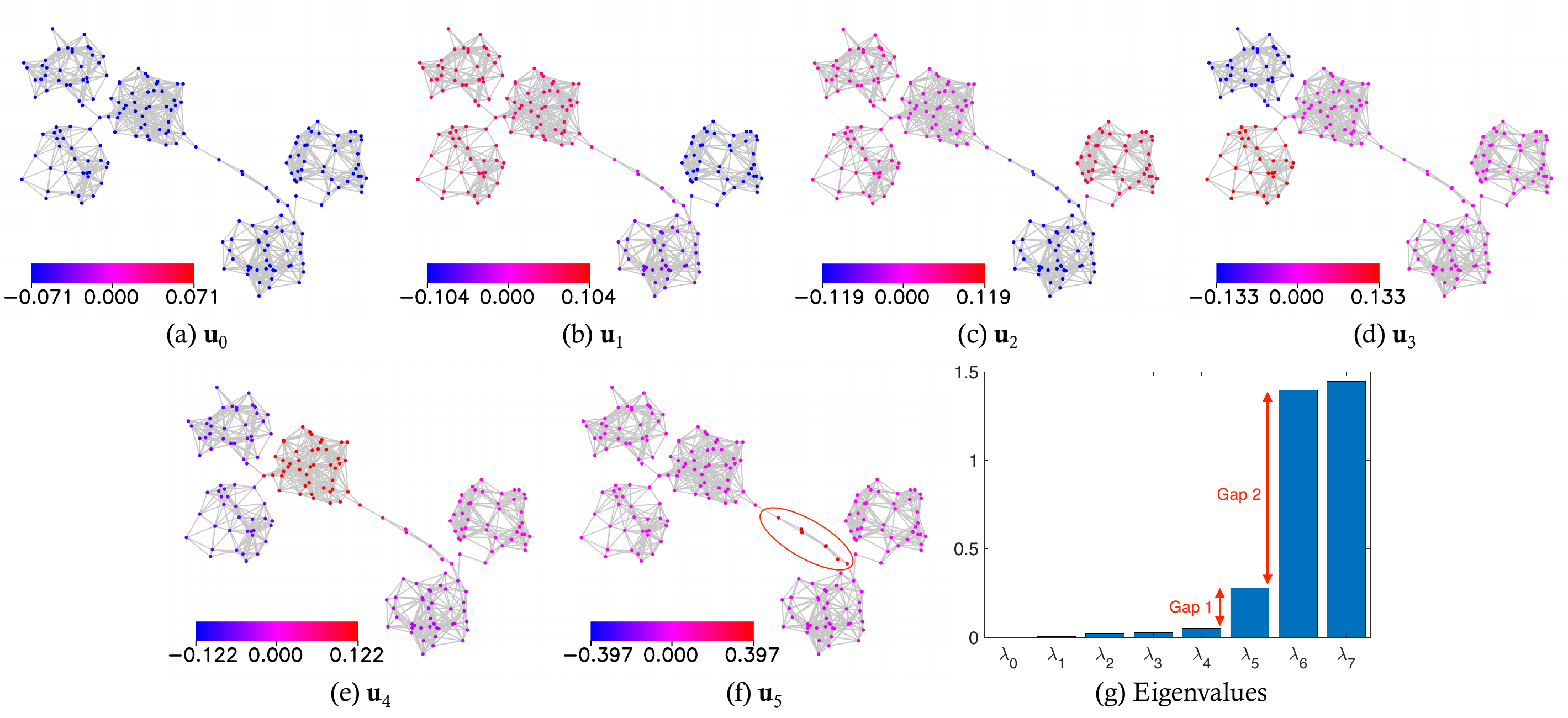}} \hspace{0.0in}
	\caption{Eigenvectors and first $8$ eigenvalues of a graph Laplacian with 5 well-defined clusters.}
	\label{fig:5smallExampleExample}
\end{figure*}

\begin{figure*}
	\centering
	{\includegraphics[width = 1.95\columnwidth]{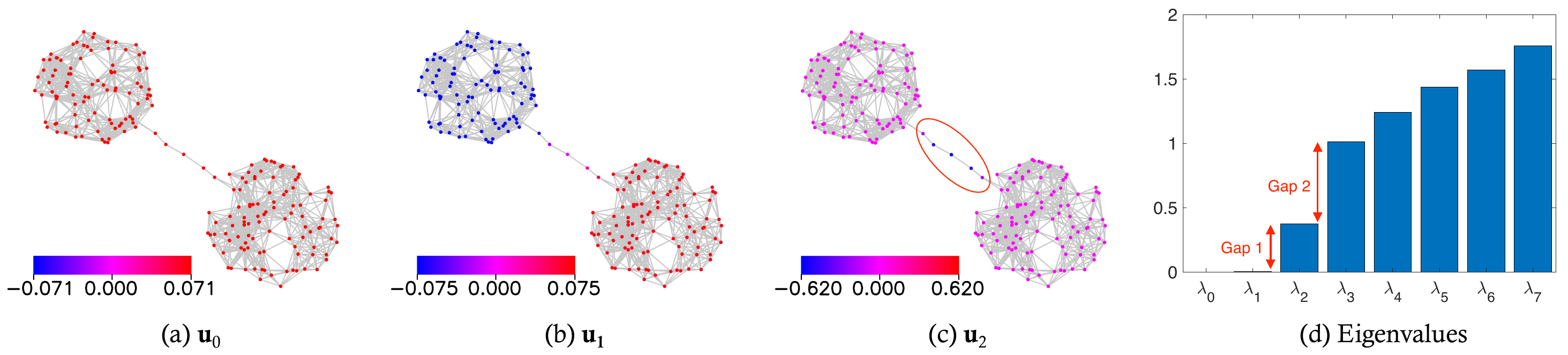}} \hspace{0.0in}
	\caption{Eigenvectors and first $8$ eigenvalues of a graph Laplacian with 2 big clusters.}
	\label{fig:2bigClustersExample}
\end{figure*}

\subsection{Contributions and Overview of Paper}

\changed{
While \changedB{the} Fiedler vector (the eigenvector corresponding to the first non-zero eigenvalue \changedF{of the graph} Laplacian) has been used to partition a graph into two \changedB{parts}~\cite{Chung:1997,hagen1992new}, we make use of the entire spectrum of the Laplacian \changedB{to control flow between} 
multiple clusters in \changedB{a} graph. 
Even in prior work where \changedB{multiple} 
eigenvectors have been used for graph partitioning/clustering (\cite{shi2000normalized,dong2012clustering} \changedF{use the top K} eigenvectors in \changedB{the} context of image segmentation), very little theoretical guarantees \changedF{have been} provided \changedB{for multi-way graph partitioning using the eigenvectors}. This is partly because the set of all possible partitions or cuts of a graph is extremely large and designing algorithms that find the best partitioning have high computational complexity. \changedF{We circumvent} this issue by not seeking to construct explicit partitions using the eigenvectors of the graph Laplacian. Instead, we assign values (called \emph{resistance}) to each edge that indicates the potential of an edge for being a \emph{cut} in the optimal partitioning. These resistances can thus be considered a \changedB{continuous, real-valued proxy} 
for the discrete cuts, and are then sufficient for establishing cost/weight \changedF{\emph{barriers}} that can be used to restrict flow/transmission across the graph, and in particular, restrict the transmission from one cluster to another.
We also develop a decentralized method for computation of \changedF{the} resistance values through distributed, neighborhood connection only. Our \changedF{numerical} evaluation demonstrates how our approach can slow inter-cluster transmission more effectively compared to 
\changedB{baseline methods.}

\changedB{While a dual problem of \changedF{strengthening} the \changedF{bottleneck edges} (\changedF{inter-cluster edges}) may also be considered (\emph{e.g.}, by adding more edges or increasing the weights on the bottleneck edges), in this paper we only consider the problem of restricting inter-cluster flow by means of \changedF{weakening} those edges.
\changedF{However, the methods we propose can be adopted for regulating or increasing flow, as well.}
Furthermore, the problem we consider seeks to restrict connectivity, which is in sharp contrast to that required for consensus on a graph~\cite{10.5555/2821576,10.5555/2440190}. Our objective is to restrict transmission on a graph instead of allowing information flow to happen, which is essential for consensus attainment.}
}

\changed{The paper is organized as follows: In Section~\ref{sec:preliminaries} we provide background on \changedB{the} graph Laplacian, \changedB{including some} known properties of its spectrum.
In Section~\ref{sec:mode-based-rep-of-clusters} we first provide a general intuitive description of how the first $q$ eigenvalues and eigenvectors of the graph Laplacian can be used for identifying clusters in a graph. Following that, in Sections~\ref{sec:theory} we provide mathematical underpinning for the said intuition, formally defining what is meant by strongly-connected \emph{clusters}, how to quantify the weakness of inter-cluster connections, and why the first $q$ eigenvectors of the Laplacian provides a description of such clusters. 
In Section~\ref{sec:barrier} we describe how the eigenvectors can be used to compute the \emph{resistance} values on the edges, and in Section~\ref{sec:approx} we describe two approximations that allow distributed computation of the resistance.
Finally in Section~\ref{sec:information-containment} we describe models of transmission \changedB{over} a network and how the resistances can be used for slowing the inter-cluster transmission rate.
\changedB{Numerical results are presented} in Section~\ref{sec:results}.
\changedB{Conclusions are presented in Section~\ref{sec:conclusion}.}}



\section{Preliminaries} \label{sec:preliminaries}


\begin{figure*}
	\centering
	\subfloat[]{\fbox{\includegraphics[width = .22\linewidth]{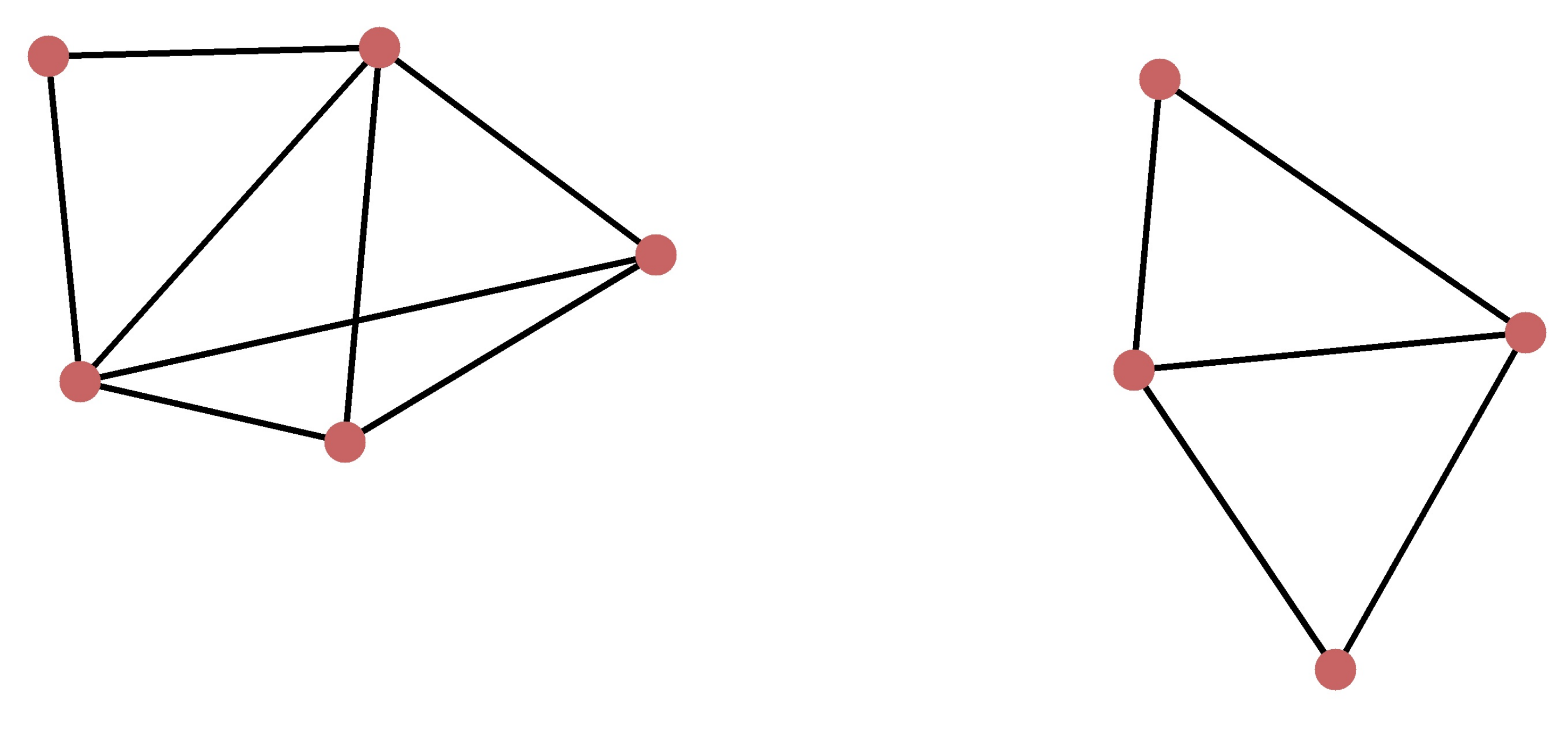}}\label{fig:2smallClustersA}}\hspace{0.1in}
	\subfloat[]{\fbox{\includegraphics[width = .22\linewidth]{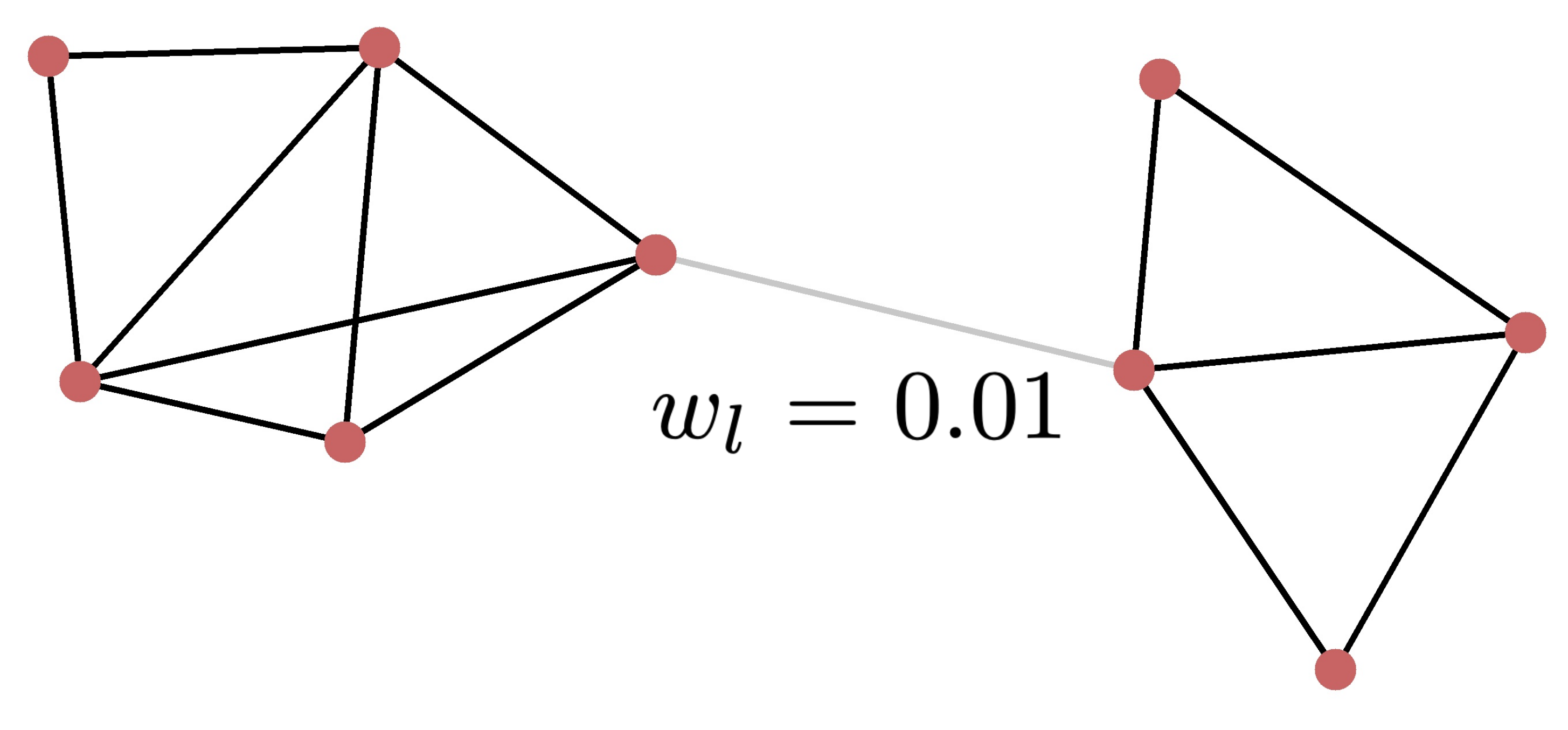}}\label{fig:2smallClustersB}}\hspace{0.1in} 
	\subfloat[]{\fbox{\includegraphics[width = .22\linewidth]{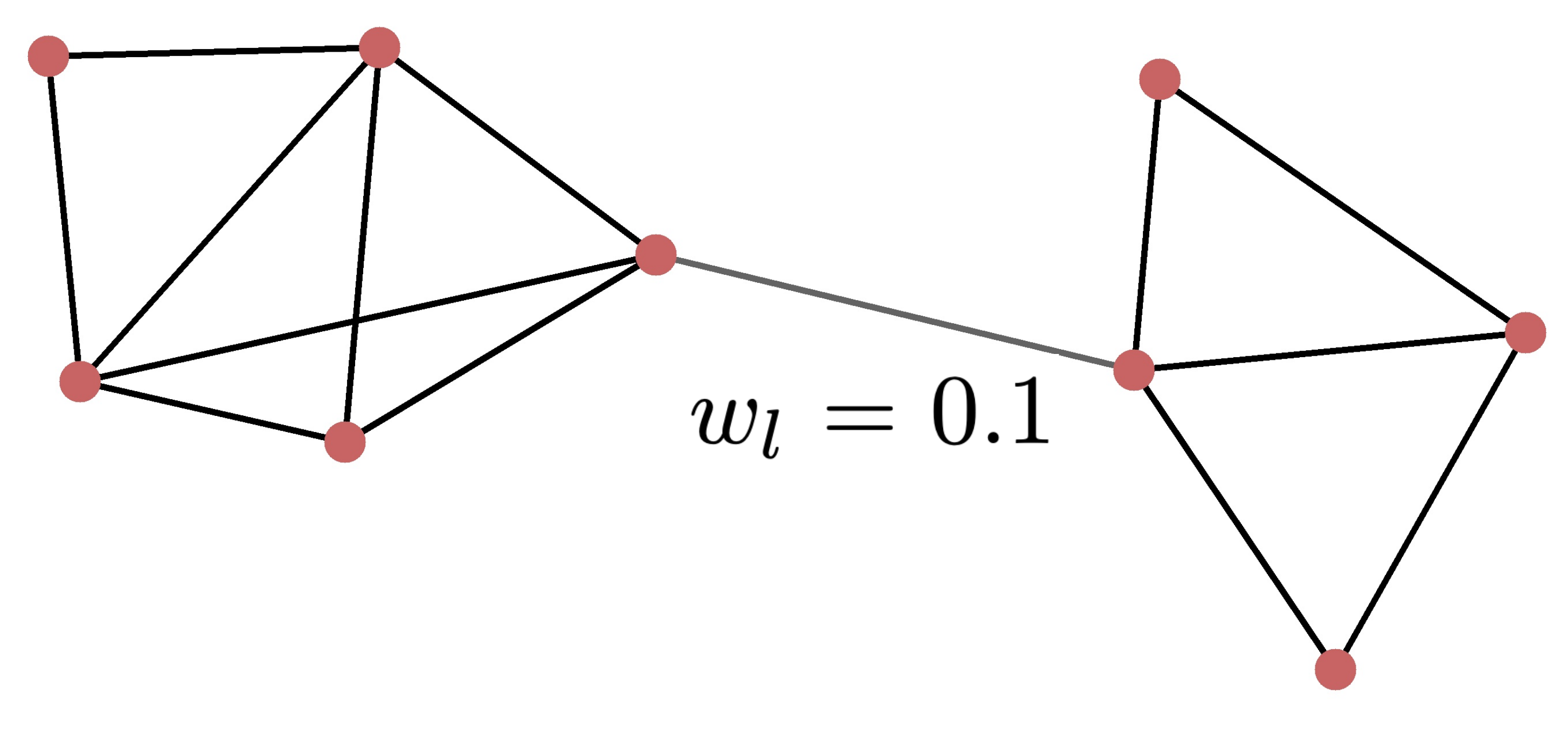}}\label{fig:2smallClustersC}}\hspace{0.1in}
	\subfloat[]{\fbox{\includegraphics[width = .22\linewidth]{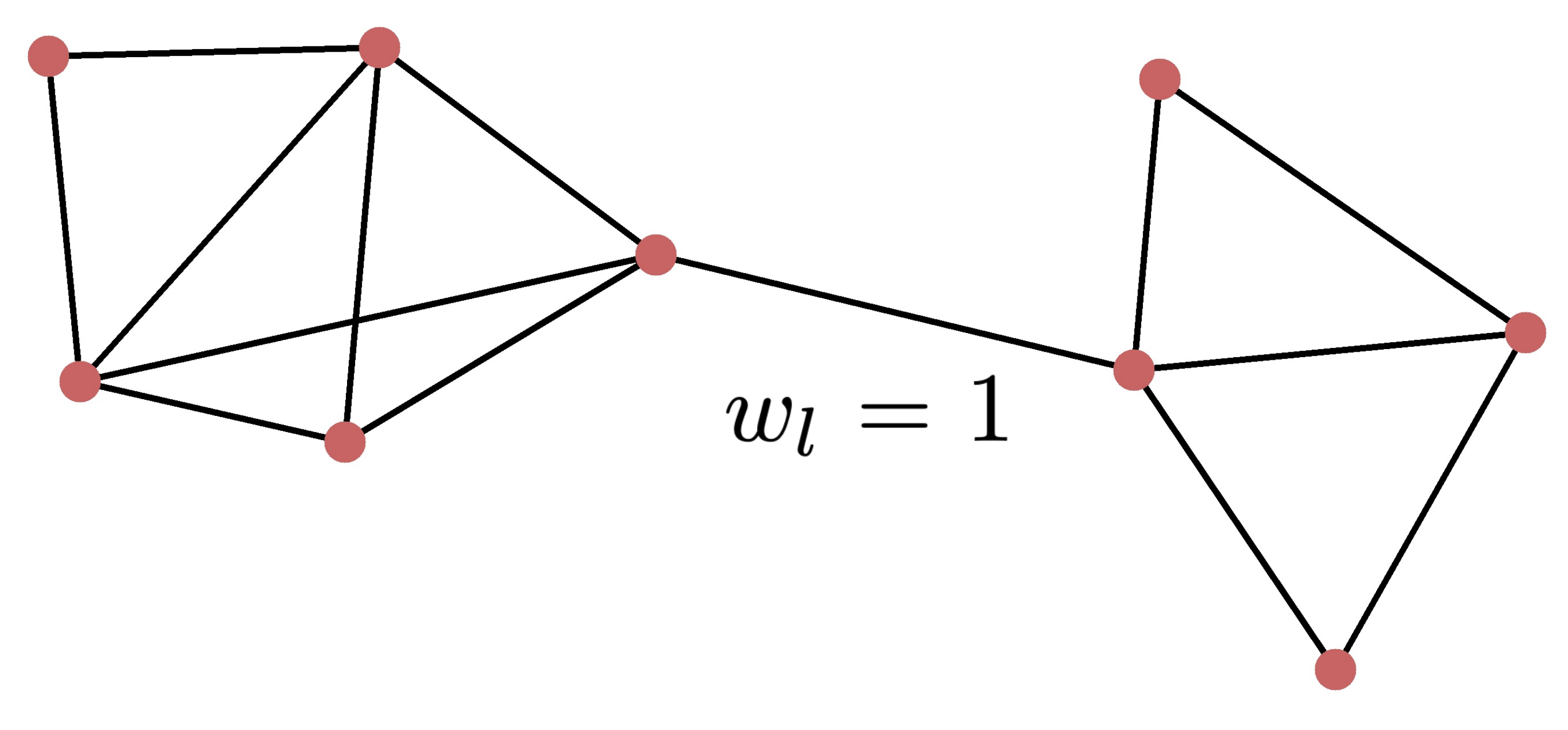}}\label{fig:2smallClustersD}}\\
	\subfloat[] {\includegraphics[width = .23\linewidth]{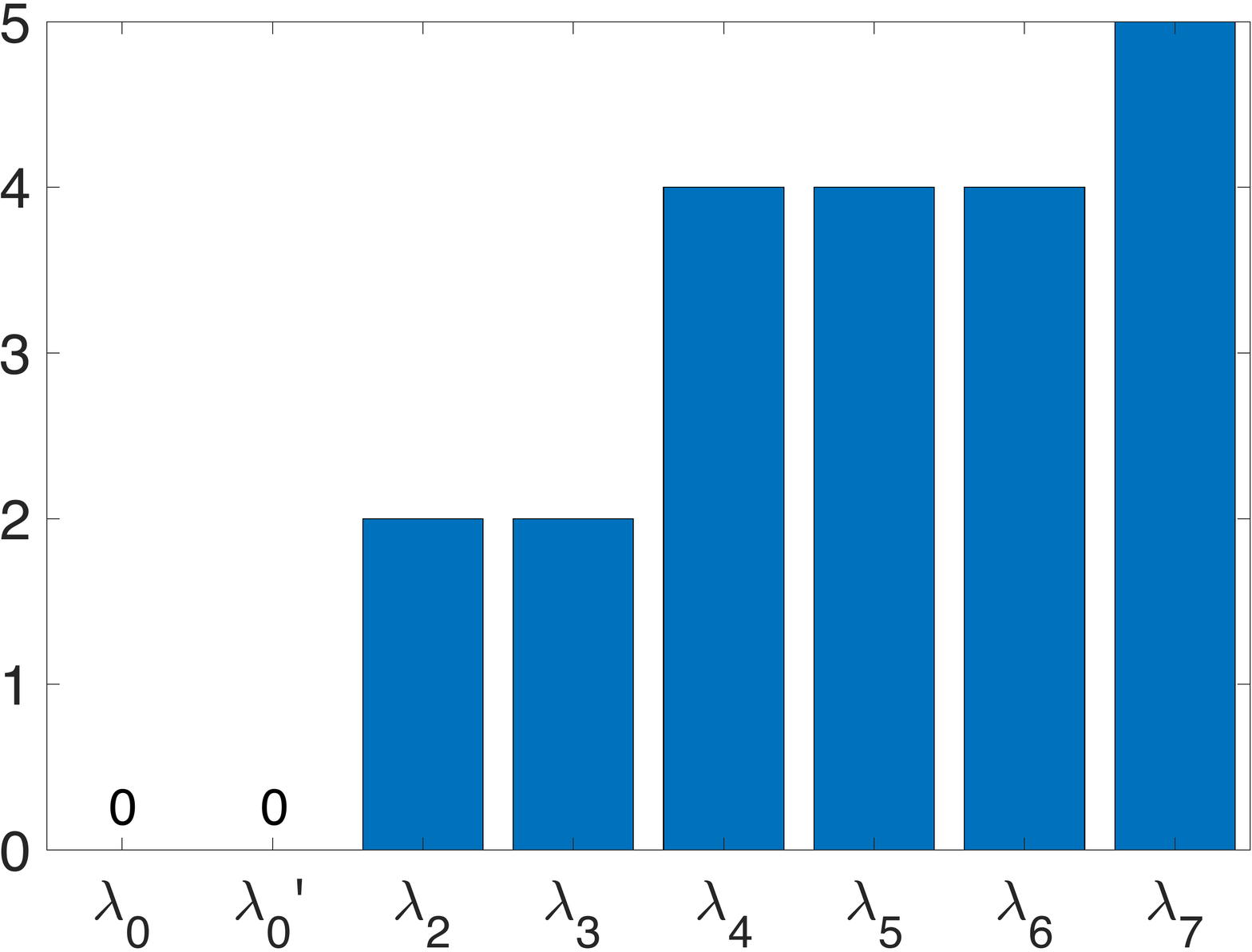}}\label{fig:2smallClustersE}\hspace{0.1in}
	\subfloat[] {\includegraphics[width = .23\linewidth]{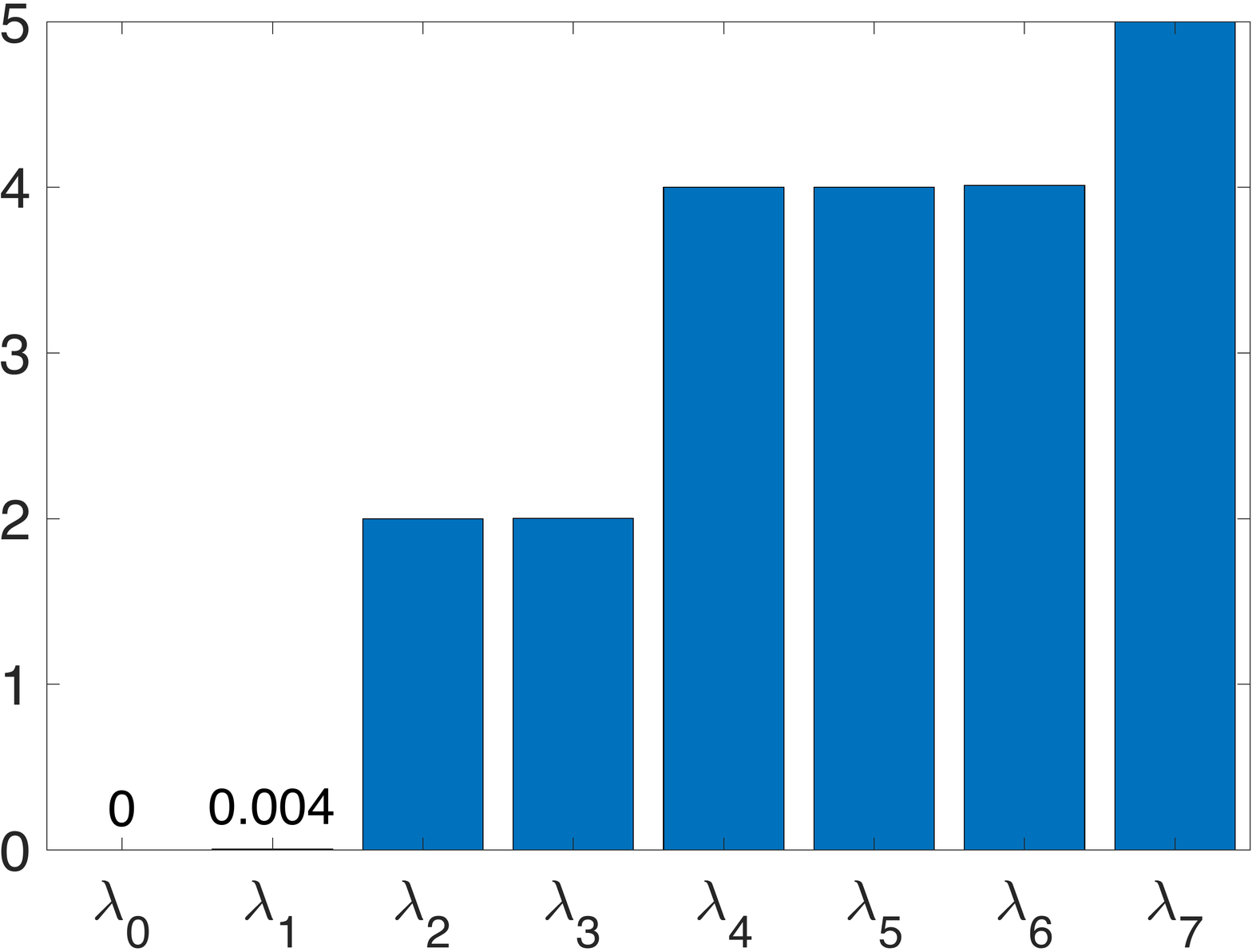}}\label{fig:2smallClustersF}\hspace{0.1in}
	\subfloat[] {\includegraphics[width = .23\linewidth]{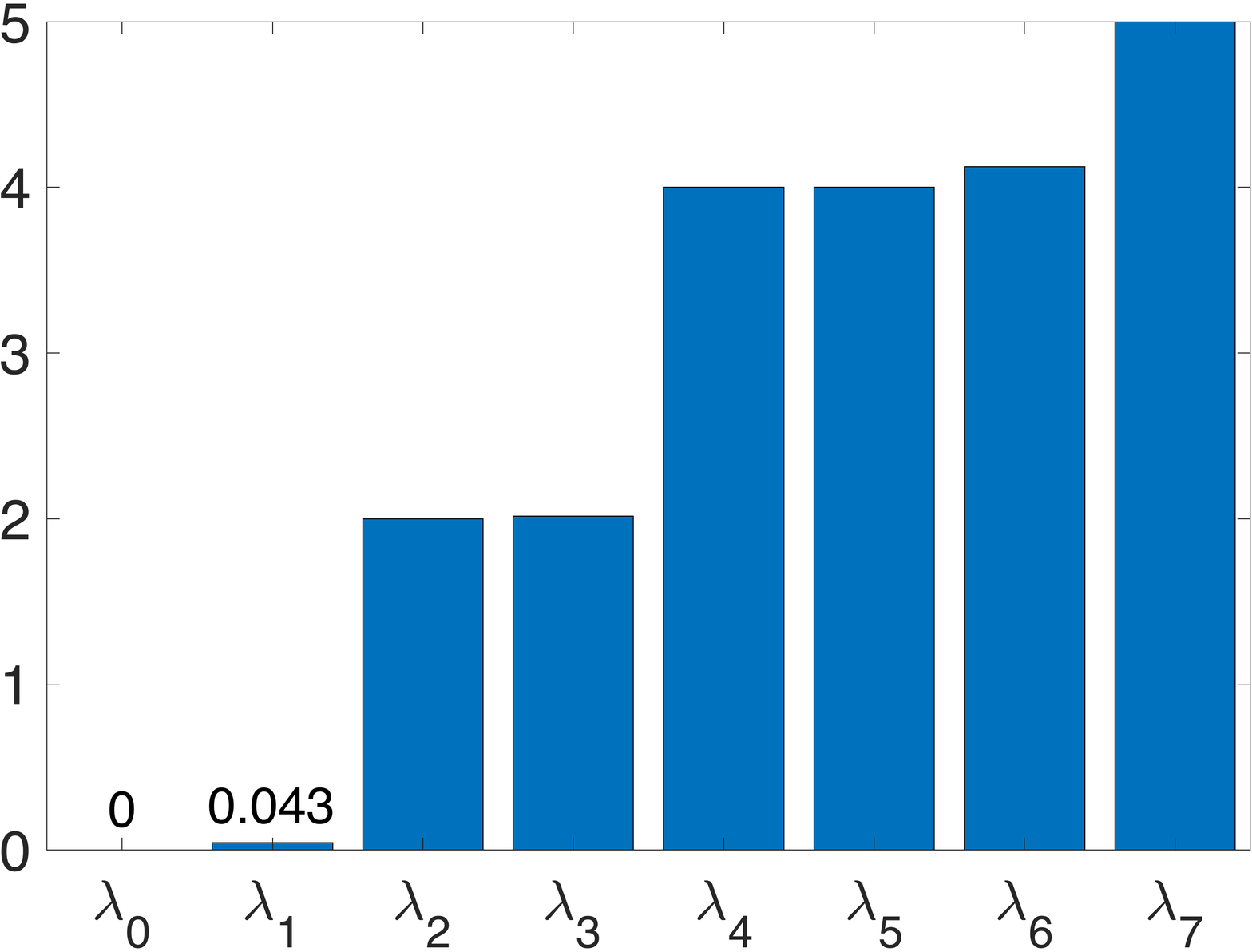}}\label{fig:2smallClustersG}\hspace{0.1in}
	\subfloat[] {\includegraphics[width = .23\linewidth]{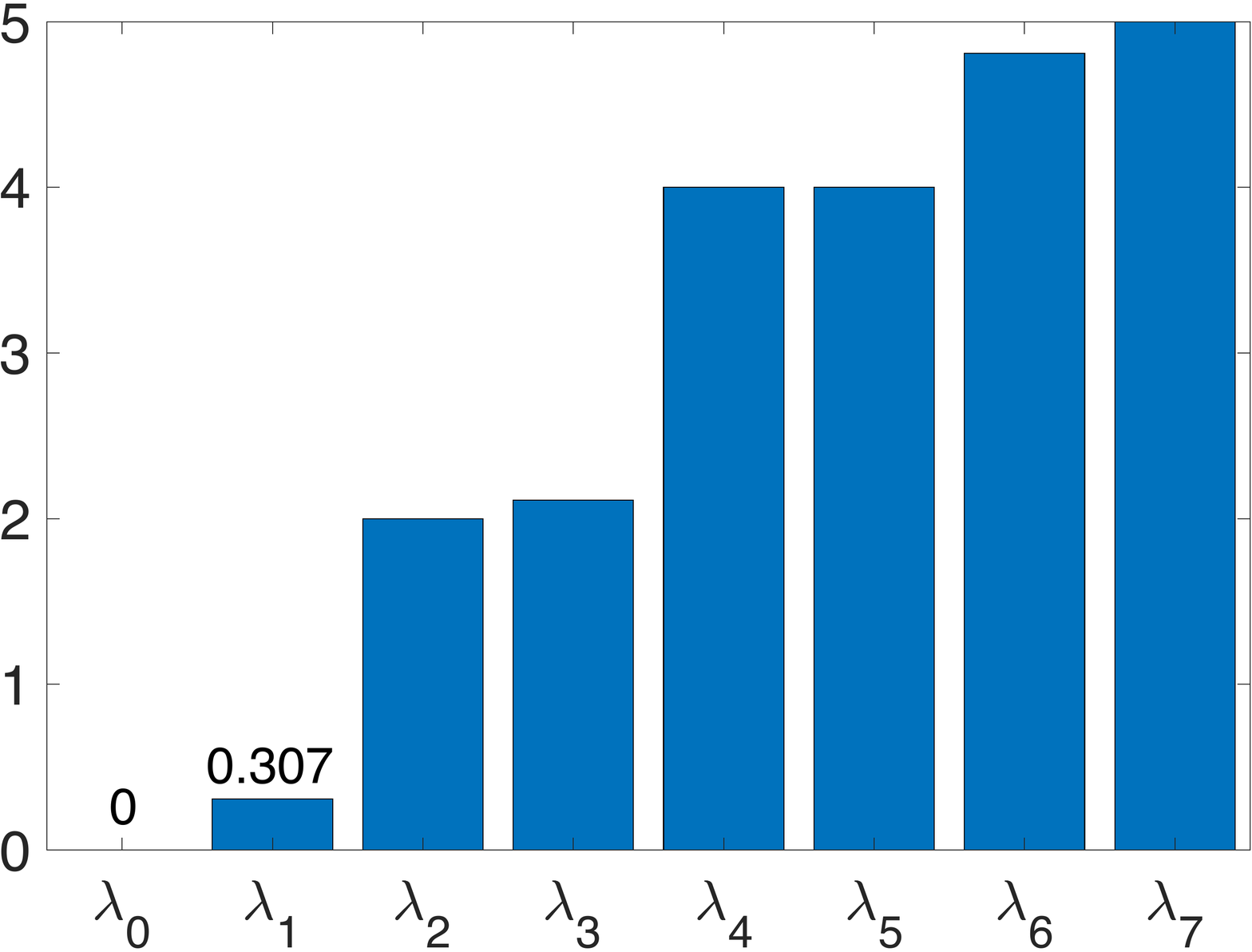}}\label{fig:2smallClustersH}
	\caption{\changed{Illustration of how a transition from a graph with two connected components (a) to a connected graph with a weak connection between two strongly connected components (b-d) \changedB{changes} the eigenvalues.} Eigenvalues of the \changed{respective} graph Laplacians are  shown directly below the graphs. (a) has $2$ separated components without any connection in between (equivalently, zero weight on the inter-component edge). All other weights on the edges are set to $1$. In graphs (b), (c) and (d), we add an inter-cluster edge with increasing weights of $0.01$, $0.1$ and $1$ respectively. As a result the second smallest eigenvalue increases gradually but stays relative small.}
	\label{fig:2smallClusters}
\end{figure*}

For a given undirected graph, $G$, with vertex set \changed{$\mathcal{V}(G) = \{v_1, v_2, \cdots, v_n\}$} and weighted edges, a weighted adjacency matrix, $A$, is defined such that the $(i, j)$-th element of the matrix is the weight of the edge $(v_i, v_j)$ if the edge exists, or $0$ otherwise. The weighted degree matrix, $D$, of the undirected graph is a diagonal matrix with $k$-th diagonal element being the summation of row $k$ (or column $k$) of the adjacency matrix $A$. 
Finally, the Laplacian of the undirected graph is defined as $L = D - A$.

The spectrum of the Laplacian matrix has been \changed{a topic of intense study in} network theory.  
\changedF{The} eigenvalues of the Laplacian are non-negative and the number of connected components of a graph \changed{is equal} to the dimension of the \changed{null space (the multiplicity of the zero eigenvalue)} of the Laplacian matrix.
For a single connected graph, the second smallest eigenvalue \changed{(often referred to as the \textit{Fiedler \changedB{value}} \cite{fiedler1973algebraic})} indicates how connected the graph is.
\changed{Throughout the paper}, we denote $0 = \lambda_0 \le \lambda_1 \le \lambda_2 \le \cdots \le \lambda_{n-1}$ as the eigenvalues of $L$ and $\mathbf{u}_0, \mathbf{u}_1, \cdots, \mathbf{u}_{n-1}$ as corresponding \changed{orthogonal unit eigenvectors (in case of eigenvalues of multiplicities greater than $1$, we select an arbitrary orthogonal basis in the corresponding eigenspace as the eigenvectors)}.


An \changed{eigenvector} of a graph Laplacian, $L$, can be \changed{interpreted} as a real-valued \changedB{distribution} over the vertex set ($\mathbf{u} \in \mathbb{R}^{n}$).
So, the $i$-th element of an eigenvector $\mathbf{u}$ corresponds to the  value \changed{on} vertex $v_i$.
\changed{We refer to these distributions (corresponding to the different eigenvectors of the Laplacian) as \emph{modes} of the graph.}
\changedB{Figures \ref{fig:5smallExampleExample} and \ref{fig:2bigClustersExample} show} \changed{examples} of such modes and corresponding eigenvalues.

\section{\changed{Mode-based Representation of Clusters}} \label{sec:cluster}

\subsection{\changed{Qualitative Description}} \label{sec:mode-based-rep-of-clusters}

\changed{Building on the previous discussion, one can expect that in a single connected graph with weakly connected components, the eigenvectors corresponding \changedF{to the smallest} eigenvalues would correspond to relatively uniform distributions over each strongly connected \changedF{component} (or \changedF{cluster}) in the graph.
This is illustrated in} Figure \ref{fig:2smallClusters}. Starting from 2 isolated clusters \changed{(in which case the zero eigenvalue has multiplicity of two)}, and then adding a weak edge between the two clusters and gradually increasing the weight of the edge, \changed{one of the two zero eigenvalues gradually \changedB{increases} to attain low positive values}.
We consider the eigenvectors corresponding to the smallest eigenvalues to qualify the structure of the graph and identifying strongly-connected components.
\changed{In particular, we identify \emph{jumps} or \emph{gaps} in the ordered set of eigenvalues to determine which eigenvalues are to be considered as \emph{small}.}


In \changed{the example of Figure~\ref{fig:5smallExampleExample}}, there are $2$ big \changedF{clusters (upper left and lower right in the figure)}.
\changed{As a result, a linear combination of the first two modes, $\mathbf{u}_0$ and $\mathbf{u}_1$, will give a distribution that is almost-uniform \changedB{over the} clusters.}
There are also $5$ smaller sub-clusters \changed{in the same example} which can likewise be readily identified from \changed{linear combinations of} $\mathbf{u}_0$ to $\mathbf{u}_4$. 
In Figure \ref{fig:5smallExampleExample}(g), the gap between $\lambda_{4}$ and $\lambda_{5}$ is relatively large \changedF{indicating there are 5 well-defined clusters within the 2 primary clusters} in the graph.
\changedF{Note that a} small cluster circled in Figure \ref{fig:5smallExampleExample}\changedB{(f)} is also captured and its eigenvalue $\lambda_5$ is relatively larger compared to the first $5$ eigenvalues due to its smaller scale.
However, since this is the last well identified cluster, there is an even larger gap between $\lambda_5$ and $\lambda_6$.

In \changed{the example of} Figure \ref{fig:2bigClustersExample}(b), there are $2$ \changed{large} clusters. \changed{This is once again manifested in the fact that there is a relatively large jump/gap (``gap $1$'') in the value of eigenvalues after the first two eigenvalues (\emph{i.e.}, a gap between $\lambda_1$ and $\lambda_2$).}
A second gap \changed{(``gap $2$'' between $\lambda_2$ and $\lambda_3$)} corresponds to a smaller scale cluster circled \changedF{in Figure} \ref{fig:2bigClustersExample}(c).
\changed{An absence of any \changedF{relatively} large gap following the third eigenvalue indicates the absence of any other significant cluster or sub-cluster in the graph.}


\subsection{\changed{Theoretical Foundation}} \label{sec:theory}

The qualitative \changedB{introduction} and general \changedB{observations} about the first $q$ eigenvectors representing the clusters in a graph \changedF{are} formalized by the following \changedF{development} leading to Proposition~\ref{prop:q-conn-dist}.

\subsubsection{Mode-based Representation of \changedF{Clusters}}
We first formalize the notion of \changedF{eigenvectors} representing clusters in Definition~\ref{def:q-conn}, followed by a formal description of strongly-connected clusters with \changedB{weak} inter-cluster connections. The relationship between these two definitions is then established in Proposition~\ref{prop:q-conn-dist}. 
\changedF{We start by describing \changedF{some notation}.}

	
	Suppose a graph $G$ is cut into $q$ disjoint sub-graphs, $G_0,G_1,\cdots, G_{q-1}$, to construct the \emph{$q$-partitioned sub-graph},
	 %
	 $\overline{G}$. 
	 \changedF{That is,} $\mathcal{V}(\overline{G}) = \mathcal{V}(G) = \cup_{j=0}^{q-1} \mathcal{V}(G_j)$ and $\mathcal{E}(\overline{G}) = \cup_{j=0}^{q-1} \mathcal{E}(G_j) \subseteq \mathcal{E}(G)$, or more compactly, $\overline{G} = \cup_{j=0}^{q-1} G_j$,
	 so that $\overline{G}$ has at least $q$ connected components (each of $G_j, j=0,1,\cdots,q-1$).
	 We denote the set of the sub-graphs themselves that constitute the partition by $\widetilde{\mathcal{G}} = \{G_0,G_1,\cdots, G_{q-1}\}$.
	 
	 Let $\overline{N} = \{\overline{\mathbf{u}}_0, \overline{\mathbf{u}}_1, \cdots, \overline{\mathbf{u}}_{q-1}\}$ 
	 be the set of normalized (unit) vectors such that $\overline{\mathbf{u}}_j\in \mathbb{R}^n$ (for $0\leq j \leq q-1$) corresponds to \changedF{a} uniform positive distribution on the vertices of the sub-graph $G_j$ and zero on all other vertices.
	 The vectors in $\overline{N}$ thus form an orthogonal set in the null-space 
	 of the Laplacian, $\overline{L}$, of $\overline{G}$.
	 Let the set of all other orthogonal unit eigenvectors 
	 of $\overline{L}$ be $\overline{N}^\perp = \{\overline{\mathbf{u}}_{q}, \overline{\mathbf{u}}_{q+1}, \cdots, \overline{\mathbf{u}}_{n-1}\}$\changedF{, so that $\text{span}(\overline{N})$ and $\text{span}(\overline{N}^\perp)$ are orthogonal subspaces, and $\overline{N} \cup \overline{N}^\perp$ spans the entire $\mathbb{R}^n$}.
	 %
	 
	 Let the set of orthogonal unit eigenvectors corresponding to the lowest $q$ eigenvalues of the Laplacian, $L$, of $G$ be $N = \{\mathbf{u}_0, \mathbf{u}_1, \cdots, \mathbf{u}_{q-1}\}$ (the first $q$ \emph{modes}).
	 Let the set of all other orthogonal unit eigenvectors 
	 of $L$ be $N^\perp = \{\mathbf{u}_{q}, \mathbf{u}_{q+1}, \cdots, \mathbf{u}_{n-1}\}$.
	 
	 \vspace{0.5em}
	 \begin{definition}\label{def:q-conn}~\changedF{[\emph{$q$-modal Distance from a $q$-partitioned Graph}] }The \emph{$q$-modal distance} of $G$ from the partitioned graph $\overline{G}$ (\emph{i.e.}, the distance of the first $q$ modes of $G$ \changedB{from the uniform distributions over the sub-graphs in $\widetilde{\mathcal{G}} = \{G_0,G_1,\cdots, G_{q-1}\}$}) is then defined as
	 \footnote{\tiny 
	 	The second equality in~\eqref{eq:prop1} holds because of the following: For orthonormal basis $\{\mathbf{u}_0,\mathbf{u}_1,\cdots,\mathbf{u}_{n-1}\}$ and $\{\overline{\mathbf{u}}_0,\overline{\mathbf{u}}_1,\cdots,\overline{\mathbf{u}}_{n-1}\}$, either of which span $\mathbb{R}^n$, we have
	 	\begin{equation*}\begin{array}{rl}
	 	& \displaystyle \sum_{k=0}^{n-1} | \overline{\mathbf{u}}_j^\mathsf{T} \mathbf{u}_k |^2 = \|\overline{\mathbf{u}}_j\|_2^2 = 1 
	 	~~\text{and}~~
	 	\sum_{j=0}^{n-1} | \overline{\mathbf{u}}_j^\mathsf{T} \mathbf{u}_k |^2 = \|\mathbf{u}_k\|_2^2 = 1 \\
	 	\Rightarrow & \displaystyle \sum_{j=0}^{q-1} \sum_{k=0}^{n-1} | \overline{\mathbf{u}}_j^\mathsf{T} \mathbf{u}_k |^2
	 	~=~ q ~=~
	 	\sum_{j=0}^{n-1} \sum_{k=0}^{q-1} | \overline{\mathbf{u}}_j^\mathsf{T} \mathbf{u}_k |^2 \\
	 	\Rightarrow & \displaystyle \sum_{j=0}^{q-1} \sum_{k=0}^{q-1} | \overline{\mathbf{u}}_j^\mathsf{T} \mathbf{u}_k |^2 ~+~ \sum_{j=0}^{q-1} \sum_{k=q}^{n-1} | \overline{\mathbf{u}}_j^\mathsf{T} \mathbf{u}_k |^2 \\
	 	& \qquad\qquad \displaystyle ~=~~ \sum_{j=0}^{q-1} \sum_{k=0}^{q-1} | \overline{\mathbf{u}}_j^\mathsf{T} \mathbf{u}_k |^2 ~+~ \sum_{j=q}^{n-1} \sum_{k=0}^{q-1} | \overline{\mathbf{u}}_j^\mathsf{T} \mathbf{u}_k |^2 \\
	 	\Rightarrow &  \displaystyle \sum_{j=0}^{q-1} \sum_{k=q}^{n-1} | \overline{\mathbf{u}}_j^\mathsf{T} \mathbf{u}_k |^2 ~=~ \sum_{j=q}^{n-1} \sum_{k=0}^{q-1} | \overline{\mathbf{u}}_j^\mathsf{T} \mathbf{u}_k |^2
	 	\end{array}\end{equation*}}
	 
	 {\small \begin{equation}\label{eq:prop1} \begin{array}{l}
	 	\mathrm{modaldist}_{G}(\widetilde{\mathcal{G}}) ~~=~~ \sqrt{ \displaystyle \frac{1}{q(n-q)}  ~ \sum_{j=0}^{q-1} \sum_{k=q}^{n-1} | \overline{\mathbf{u}}_j^\mathsf{T} \mathbf{u}_k |^2 } \\
	 	\qquad\qquad\qquad\qquad =~~ \sqrt{ \displaystyle \frac{1}{q(n-q)}  ~ \sum_{j=q}^{n-1} \sum_{k=0}^{q-1} | \overline{\mathbf{u}}_j^\mathsf{T} \mathbf{u}_k |^2 }
%
	 \end{array}\end{equation}}
	\end{definition}

	 \emph{Key Property of $q$-modal distance:} 
	 It is easy to observe that 
	 the $q$-modal distance is zero
	 if and only if $\text{span}(N) = \text{span}(\overline{N})$ -- that is, the first $q$ eigenvectors, $\mathbf{u}_0, \mathbf{u}_1, \cdots, \mathbf{u}_{q-1}$, of $L$, 
	 are linear combinations of $\overline{\mathbf{u}}_0, \overline{\mathbf{u}}_1, \cdots, \overline{\mathbf{u}}_{q-1}$,
	 thus corresponding to distributions that are uniform over the vertices of each of the sub-graphs $G_0,G_1,\cdots, G_{q-1}$.
	 More generally, $\mathrm{modaldist}_{G}(\widetilde{\mathcal{G}})$ measures how far the first $q$ modes of $G$ are from the distributions which are uniform over the vertices of $G_0,G_1,\cdots, G_{q-1}$.

	\vspace{0.5em}
	\begin{definition}~ [\emph{Relative Outgoing Weight of a Sub-Graph}] \label{def:relout}
	Suppose the weights on the outgoing\footnote{Since $G$ is undirected, \emph{outgoing} and \emph{incoming} are equivalent when referring to edges. We however choose to use the former terminology.} edges from the sub-graph $H$ of $G$ are $w_{1},w_{2},\cdots,w_{m_H}$ (\emph{i.e.}, these are the weights on the edges connecting a vertex in $H$ to a vertex not in $H$). 
	We define the \emph{relative outgoing weight} of $H$ in $G$ as
	\begin{eqnarray*}
		\mathrm{relout}_G(H) & = & \frac{1}{\sqrt{|\mathcal{V}(H)|}} \sum_{h=1}^{m_H} w_h \\
	& = & \frac{1}{\sqrt{|\mathcal{V}(H)|}} \sum_{\{k,l \,|\, v_k \in \mathcal{V}(H), v_l\notin \mathcal{V}(H) \}} \!\!\!\! A_{kl}
	 \end{eqnarray*}
	where $|\mathcal{V}(H)|$ is the number of vertices in $H$ and $A$ is the adjacency matrix of $G$.
	Thus $\mathrm{relout}_G(H)$ is \changedB{smaller} for sub-graphs that are large and are weakly-connected to the rest of the graph.
	\end{definition}

	\vspace{0.5em}
	\begin{definition}~ [\emph{Average Relative Outgoing Weight of a $q$-partitioned Graph}] \label{def:avg-relout}
	Given a $q$-partition of $G$, 
	$\widetilde{\mathcal{G}} = \{G_0,G_1,\cdots, G_{q-1}\}$, its \emph{average relative outgoing weight} is defined as the root mean square of the relative outgoing weights of the sub-graphs:
	{\small \[ \mathrm{avgrelout}_{G}(\widetilde{\mathcal{G}}) ~=~ \sqrt{ \frac{1}{q} \sum_{j=0}^{q-1} \left( \mathrm{relout}_G(G_j) \right)^2 } \]}
	\end{definition}
	
	\vspace{0.5em}
	\begin{proposition} \label{prop:q-conn-dist}
		For any $q$-partition, $\widetilde{\mathcal{G}} = \{G_0,G_1,\cdots, G_{q-1}\}$, of $G$,
	 \[ \begin{array}{l}
	 \mathrm{modaldist}_{G}(\widetilde{\mathcal{G}}) 
		~~\leq~~ 
		\displaystyle 
		\changedC{	\frac{1}{\displaystyle \lambda_q} \sqrt{\displaystyle \frac{2}{n-q}} } 
	 	~\mathrm{avgrelout}_{G}(\widetilde{\mathcal{G}}) 
	 \end{array} \]
	 where 
	 $\lambda_q$ is the $(q+1)$-th eigenvalue (in order of magnitude) of the Laplacian, $L$, of $G$.
	\end{proposition}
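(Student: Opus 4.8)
The plan is to exploit the additive decomposition of the Laplacian of $G$ into the Laplacian of the partitioned graph plus the Laplacian of the severed inter-cluster edges, and then to control the inner products $\overline{\mathbf{u}}_j^\mathsf{T}\mathbf{u}_k$ via an eigenvalue identity. Concretely, write $L = \overline{L} + L_{\mathrm{cut}}$, where $L_{\mathrm{cut}} := L - \overline{L}$ is itself a graph Laplacian, namely the Laplacian of the graph consisting only of the cut (inter-cluster) edges $\mathcal{E}(G)\setminus\mathcal{E}(\overline{G})$. The crucial observation is that for each $j \in \{0,\dots,q-1\}$ the vector $\overline{\mathbf{u}}_j$ lies in the null space of $\overline{L}$, so $\overline{L}\,\overline{\mathbf{u}}_j = 0$ and hence $L\,\overline{\mathbf{u}}_j = L_{\mathrm{cut}}\,\overline{\mathbf{u}}_j$.

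First I would derive the key identity. Using symmetry of $L$ and $L\mathbf{u}_k = \lambda_k \mathbf{u}_k$,
\[
\lambda_k\, \overline{\mathbf{u}}_j^\mathsf{T}\mathbf{u}_k = \overline{\mathbf{u}}_j^\mathsf{T} L \mathbf{u}_k = (L\overline{\mathbf{u}}_j)^\mathsf{T}\mathbf{u}_k = (L_{\mathrm{cut}}\overline{\mathbf{u}}_j)^\mathsf{T}\mathbf{u}_k .
\]
For $k \ge q$ we have $\lambda_k \ge \lambda_q > 0$, so $|\overline{\mathbf{u}}_j^\mathsf{T}\mathbf{u}_k| \le \lambda_q^{-1}|(L_{\mathrm{cut}}\overline{\mathbf{u}}_j)^\mathsf{T}\mathbf{u}_k|$. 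Summing the squares over $k$ and using that $\{\mathbf{u}_k\}_{k=0}^{n-1}$ is an orthonormal basis (Parseval), I would bound
\[
\sum_{k=q}^{n-1} |\overline{\mathbf{u}}_j^\mathsf{T}\mathbf{u}_k|^2 \le \frac{1}{\lambda_q^2}\sum_{k=0}^{n-1} |(L_{\mathrm{cut}}\overline{\mathbf{u}}_j)^\mathsf{T}\mathbf{u}_k|^2 = \frac{1}{\lambda_q^2}\,\|L_{\mathrm{cut}}\overline{\mathbf{u}}_j\|_2^2 .
\]

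The heart of the argument, and the step I expect to be most delicate, is computing $\|L_{\mathrm{cut}}\overline{\mathbf{u}}_j\|_2^2$ and matching it to $\mathrm{relout}_G(G_j)$ with the correct constant. Writing $\overline{\mathbf{u}}_j = |\mathcal{V}(G_j)|^{-1/2}\mathbf{1}_{G_j}$ for the normalized indicator of $G_j$, I would evaluate $L_{\mathrm{cut}}\mathbf{1}_{G_j}$ entrywise: since $L_{\mathrm{cut}}$ carries only cut edges, its $k$-th entry equals the total outgoing weight at $v_k$ when $v_k\in G_j$, and equals the negative of the weight of edges from $v_k$ into $G_j$ when $v_k\notin G_j$. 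Denoting the total weight leaving $G_j$ by $W_j = \sqrt{|\mathcal{V}(G_j)|}\,\mathrm{relout}_G(G_j)$, both the contributions from vertices inside $G_j$ and those from vertices outside $G_j$ are sums of nonnegative terms whose totals each equal $W_j$ (the latter because the graph is undirected, so total weight leaving equals total weight entering). Since for nonnegative numbers the sum of squares is at most the square of the sum, each block contributes at most $W_j^2$, giving $\|L_{\mathrm{cut}}\mathbf{1}_{G_j}\|_2^2 \le 2W_j^2$ and hence $\|L_{\mathrm{cut}}\overline{\mathbf{u}}_j\|_2^2 \le 2\,(\mathrm{relout}_G(G_j))^2$. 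This is exactly where the factor $2$ in the statement originates.

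Finally I would assemble the pieces. Summing over $j$ gives $\sum_{j=0}^{q-1}\sum_{k=q}^{n-1}|\overline{\mathbf{u}}_j^\mathsf{T}\mathbf{u}_k|^2 \le \frac{2}{\lambda_q^2}\sum_{j=0}^{q-1}(\mathrm{relout}_G(G_j))^2 = \frac{2q}{\lambda_q^2}\,(\mathrm{avgrelout}_G(\widetilde{\mathcal{G}}))^2$. Substituting into the definition of $\mathrm{modaldist}_G(\widetilde{\mathcal{G}})$ and simplifying the prefactor $\frac{1}{q(n-q)}$ yields precisely $\frac{1}{\lambda_q}\sqrt{\frac{2}{n-q}}\,\mathrm{avgrelout}_G(\widetilde{\mathcal{G}})$, as claimed. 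The main obstacle is the entrywise bookkeeping in the norm computation and getting the constant right; the rest is the eigenvalue identity followed by a routine Parseval estimate.
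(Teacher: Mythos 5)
Your proof is correct and takes essentially the same route as the paper's: your decomposition $L\,\overline{\mathbf{u}}_j = (L-\overline{L})\,\overline{\mathbf{u}}_j$, the entrywise evaluation of the cut Laplacian acting on the normalized indicator with the sum-of-squares $\le$ square-of-sum step (yielding the factor $2$), and the final assembly over $j$ reproduce exactly the paper's Lemma~\ref{lemma:reloput-2-norm} and its use in the proposition. The only cosmetic difference is that you obtain the spectral bound via the identity $\lambda_k\,\overline{\mathbf{u}}_j^\mathsf{T}\mathbf{u}_k = (L_{\mathrm{cut}}\overline{\mathbf{u}}_j)^\mathsf{T}\mathbf{u}_k$ followed by Parseval, whereas the paper expands $\|L\overline{\mathbf{u}}_j\|_2^2 = \sum_l \lambda_l^2 |\mathbf{u}_l^\mathsf{T}\overline{\mathbf{u}}_j|^2$ directly in the eigenbasis and drops the first $q$ terms---the same computation in a different order.
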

	\begin{proof}
		\changedF{See} Appendix~\ref{appendix:prop-q-conn-dist-proof}.
	\end{proof}
	\vspace{0.5em}
	
	\noindent \emph{Remarks on Proposition~\ref{prop:q-conn-dist}:}
	\begin{enumerate}
	\item[i.] As a consequence of the above proposition, out of all possible $q$-partitions, $\widetilde{\mathcal{G}} = \{G_0,G_1,\cdots, G_{q-1}\}$, of $G$, \changedF{the} one that gives the tightest upper bound on $\mathrm{modaldist}_{G}(\widetilde{\mathcal{G}})$ is the one in which the 
	average relative outgoing weight is the lowest.
	In particular, 
	{\small \begin{equation}
	\min_{\widetilde{\mathcal{G}}} ~ \mathrm{modaldist}_{G}(\widetilde{\mathcal{G}}) 
	~~\leq~~ \changedC{\frac{1}{\displaystyle \lambda_q} \sqrt{\frac{2}{n-q}}}  \displaystyle
		\min_{\widetilde{\mathcal{G}}} ~ \mathrm{avgrelout}_{G}(\widetilde{\mathcal{G}}) 
	\end{equation}}
	\noindent where $\min_{\widetilde{\mathcal{G}}}$ implies minimization over all possible $q$-partitions of $G$.
	
	\item[ii.] Considering the sub-graphs, \changedB{$\{G_j\}_{j=0,\cdots,q-1}$}, as the \emph{clusters},
	this also implies that if there exists a $q$-partitioning of the graph such that the relative outgoing \changedB{weights} of each sub-graph (cluster) are \changedB{small} (\emph{i.e.}, the inter-cluster connections relative to the size of the clusters are weak), then the $q$-modal distance of the graph from that partition will be \changedB{small}\changedF{, \emph{i.e.}, the first $q$ modes of the graph, $G$, will 
	be close to distributions that are uniform over each of the sub-graphs $G_j,j=0,1,\cdots,q-1$}.
	\end{enumerate}

\vspace{0.5em}
\subsubsection{\changed{On the Existence of the ``gap''}} 
\changed{\begin{definition}~[\emph{$\alpha$-realizable $q$-partition}] \label{def:alpha-realizable}
	A graph $G$ is said to have \changedB{an} \emph{$\alpha$-realizable $q$-partition} 
	if there exists a $q$-partition $\widetilde{\mathcal{G}} = \{G_0,G_1,\cdots, G_{q-1}\}$ such that 
	\begin{equation} \textrm{avgrelout}_{G}(\widetilde{\mathcal{G}}) ~\leq~ \frac{\alpha \lambda_q}{\changedC{\sqrt{2q}}} \label{eq:def-alpha-relizable} \end{equation}
	A $q$-partition that satisfies the above condition is called \changedB{an} \emph{$\alpha$-realizable $q$-partition}.
	
	A graph that admits a well-defined $q$-partition (\emph{i.e.}, a $q$-partition with weak inter-cluster connections, and hence low $\textrm{avgrelout}_{G}(\widetilde{\mathcal{G}})$) will thus have \changedB{an} $\alpha$-realizable $q$-partition for a \changedB{small} value of $\alpha$. In particular, we say a graph is $q$-partitionable only if \changedC{it admits} an $\alpha$-realizable $q$-partition for $\alpha\in [0,1)$. \changedB{Also, note that if $\widetilde{\mathcal{G}}$ is an $\alpha$-realizable $q$-partition, then it implies that it is also an $\alpha'$-realizable $q$-partition for any $\alpha' \geq \alpha$. The lowest value of $\alpha$ for which $G$ is $\alpha$-realizable (\emph{i.e.}, $\alpha^* = \frac{\changedC{\sqrt{2q}}}{\lambda_q} \arg\!\min_{\widetilde{\mathcal{G}}} (\textrm{avgrelout}_{G}(\widetilde{\mathcal{G}}))$) thus gives a topological characterization of the graph\changedF{. A} low value implies that the graph is well-clustered and allows a good $q$-way partitioning.}
\end{definition}
\vspace{0.5em}}

\vspace{0.5em}
Suppose we consider the eigenvectors corresponding to the first $q$ eigenvalues of the Laplacian, $L$, of the graph $G$.
Assuming that in a $q$-partition the $q$ strongly-connected clusters in the graph are connected to each other by weak inter-cluster connections, \changedF{then }based on the \changedF{development} in Section~\ref{sec:mode-based-rep-of-clusters} it is expected that $\lambda_0,\lambda_1,\cdots,\lambda_{q-1}$ are small (see Figure~\ref{fig:5smallExampleExample}(g)).
This observation is formalized by the following proposition.

\vspace{0.5em}
\begin{proposition} \label{prop:lambda-q-1-upper-bound}
	If a graph $G$ admits an $\alpha$-realizable $q$-partition, $\widetilde{\mathcal{G}} = \{G_0,G_1,\cdots, G_{q-1}\}$
	(\emph{i.e.}, $\textrm{avgrelout}_{G}(\widetilde{\mathcal{G}}) \leq \frac{\alpha\lambda_q}{\changedC{\sqrt{2q}}}$), with $\alpha\in [0,1)$, then
	\changedC{\begin{eqnarray} \label{eq:lambda-q-1-simplified-upper-bound}
	\frac{\lambda_{q-1}}{\lambda_q} & \leq & \frac{\alpha}{\sqrt{1-\alpha^2}}
%
	\end{eqnarray}}
\end{proposition}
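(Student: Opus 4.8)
The plan is to extract from Proposition~\ref{prop:q-conn-dist} a \emph{weighted} version of the estimate underlying its proof and then isolate the single eigenvalue $\lambda_{q-1}$ by restricting a nonnegative double sum to one slice. Write $b_{jk} = \overline{\mathbf{u}}_j^\mathsf{T}\mathbf{u}_k$ for the inner products between the two orthonormal bases, so that $\overline{\mathbf{u}}_j = \sum_{k=0}^{n-1} b_{jk}\mathbf{u}_k$ and $\mathbf{u}_k = \sum_{j=0}^{n-1} b_{jk}\overline{\mathbf{u}}_j$. First I would record the key inequality that already appears inside the proof of Proposition~\ref{prop:q-conn-dist}: since each $\overline{\mathbf{u}}_j\in\overline{N}$ lies in the null space of $\overline{L}$, we have $L\overline{\mathbf{u}}_j = (L-\overline{L})\overline{\mathbf{u}}_j$, and an edge-by-edge computation on the cut (inter-cluster) edges gives $\|L\overline{\mathbf{u}}_j\|_2^2\le 2\,(\mathrm{relout}_G(G_j))^2$. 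Expanding $L\overline{\mathbf{u}}_j = \sum_k \lambda_k b_{jk}\mathbf{u}_k$ and summing over $j$ then yields the master estimate
\[
\sum_{j=0}^{q-1}\sum_{k=0}^{n-1} \lambda_k^2\, b_{jk}^2 = \sum_{j=0}^{q-1}\|L\overline{\mathbf{u}}_j\|_2^2 \le 2\sum_{j=0}^{q-1}\big(\mathrm{relout}_G(G_j)\big)^2 = 2q\,\big(\mathrm{avgrelout}_G(\widetilde{\mathcal{G}})\big)^2 .
\]

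Next I would invoke the $\alpha$-realizability hypothesis~\eqref{eq:def-alpha-relizable}, $\mathrm{avgrelout}_G(\widetilde{\mathcal{G}})\le\alpha\lambda_q/\sqrt{2q}$, to turn the right-hand side into $\alpha^2\lambda_q^2$, so that $\sum_{j=0}^{q-1}\sum_{k=0}^{n-1}\lambda_k^2 b_{jk}^2\le\alpha^2\lambda_q^2$. The crucial move is that every summand $\lambda_k^2 b_{jk}^2$ is nonnegative, so I may discard all terms except the slice $k=q-1$ and keep a valid inequality: $\lambda_{q-1}^2\sum_{j=0}^{q-1} b_{j,q-1}^2\le\alpha^2\lambda_q^2$. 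This is exactly where $\lambda_{q-1}$ is pulled out. Note that Proposition~\ref{prop:q-conn-dist} relaxes the same master estimate in the complementary direction (keeping $k\ge q$ and bounding $\lambda_k\ge\lambda_q$), which is why its modal-distance bound alone does not suffice here and the $\lambda_k^2$-weighted form is what is actually needed.

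It then remains to lower-bound the projection weight $\sum_{j=0}^{q-1} b_{j,q-1}^2$, the squared norm of the projection of the unit vector $\mathbf{u}_{q-1}$ onto $\mathrm{span}(\overline{N})$. Since $\sum_{j=0}^{n-1} b_{j,q-1}^2 = \|\mathbf{u}_{q-1}\|_2^2 = 1$, this equals $1-\sum_{j=q}^{n-1} b_{j,q-1}^2$, and the subtracted ``leakage'' term is a single slice ($k=q-1$) of the sum $\sum_{j=q}^{n-1}\sum_{k=0}^{q-1} b_{jk}^2$, which by the footnote identity in Definition~\ref{def:q-conn} equals $\sum_{j=0}^{q-1}\sum_{k=q}^{n-1} b_{jk}^2 = q(n-q)\,(\mathrm{modaldist}_G(\widetilde{\mathcal{G}}))^2\le\alpha^2$ (the last step using Proposition~\ref{prop:q-conn-dist} together with $\alpha$-realizability). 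Hence $\sum_{j=0}^{q-1} b_{j,q-1}^2\ge 1-\alpha^2$, and substituting gives $\lambda_{q-1}^2(1-\alpha^2)\le\alpha^2\lambda_q^2$. Because $\alpha\in[0,1)$ we have $1-\alpha^2>0$, so dividing and taking square roots yields $\lambda_{q-1}/\lambda_q\le\alpha/\sqrt{1-\alpha^2}$, as claimed.

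I expect the main obstacle to be justifying the master estimate $\|L\overline{\mathbf{u}}_j\|_2^2\le 2(\mathrm{relout}_G(G_j))^2$ with the correct constant $2$: this requires computing $L\overline{\mathbf{u}}_j$ on the cut and applying the elementary bound $\sum_i s_i^2\le(\sum_i s_i)^2$ to nonnegative cut-weight sums twice, once for the endpoints inside $G_j$ and once for those outside, each contributing $(\mathrm{relout}_G(G_j))^2$. Since this is precisely the computation carried out in the proof of Proposition~\ref{prop:q-conn-dist}, the genuinely new content is the short ``single-slice plus complementary-leakage'' manipulation above, whose only subtlety is recognizing that one must retain the $\lambda_k^2$-weighted estimate rather than its relaxed modal-distance corollary.
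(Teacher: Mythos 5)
Your proof is correct and takes essentially the same route as the paper's: the identical $\lambda_k^2$-weighted master estimate built from Lemma~\ref{lemma:reloput-2-norm} (via $L\overline{\mathbf{u}}_j=(L-\overline{L})\overline{\mathbf{u}}_j$ and eigen-expansion), the identical use of Proposition~\ref{prop:q-conn-dist} plus the footnote identity of Definition~\ref{def:q-conn} to bound the projection leakage by $\alpha^2$, and the identical final algebra giving $\lambda_{q-1}^2(1-\alpha^2)\leq\alpha^2\lambda_q^2$. The only, purely cosmetic, difference is how $\lambda_{q-1}$ is extracted: you restrict the nonnegative double sum to the single slice $k=q-1$, whereas the paper keeps the whole block $k\in\{0,\dots,q-1\}$ and applies max/min coefficient bounds before dropping the $\lambda_q^2$ term.
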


\begin{proof}
	\changedF{See} Appendix~\ref{appendix:prop-lambda-q-1-upper-bound}.
\end{proof}
\vspace{0.5em}

Assuming $G$ admits \changedB{an} $\alpha$-realizable $q$-partition for a \changedB{small} value of $\alpha (< 1)$,
the key insight from the above proposition is that 
%
%
 $\lambda_{q-1}$ will be small \changedC{compared to $\lambda_q$,}
%
giving the formal underpinning for the 
\changedF{eigenvalue gaps as discussed in Section~\ref{sec:mode-based-rep-of-clusters} and illustrated in Figure ~\ref{fig:5smallExampleExample}(g)}.

\changed{
	\vspace{0.5em}
	\subsubsection{\changedB{Sensitivity of Modal Shapes to Edge Weights}} \label{sec:sensitivity}
	
	
	As a result of the above, if $\textrm{avgrelout}_{G}(\widetilde{\mathcal{G}})$ \changedB{and \changedC{hence} $\alpha$ are} small \changedC{(close to zero)}, the vector space spanned by $\mathbf{u}_0,\mathbf{u}_1,\cdots,\mathbf{u}_{q-1}$ is very close to being a degenerate eigenspace (nullspace) of $L$ \changedB{with eigenvalues bounded above by $\frac{\displaystyle \changedC{\alpha \lambda_q}}{\displaystyle \sqrt{1-\alpha^2}}$}. 
	Furthermore, Proposition~\ref{prop:q-conn-dist} indicates that a \changedB{small} $\textrm{avgrelout}_{G}(\widetilde{\mathcal{G}})$ \changedC{results in a small $\textrm{modaldist}_{G}(\widetilde{\mathcal{G}})$, which} implies that $\mathbf{u}_0,\mathbf{u}_1,\cdots,\mathbf{u}_{q-1}$ are distributions that are close \changedF{to linear} combinations of the distributions that are uniform over each $G_j$ (the distributions corresponding to $\overline{\mathbf{u}}_0,\overline{\mathbf{u}}_1,\cdots,\overline{\mathbf{u}}_{q-1}$).
	\changedB{However, the proposition does not give any indication}
	\changedF{of the}
	exact nature of the linear combination of $\overline{\mathbf{u}}_0,\overline{\mathbf{u}}_1,\cdots,\overline{\mathbf{u}}_{q-1}$ that is closest to $\mathbf{u}_0,\mathbf{u}_1,\cdots,\mathbf{u}_{q-1}$ 
	\changedB{which generally depends on the \changedF{specific} value of the weights on the inter- and intra-cluster edges.}
	
}

\section{\changedF{Mode Gradients for Barrier Representation}
} \label{sec:barrier}

\changed{As described in the previous section, the modes corresponding to the smallest $q$ eigenvalues \changedB{prior to} a relatively large jump or gap in the ordered set of eigenvalues form the basis \changedF{for} identifying clusters in a graph. \changedF{While it may be challenging to determine} the value of $q$  just from the ordered set of eigenvalues (in fact in graphs without well-defined clusters there may not even exist a well-defined gap in the ordered set of eigenvalues), for the purpose of discussion in this section we will assume that the value of $q$ and the \changedB{first} $q$ eigenvectors of $L$ are given.}

\subsection{\changed{Gradient of Modes and Resistance}}

\changed{\changedB{As a consequence of Proposition~\ref{prop:q-conn-dist},} each of the first $q$} eigenvectors correspond \changed{to modes with} relatively uniform distributions over each strongly connected cluster \changed{of the graph} \changedB{with a well-defined $q$-partition (small average relative outgoing weight)}.
\changedB{Given that we have $q$ linearly independent modes, each with almost-uniform values over the clusters,} the uniform value on each cluster differs from \changed{the uniform \changedB{values} on its neighboring clusters \changedF{in at least one} of the modes. Consequently, the difference of values across edges that connect clusters will be non-zero in those modes (we call these the inter-cluster \emph{barrier} edges), while inside each cluster differences across edges will be very close to zero because of the intra-cluster uniformity.}

\subsubsection{Gradient of $\mathbf{u}$}

\begin{figure}[h]
\centering
\includegraphics[width=0.95\columnwidth]{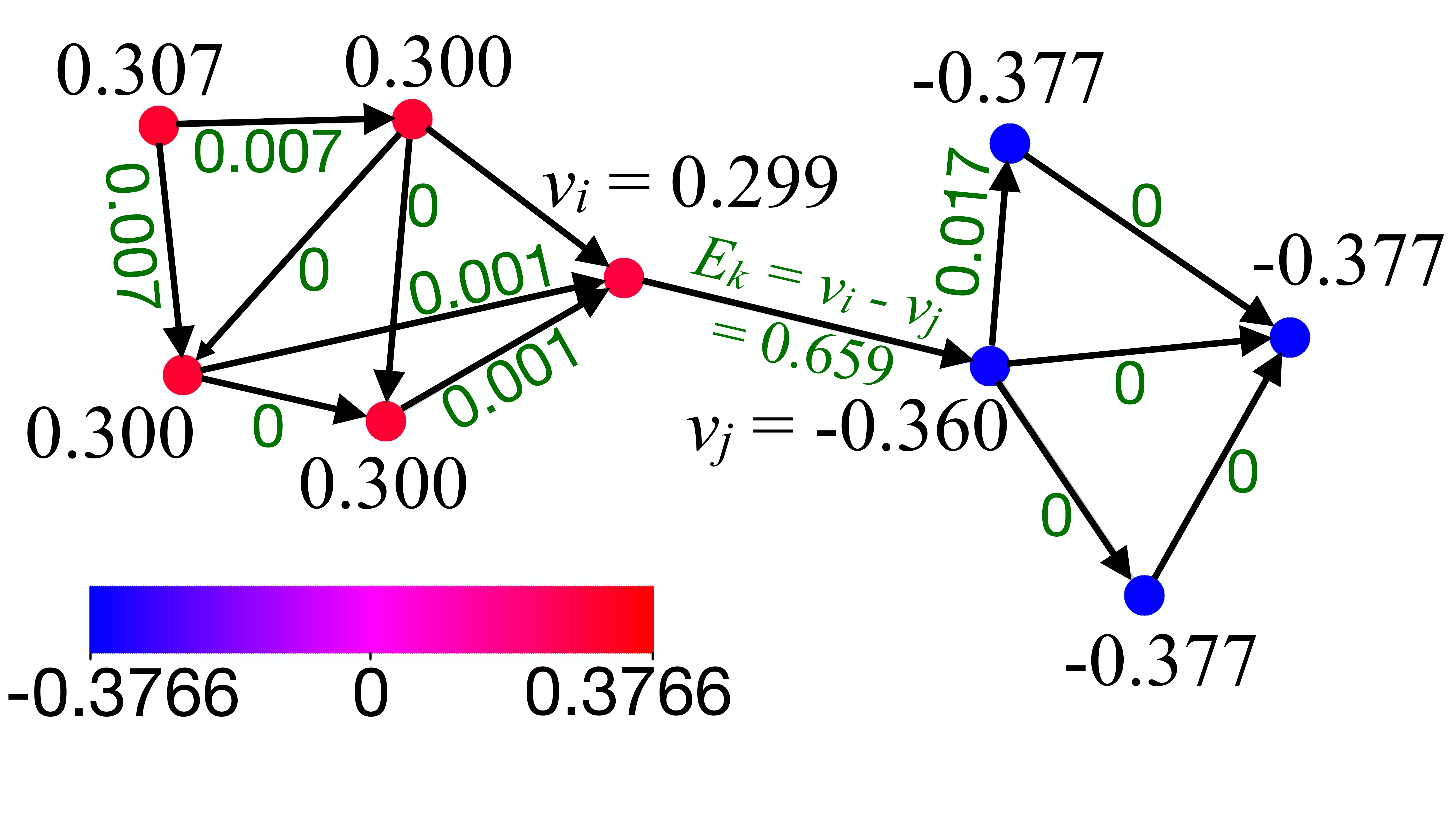}
\vspace{-0.1in} \caption{\changed{\changedF{Gradient example: Colors} and numbers on the vertices \changedF{represent the} distribution corresponding to} the eigenvector $\mathbf{u}_{1}$ of the graph Laplacian. The direction of each edge is arbitrarily \changed{chosen}. 
	The numbers on the edges are gradients along the edges and represent elements of the vector $\mathbf{d}_1 = B^\mathsf{T} \mathbf{u}_1$.}
\label{fig:inceExample}
\end{figure}

\changed{We denote the edge set of the graph as $\mathcal{E}(G) = \{e_1,e_2,\cdots,e_m\}$ and}
give every edge an arbitrary direction. \changed{For a \changedB{given} distribution $\mathbf{u}\in\mathbb{R}^n$ on the vertices,} the difference \changed{of values on two vertices across an edge $e_k$ connecting vertices $v_i,v_j \in \changedB{\mathcal{V}(G)}$ 
is interpreted as} the ``\emph{gradient}'' \changed{of the distribution on the graph}.
\changedF{The gradient yields} 
a real-valued distribution over the edge set \changed{and can thus be written as a vector $\mathbf{d} \in \mathbb{R}^{m}$ in which the} $k$-th element corresponds to the value on edge \changed{$e_k = (v_i, v_j) \in \mathcal{E}(G)$ connecting vertices $v_i,v_j \in \mathcal{V}(G)$ and can be expressed as $d_k = u_j - u_i$.}
\changed{More concisely, \changedF{the} gradient can be written as 
\begin{equation}
\mathbf{d} = 
B^\mathsf{T} \mathbf{u}
\end{equation}
where $\mathbf{u}\in\mathbb{R}^n$ is \changedF{a column vector}, and $B \in \mathbb{R}^{n\times m}$ is the incidence matrix in which the $(i,k)$-th element is defined as
\begin{equation} \label{eq:incidence}
	B_{ik} = \begin{cases}
		1, &\text{if} ~~\exists ~e_{k} = (v_{i}, v_{j}) \in \mathcal{E}(G).\\
		-1, &\text{if} ~~\exists ~e_{k} = (v_{j}, v_{i}) \in \mathcal{E}(G).\\
		0, &\text{otherwise.}
	\end{cases}
\end{equation}

This is illustrated in}
the example of Figure \ref{fig:inceExample} \changed{which shows a modal distribution and the gradient values \changedF{labeled} on each edge.} The inter-cluster edge has relatively large gradient value ($0.659$) while the \changedF{gradients} over the intra-cluster edges are much smaller (order of $10^{-3}$).

%
%
%

Because of the arbitrary direction \changed{assigned} to edges, a gradient can be positive or negative. 
\changed{Thus, to identify the edges that separate the clusters, we square the individual elements of the gradient vector to \changedF{obtain} vector $\mathbf{r} = [d_1^2, d_2^2, \cdots, d_m^2]^\mathsf{T}$, which is a positive-valued distribution over the edges. We call this the \emph{resistance vector} or the \emph{resistance distribution over the edges}, and it is easy to observe that it can be compactly written as
\begin{equation}
\mathbf{r} = \text{diag}(\mathbf{d} \mathbf{d}^\mathsf{T}) = \text{diag}(B^\mathsf{T} \mathbf{u} \mathbf{u}^\mathsf{T} B)
\end{equation}
Figure \ref{fig:inceExampleEdge} shows an example of \changedB{the} resistance vector for $\mathbf{u}_{1}$.
The value of the inter-cluster edge is significantly higher and readily identified.
}

\begin{figure}
	\centering
	\includegraphics[width=0.95\columnwidth]{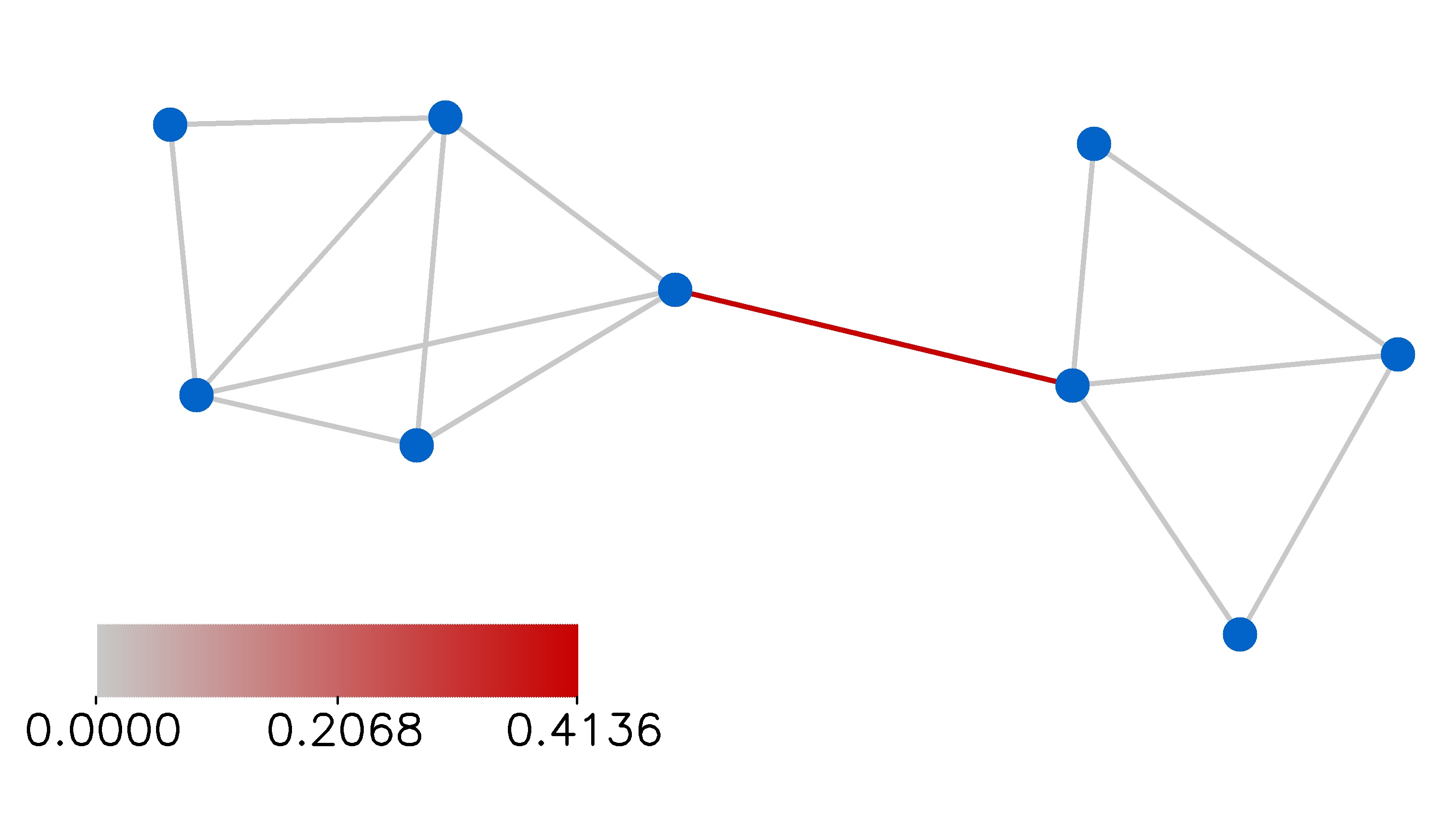}
	\vspace{-0.1in} \caption{\changed{The resistance vector} $\mathbf{r}_{1}$ 
		\changed{as a distribution over the edges. This resistance distribution corresponds to the gradient} $\mathbf{d}_1$ \changed{shown} in Figure \ref{fig:inceExample} and is computed as $\mathbf{r}_1 = \text{diag}(\mathbf{d}_1 \mathbf{d}_1^\mathsf{T})$. Color on an edge indicates the resistance value on the edge.}\vspace{-0.1in} 
	\label{fig:inceExampleEdge}
\end{figure}

\subsubsection{\changed{Aggregated Resistance from Gradient of First $q$ Modes}}

Define \changed{the \emph{resistance vector} corresponding to the $l$-th mode, $\mathbf{u}_l$, as
\begin{equation}
\mathbf{r}_l = \text{diag}(B^\mathsf{T} \mathbf{u}_l \mathbf{u}_l^\mathsf{T} B)
\end{equation}
In order to determine all the barrier edges from the first $q$ modes we aggregate the resistances from the vectors $\mathbf{r}_l, l=0,2,\cdots,q-1$ to define the \emph{$q$-aggregated resistance as}
\begin{eqnarray}
\breve{\mathbf{r}} & = & \sum_{l=0}^{q-1} \mathbf{r}_{l} \nonumber \\
  & = & \sum_{l=0}^{q-1} \text{diag}(B^\mathsf{T} \mathbf{u}_l \mathbf{u}_l^\mathsf{T} B) \nonumber \\
  & = & \text{diag}\left( B^\mathsf{T} \left( \sum_{l=0}^{q-1} \mathbf{u}_l \mathbf{u}_l^\mathsf{T} \right) B \right) \nonumber \\
  & = & \text{diag}\left( B^\mathsf{T} \left( \breve{U} \breve{U}^\mathsf{T} \right) B \right) \label{eq:tilde-r}
\label{eq:directEqn}
\end{eqnarray}}
where 
\begin{equation}
\breve{U} = [\mathbf{u}_{0}~\mathbf{u}_{1}~\cdots~\mathbf{u}_{q-1}]
\label{eq:tildeU}
\end{equation}
\changed{is the $n\times q$ matrix with the columns corresponding to the first $q$ eigenvectors of $L$.}

\begin{figure}
\centering
\includegraphics[width=0.8\columnwidth]{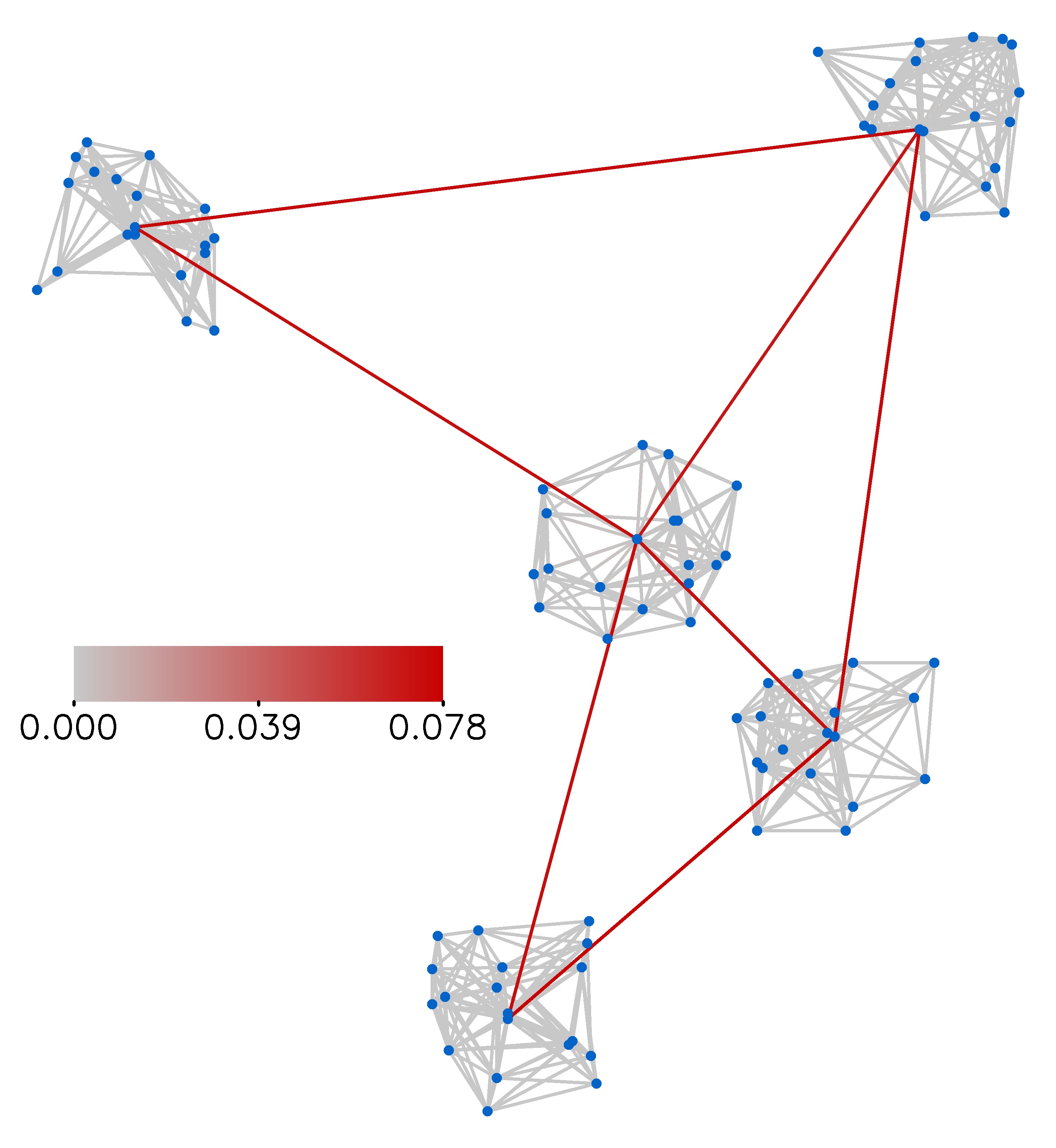}
\vspace{-0.1in} 
\caption{
\changedC{An example of a graph, $G$, with $5$ well-defined clusters:}
\changedB{\changedC{All the edges of $G$ have the same weight of $1$.} With the $5$-partitioned graph, $\widetilde{\mathcal{G}}$, being one with the five well-defined clusters in this graph disconnected, for this example, $\mathrm{modaldist}_G(\widetilde{\mathcal{G}}) = 0.0035$ and $\mathrm{avgrelout}_G(\widetilde{\mathcal{G}}) = 0.6689$. For this graph \changedC{$n=94$ and} $\lambda_5 = 4.0456$,
thus satisfying the inequality of Proposition~\ref{prop:q-conn-dist}.
\changedF{From Definition~\ref{def:alpha-realizable}, it is easy to check that this 
	$\widetilde{\mathcal{G}}$ is an $\alpha$-realizable $5$-partition for any $\alpha \geq 0.523$. 
}
The $5$-aggregated resistance vector, $\breve{\mathbf{r}}$, \changed{is also visualized in this figure as a distribution} over edges -- \changed{darker} shade of red indicates a higher value.
}}\label{fig:multiLinkEdgeValues} \vspace{-0.1in} 
\end{figure}


Figure \ref{fig:multiLinkEdgeValues} shows an example of  $\breve{\mathbf{r}}$.
The inter-cluster edges have \changedF{noticeably} higher values than \changed{intra}-cluster edges.
\changed{Once these values are computed,} we can apply a threshold to values of the vector $\breve{\mathbf{r}}$ 
\changed{to identify the inter-cluster/}bottleneck edges.

\changedB{Due to Definition~\ref{def:q-conn} and Proposition~\ref{prop:q-conn-dist}, if a graph admits a $q$-partition with low $\mathrm{avgrelout}$, the columns of $\breve{U}$ are close to a linear combination of the columns of $\breve{\overline{U}} = [\overline{\mathbf{u}}_{0}~\overline{\mathbf{u}}_{1}~\cdots~\overline{\mathbf{u}}_{q-1}]$. Since the columns are all orthonormal, this implies $\breve{U} \simeq \breve{\overline{U}} R$ for some $q\times q$ orthogonal matrix $R$. Thus we observe that $\breve{U} \breve{U}^\mathsf{T} \simeq \breve{\overline{U}} R R^\mathsf{T} \breve{\overline{U}}^\mathsf{T} = \breve{\overline{U}} \breve{\overline{U}}^\mathsf{T}$ is independent of the exact nature of the linear combination. This shows that the resistance vector, $\breve{\mathbf{r}}$, is independent of the exact nature of the linear combination of the distributions that are uniform over the subgraphs $\{G_j\}_{j=0,1,\cdots,q-1}$ that constitute the low-eigenvalue modes (see Section~\ref{sec:sensitivity}). This general observation is formalized in the following proposition.}


\vspace{0.5em}
\begin{proposition}~[\changedF{\emph{Robustness of Resistance Vector}}] \label{prop:r-robustness}
	\changedB{Consider graphs $G$ and $G'$ with the same topology (\emph{i.e.}, same vertex and edge sets, but possibly different edge weights) such that both permit \changedC{the same} $q$-partition, $\widetilde{\mathcal{G}}$, \changedC{that is $\alpha$-realizable in both the graphs} for some $\alpha\in [0,1)$. Let the resistance vectors \changedC{for the two graphs} be $\breve{\mathbf{r}}$ and $\breve{\mathbf{r}}'$ respectively. Then 
	\begin{equation}
	 \| \breve{\mathbf{r}} - \breve{\mathbf{r}}' \|_2 ~~\leq~~ 2 \alpha \sqrt{2 d_\text{max}}
	\end{equation}
	where $d_\text{max}$ is the maximum degree over the vertices of \changedC{either of the graphs}.
}
\end{proposition}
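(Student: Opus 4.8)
The plan is to reduce everything to a perturbation bound on rank-$q$ spectral projectors and then to control how the \emph{resistance map} $\Delta \mapsto \text{diag}(B^\mathsf{T}\Delta B)$ amplifies such a perturbation. Since $G$ and $G'$ share the same topology, they share the same incidence matrix $B$ of \eqref{eq:incidence}; writing $P = \breve{U}\breve{U}^\mathsf{T}$ and $P' = \breve{U}'\breve{U}'^\mathsf{T}$ for the two rank-$q$ modal projectors (cf. \eqref{eq:tilde-r}--\eqref{eq:tildeU}), the two resistance vectors are $\breve{\mathbf{r}} = \text{diag}(B^\mathsf{T} P B)$ and $\breve{\mathbf{r}}' = \text{diag}(B^\mathsf{T} P' B)$, so that $\breve{\mathbf{r}} - \breve{\mathbf{r}}' = \text{diag}(B^\mathsf{T}(P - P')B)$. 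The crucial observation is that both projectors can be measured against a \emph{common} reference: the projector $\overline{P} = \breve{\overline{U}}\,\breve{\overline{U}}^\mathsf{T}$ onto $\text{span}(\overline{N})$, the span of the uniform cluster distributions, depends only on the shared partition $\widetilde{\mathcal{G}}$ and is therefore identical for $G$ and $G'$.

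First I would bound the subspace perturbation $\|P - \overline{P}\|_F$ (and identically $\|P' - \overline{P}\|_F$). Using the principal-angle identity for two equidimensional orthogonal projectors, $\|P - \overline{P}\|_F^2 = 2\sum_{i}\sin^2\theta_i = 2\,\|(I-P)\breve{\overline{U}}\|_F^2 = 2\sum_{j=0}^{q-1}\sum_{k=q}^{n-1}|\overline{\mathbf{u}}_j^\mathsf{T}\mathbf{u}_k|^2$, which equals $2q(n-q)\big(\text{modaldist}_G(\widetilde{\mathcal{G}})\big)^2$ by Definition~\ref{def:q-conn}. Substituting Proposition~\ref{prop:q-conn-dist} and then the $\alpha$-realizability bound $\text{avgrelout}_G(\widetilde{\mathcal{G}}) \le \alpha\lambda_q/\sqrt{2q}$ of Definition~\ref{def:alpha-realizable}, the $\lambda_q$, $q$ and $(n-q)$ factors cancel cleanly and leave $\text{modaldist}_G(\widetilde{\mathcal{G}}) \le \alpha/\sqrt{q(n-q)}$, hence $\|P - \overline{P}\|_F \le \sqrt{2}\,\alpha$. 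The same argument applies to $G'$.

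Next I would control the resistance map. Each column $\mathbf{b}_k$ of $B$ equals $\pm(\mathbf{e}_{i}-\mathbf{e}_{j})$ for the endpoints of edge $e_k$, so $\|\mathbf{b}_k\|_2^2 = 2$ and the $k$-th entry of the difference is the quadratic form $(\breve{\mathbf{r}} - \breve{\mathbf{r}}')_k = \mathbf{b}_k^\mathsf{T}(P-P')\mathbf{b}_k = \Delta_{ii} + \Delta_{jj} - 2\Delta_{ij}$ with $\Delta = P-P'$. Summing the squares over edges, I would split $\Delta$ into its diagonal and off-diagonal parts, bound the diagonal contribution by $\sum_i \deg(v_i)\Delta_{ii}^2 \le d_\text{max}\|\Delta\|_F^2$ (each vertex lies on at most $d_\text{max}$ edges) and the off-diagonal contribution by the edge-restricted Frobenius mass, then recombine by Cauchy--Schwarz. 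This yields a bound of the form $\|\breve{\mathbf{r}} - \breve{\mathbf{r}}'\|_2 \le \sqrt{2 d_\text{max}}\,\|\Delta\|_F$; the per-graph route through $\overline{P}$ is even cleaner, since the reference resistance $\overline{\mathbf{r}} = \text{diag}(B^\mathsf{T}\overline{P}B)$ vanishes on every intra-cluster edge (there $\overline{P}_{ii}=\overline{P}_{jj}=\overline{P}_{ij}=1/|\mathcal{V}(G_j)|$), removing the spurious over-counting. Assembling, either a direct application with $\|P-P'\|_F \le 2\sqrt{2}\,\alpha$ or the per-graph estimate $\|\breve{\mathbf{r}} - \overline{\mathbf{r}}\|_2 \le \alpha\sqrt{2 d_\text{max}}$ followed by the triangle inequality delivers $\|\breve{\mathbf{r}} - \breve{\mathbf{r}}'\|_2 \le 2\alpha\sqrt{2 d_\text{max}}$.

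The main obstacle is the constant in the resistance-map estimate: a naive Cauchy--Schwarz per edge loses a factor of $\sqrt{m}$, and even the vertex-wise bookkeeping above tends to produce $\sqrt{2(d_\text{max}+1)}$ rather than the sharp $\sqrt{2 d_\text{max}}$. Closing this gap is exactly where the structure of $\overline{P}$ must be used --- that it is block-constant on the clusters, so that intra-cluster gradients are identically zero and only the inter-cluster edges carry resistance --- allowing the degree counting to be run against the true support of the perturbation rather than against all $m$ edges. Everything else (the projector identity, the cancellation of $\lambda_q$, and the final triangle inequality) is routine once this sharp estimate is in place.
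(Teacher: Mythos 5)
Your strategy is structurally identical to the paper's proof: compare each modal projector to the common reference $\overline{P}=\breve{\overline{U}}\breve{\overline{U}}^\mathsf{T}$, obtain $\|P-\overline{P}\|_F\le\sqrt{2}\,\alpha$ and $\|P'-\overline{P}\|_F\le\sqrt{2}\,\alpha$ from Definition~\ref{def:q-conn}, Proposition~\ref{prop:q-conn-dist} and Definition~\ref{def:alpha-realizable} (your principal-angle identity is a repackaging of the paper's expansion of $\breve{U}\breve{U}^\mathsf{T}$ in the $\{\overline{\mathbf{u}}_j\}$ basis, giving the paper's \eqref{eq:frob-norm-1}), and then push the perturbation through the map $\Delta\mapsto\mathrm{diag}(B^\mathsf{T}\Delta B)$ with degree bookkeeping. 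The first half is correct. The genuine gap is the one you flag yourself: the constant of the resistance map. The bookkeeping you sketch --- diagonal entries weighted by $\deg(v_i)$, off-diagonal entries twice per edge --- gives, via the per-edge inequality $(\Delta_{ii}+\Delta_{jj}-2\Delta_{ij})^2\le 4(\Delta_{ii}^2+\Delta_{jj}^2+2\Delta_{ij}^2)$, the estimate $\|\mathrm{diag}(B^\mathsf{T}\Delta B)\|_2\le 2\sqrt{d_\text{max}}\,\|\Delta\|_F$ (this is exactly what the paper derives), not your claimed $\sqrt{2d_\text{max}}\,\|\Delta\|_F$. Worse, even your claimed constant does not assemble into the stated bound: the direct route gives $\sqrt{2d_\text{max}}\cdot 2\sqrt{2}\,\alpha = 4\sqrt{d_\text{max}}\,\alpha$, and the per-graph route gives $2\cdot\sqrt{2d_\text{max}}\cdot\sqrt{2}\,\alpha = 4\sqrt{d_\text{max}}\,\alpha$ as well, both exceeding $2\alpha\sqrt{2d_\text{max}} = 2\sqrt{2}\sqrt{d_\text{max}}\,\alpha$ by a factor of $\sqrt{2}$; your intermediate claim $\|\breve{\mathbf{r}}-\overline{\mathbf{r}}\|_2\le\alpha\sqrt{2d_\text{max}}$ would require the map constant to be $\sqrt{d_\text{max}}$, yet another $\sqrt{2}$ below what you assert. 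Finally, the mechanism you propose for sharpening --- block-constancy of $\overline{P}$, so that ``only inter-cluster edges carry resistance'' --- addresses the wrong object: it shows the reference resistance $\overline{\mathbf{r}}$ vanishes on intra-cluster edges, but the quantity to be bounded is $\mathrm{diag}(B^\mathsf{T}(P-\overline{P})B)$, and the perturbation $P-\overline{P}$ has no vanishing structure on intra-cluster edges, so the degree counting cannot be restricted to the inter-cluster support.

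You should also know that the paper does not close this gap either. Its proof derives precisely $\frac{1}{4}\sum_k|\breve{r}_k-\breve{r}'_k|^2\le d_\text{max}\|\Delta\|_F^2$ with $\Delta=\breve{U}\breve{U}^\mathsf{T}-\breve{U}'\breve{U}'^\mathsf{T}$, and then substitutes $\|\Delta\|_F^2 \le 2\alpha^2$ citing \eqref{eq:frob-norm-1} --- but \eqref{eq:frob-norm-1} bounds $\|\breve{U}\breve{U}^\mathsf{T}-\breve{\overline{U}}\breve{\overline{U}}^\mathsf{T}\|_F^2$, not $\|\Delta\|_F^2$, for which the paper's own triangle inequality yields only $\|\Delta\|_F^2\le 8\alpha^2$. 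Carried through consistently, the paper's argument proves $\|\breve{\mathbf{r}}-\breve{\mathbf{r}}'\|_2\le 4\alpha\sqrt{2d_\text{max}}$, i.e., twice the stated constant --- which is exactly the bound your corrected proposal reaches. So your difficulty is not a failure to reproduce a known trick: with the estimates available in the paper, the proposition holds only with the constant $4$ in place of $2$, and obtaining the stated constant would require a genuinely sharper resistance-map bound (for instance, one exploiting that $\Delta$ is a trace-zero difference of rank-$q$ projectors), which neither you nor the paper supplies.
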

\begin{proof}
	\changedB{\changedF{See} Appendix~\ref{appendix:prop-r-robustness}.}
\end{proof}
\vspace{0.5em}

\subsection{\changed{Approximations for Distributed Computation Without Explicit Eigendecomposition}} \label{sec:approx}

The computation of $\breve{\mathbf{r}}$ in Equation (\ref{eq:tilde-r}) requires the computation of eigenvectors \changedB{and} eigenvalues, and \changed{determining} how many eigenvectors ($q$) to be used \changed{in computing $\breve{U}$}.
In this section we \changed{propose an approximation that does not require explicit eigendecomposition of $L$ and is \changedB{amenable} to distributed computation. Furthermore, the choice of $q$ is replaced by the choice of a parameter, $\epsilon$, that is directly related to the desired value of $\textrm{avgrelout}_G(\widetilde{\mathcal{G}})$.} 


\subsubsection{Approximation for Mode-independent Computation} \label{sec:approx-mode-independent}

\changed{The following proposition establishes that for a low value of $q$ (compared to $n$) and an appropriately chosen value of $\epsilon$, if there exists a well-defined $q$-partition of the \changedF{graph, then} $\breve{U} \breve{U}^\mathsf{T}$ can be reasonably approximated by $\epsilon (\epsilon I + L)^{-1}$. We call this the \emph{approximation for mode-independent computation}.}

\vspace{0.5em}
\begin{proposition} \label{prop:approximation}
\changed{
Given a desired value for an average relative outgoing weight, $\widehat{a}$, suppose there exists a $q$-partition, $\widetilde{\mathcal{G}}$, such that
\begin{itemize}
	\item[i.] $\widetilde{\mathcal{G}}$ is $\alpha$-realizable with \changedC{some} $\alpha\in[0,\changedC{\widehat{\alpha}}]$, and,
	\item[ii.] $ \textrm{avgrelout}_{G}(\widetilde{\mathcal{G}}) \in [\beta\widehat{a},\,\widehat{a}/\beta]$ for some $\beta\leq 1$ (\emph{i.e.}, the actual average relative outgoing weight of the partition is within a $\beta$ proportion of the desired value of $\widehat{a}$ in either direction).
\end{itemize}
If 
$\epsilon = \changedC{\displaystyle \frac{\sqrt{2q} ~\widehat{a}}{(\widehat{\alpha}\sqrt{1-\widehat{\alpha}^2})^{1/2}} }$\,,
then
\[\frac{ \|\breve{U} \breve{U}^\mathsf{T} - \epsilon (\epsilon I + L)^{-1}\|_{2} }{ \|\breve{U} \breve{U}^\mathsf{T} \|_{2} } 
~<~
 \frac{1}{\beta} \changedC{\sqrt{\frac{\widehat{\alpha}}{\sqrt{1-\widehat{\alpha}^2}}}}
\]
where $\|\cdot\|_{2}$ is the matrix operator $2$-norm.
}
\end{proposition}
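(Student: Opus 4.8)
The plan is to exploit that both matrices in the difference are simultaneously diagonalized by the eigenbasis $\{\mathbf{u}_k\}_{k=0}^{n-1}$ of $L$. First I would observe that $\breve{U}\breve{U}^\mathsf{T} = \sum_{k=0}^{q-1}\mathbf{u}_k\mathbf{u}_k^\mathsf{T}$ is precisely the orthogonal projector onto $\text{span}(N)$, so $\|\breve{U}\breve{U}^\mathsf{T}\|_2 = 1$ and the denominator on the left is $1$. Writing the spectral decomposition $L = \sum_k \lambda_k \mathbf{u}_k\mathbf{u}_k^\mathsf{T}$ gives $\epsilon(\epsilon I + L)^{-1} = \sum_k \frac{\epsilon}{\epsilon+\lambda_k}\mathbf{u}_k\mathbf{u}_k^\mathsf{T}$, whence the difference is the symmetric matrix $\sum_{k=0}^{q-1}\frac{\lambda_k}{\epsilon+\lambda_k}\mathbf{u}_k\mathbf{u}_k^\mathsf{T} - \sum_{k=q}^{n-1}\frac{\epsilon}{\epsilon+\lambda_k}\mathbf{u}_k\mathbf{u}_k^\mathsf{T}$, using $1-\frac{\epsilon}{\epsilon+\lambda_k}=\frac{\lambda_k}{\epsilon+\lambda_k}$. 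Its operator $2$-norm is the largest absolute eigenvalue, and since $\lambda\mapsto\lambda/(\epsilon+\lambda)$ is increasing while $\lambda\mapsto\epsilon/(\epsilon+\lambda)$ is decreasing, this collapses cleanly to $\max\{\lambda_{q-1}/(\epsilon+\lambda_{q-1}),\,\epsilon/(\epsilon+\lambda_q)\}$. This spectral reduction is the conceptual heart of the argument; everything afterward is bounding two scalar ratios.

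For the remaining estimates I would let $\alpha^*$ denote the smallest realizing value, so that $\textrm{avgrelout}_G(\widetilde{\mathcal{G}}) = \alpha^*\lambda_q/\sqrt{2q}$ with $\alpha^*\le\widehat{\alpha}$, and abbreviate the target bound as $\delta = \frac{1}{\beta}\sqrt{\widehat{\alpha}/\sqrt{1-\widehat{\alpha}^2}}$. For the second ratio I would use $\epsilon/(\epsilon+\lambda_q) < \epsilon/\lambda_q$ together with the upper half of hypothesis (ii), namely $\widehat{a}\le\textrm{avgrelout}_G(\widetilde{\mathcal{G}})/\beta$, to obtain $\epsilon \le \alpha^*\lambda_q/(\beta\sqrt{\widehat{\alpha}\sqrt{1-\widehat{\alpha}^2}})$; substituting the prescribed $\epsilon = \sqrt{2q}\,\widehat{a}/(\widehat{\alpha}\sqrt{1-\widehat{\alpha}^2})^{1/2}$ and applying $\alpha^*\le\widehat{\alpha}$ makes $\epsilon/\lambda_q$ collapse to exactly $\delta$. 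For the first ratio I would use $\lambda_{q-1}/(\epsilon+\lambda_{q-1}) < \lambda_{q-1}/\epsilon$, bound $\lambda_{q-1}\le\frac{\alpha^*}{\sqrt{1-(\alpha^*)^2}}\lambda_q$ via Proposition~\ref{prop:lambda-q-1-upper-bound}, and invoke the lower half of hypothesis (ii), namely $\widehat{a}\ge\beta\,\textrm{avgrelout}_G(\widetilde{\mathcal{G}})$, to lower-bound $\epsilon \ge \beta\alpha^*\lambda_q/\sqrt{\widehat{\alpha}\sqrt{1-\widehat{\alpha}^2}}$.

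The main obstacle I anticipate is the first ratio, because the hypotheses pin $\lambda_q$ only from below through the $\alpha$-realizability inequality, so a direct upper bound on $\lambda_q$ is unavailable. The resolution is a cancellation: the factor $\alpha^*$ in the lower bound for $\epsilon$ cancels the $\alpha^*$ in the numerator of Proposition~\ref{prop:lambda-q-1-upper-bound}, leaving $\lambda_{q-1}/\epsilon \le \sqrt{\widehat{\alpha}\sqrt{1-\widehat{\alpha}^2}}/(\beta\sqrt{1-(\alpha^*)^2})$, in which $\frac{1}{\sqrt{1-(\alpha^*)^2}}$ is monotone in $\alpha^*$ and hence maximized at $\alpha^*=\widehat{\alpha}$; the $\alpha^*$-dependence drops out and this bound again collapses to $\delta$. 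Both ratios being strictly below $\delta$ (strictness supplied by $\epsilon>0$ in each denominator) yields the claimed strict inequality. I would finally note the implicit nondegeneracy $\lambda_q>0$, which is exactly what makes an $\alpha$-realizable $q$-partition meaningful; the degenerate case $\lambda_{q-1}=0$ merely makes the first ratio vanish and is harmless.
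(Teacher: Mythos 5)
Your proof is correct and follows essentially the same route as the paper's: simultaneous diagonalization in the eigenbasis of $L$, reduction of the operator norm to $\max\left\{ \frac{\lambda_{q-1}}{\epsilon+\lambda_{q-1}}, \frac{\epsilon}{\epsilon+\lambda_q} \right\}$, and then the same scalar bounds via Proposition~\ref{prop:lambda-q-1-upper-bound}, Definition~\ref{def:alpha-realizable}, and hypothesis (ii). The only cosmetic difference is that you invoke Proposition~\ref{prop:lambda-q-1-upper-bound} with the tight realizability constant $\alpha^*$ (making $\mathrm{avgrelout}_G(\widetilde{\mathcal{G}}) = \alpha^*\lambda_q/\sqrt{2q}$ exact), whereas the paper keeps $\mathrm{avgrelout}_G(\widetilde{\mathcal{G}})$ explicit by citing the intermediate inequality \eqref{eq:lambda-ratio-avgrelout} from that proposition's proof --- the two bookkeeping choices yield the identical cancellation.
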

\begin{proof}
\changedC{\changedF{See} Appendix~\ref{appendix:prop-approximation}.}
%
%
\end{proof}
\vspace{0.5em}

\changed{As a consequence, \changedC{if the graph admits a good $q$-partition for a desired value of $\mathrm{avgrelout}$}, and if the value of $\epsilon$ is chosen appropriately, the \changedC{$q$-aggregated} resistance vector in \eqref{eq:tilde-r} can be approximated as
\begin{equation*} 
\breve{\mathbf{r}} ~\simeq~ \breve{\mathbf{r}}^\text{aprox-I} ~:=~ \epsilon~ \text{diag}\left( B^\mathsf{T} (\epsilon I + L)^{-1} B \right)
\end{equation*}
}
This approximation automatically factors out the influence from eigenvectors corresponding to larger eigenvalues \changed{\changedF{and only relies on} the contribution for the first $q$ eigenvectors. Figure~\ref{fig:accuracyComparison}(b) illustrates \changedF{an example} in computation of the resistance vector using this approximation.}



\subsubsection{Approximation for Distributed Computation} \label{sec:approx-distributed}

\changed{The computation of \changedF{the} inverse of the matrix $(\epsilon I + L)$, and hence the approximated resistance vector, $\breve{\mathbf{r}}$, as described above is not immediately distributable.}
For \changedF{distributed} computation, \changed{we note that $\epsilon I + L = (\epsilon I + D) -A $, where $\epsilon I + D$ is a diagonal matrix and $A$ is the weighted adjacency matrix.
We define the \emph{left-normalized adjacency matrix} as $\widehat{A}_\epsilon = (\epsilon I + D)^{-1} A$.} $(\epsilon I + L)^{-1}$ can then be further \changedB{approximated as} 
\begin{eqnarray}\begin{array}{rl}
& (\epsilon I + L)^{-1} \\
= & \left( (\epsilon I + D) - A \right)^{-1} \\
= & \left( (\epsilon I + D) (I - \widehat{A}_\epsilon) \right)^{-1} \\
= & (I - \widehat{A}_\epsilon)^{-1} (\epsilon I + D)^{-1} \\ 
= & \left( \sum_{t=0}^{\infty} \widehat{A}_\epsilon^{t} \right) (\epsilon I + D)^{-1} 
			~~\text{\small \changedB{(using \textit{Neumann series}~\cite{yosida2013functional})}} \\
\simeq & \left( \sum_{t=0}^{p} \widehat{A}_\epsilon^{t} \right) (\epsilon I + D)^{-1} ~~\text{\small (partial sum)}
\end{array} \label{eq:neumannSeries}
\end{eqnarray}


\changed{The convergence of the infinite sum in the Neumann series above, and hence the approximation using the partial sum in the last step, 
is formalized in the following proposition:}

\vspace{0.5em}
\begin{proposition}
\changed{	
	~
\begin{enumerate}
	\item $\rho := \| \widehat{A}_\epsilon \|_{\infty} \leq \frac{1}{1+\epsilon/d_{\text{max}}} < 1$, \changedB{where $d_{\text{max}}$ is the maximum degree over the vertices of the graph}.
	\item Define the partial sum $S_N = \sum_{t=0}^{N} \widehat{A}_\epsilon^t$. Then $\| S_{N} \|_{\infty} < \frac{1 - \rho^{N+1}}{1 - \rho}$, $\|S_{\infty}\|_{\infty}$ is bounded above by $\frac{1}{1 - \rho}$, and $\| S_{\infty} - S_{N} \|_{\infty} < \frac{\rho^{N+1}}{1 - \rho}$.
\end{enumerate}
%
}
\end{proposition}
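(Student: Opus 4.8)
The plan is to treat the two parts separately, both reducing to elementary estimates of the induced matrix $\infty$-norm (the maximum absolute row sum), and to prove them in the stated order because part~2 depends on the bound $\rho < 1$ from part~1.

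For part~1, I would compute $\|\widehat{A}_\epsilon\|_{\infty}$ exactly rather than merely bound it. Writing $\widehat{A}_\epsilon = (\epsilon I + D)^{-1} A$ entrywise, its $(i,j)$ entry is $A_{ij}/(\epsilon + D_{ii})$. Since the edge weights are non-negative we have $A \ge 0$ entrywise, and by definition of the degree matrix $\sum_j A_{ij} = D_{ii}$, so the $i$-th absolute row sum equals $\sum_j A_{ij}/(\epsilon + D_{ii}) = D_{ii}/(\epsilon + D_{ii})$. The map $d \mapsto d/(\epsilon + d) = 1 - \epsilon/(\epsilon + d)$ is strictly increasing on $[0,\infty)$, so the maximum over rows is attained at the maximum degree, giving $\rho = d_{\text{max}}/(\epsilon + d_{\text{max}}) = 1/(1 + \epsilon/d_{\text{max}})$. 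Because $\epsilon > 0$ this is strictly less than $1$, which is the claim (with the stated bound in fact attained with equality).

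For part~2, the engine is submultiplicativity of the induced $\infty$-norm, $\|\widehat{A}_\epsilon^{t}\|_{\infty} \le \|\widehat{A}_\epsilon\|_{\infty}^{t} = \rho^{t}$, combined with the triangle inequality. This gives $\|S_N\|_{\infty} \le \sum_{t=0}^{N} \rho^{t} = (1 - \rho^{N+1})/(1 - \rho)$. Since part~1 guarantees $\rho < 1$, the spectral radius of $\widehat{A}_\epsilon$ is at most $\rho < 1$, so $I - \widehat{A}_\epsilon$ is invertible and the Neumann series converges, $S_\infty = (I - \widehat{A}_\epsilon)^{-1} = \sum_{t=0}^{\infty} \widehat{A}_\epsilon^{t}$; passing to the limit yields $\|S_\infty\|_{\infty} \le 1/(1 - \rho)$. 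The tail estimate follows identically, $\|S_\infty - S_N\|_{\infty} = \|\sum_{t=N+1}^{\infty} \widehat{A}_\epsilon^{t}\|_{\infty} \le \sum_{t=N+1}^{\infty} \rho^{t} = \rho^{N+1}/(1 - \rho)$.

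The computations are routine, so the only real subtlety lies in two places. First, the validity of the Neumann-series identity $S_\infty = (I - \widehat{A}_\epsilon)^{-1}$ rests entirely on $\rho < 1$ from part~1, which is why the parts are ordered this way. Second, the strict inequalities as written deserve a remark: the submultiplicativity and triangle-inequality bounds above are a priori only $\le$, and they are attained with equality in the degenerate case of a regular graph, where every row sum of $\widehat{A}_\epsilon^{t}$ equals $\rho^{t}$ exactly and all powers are non-negative so no cancellation occurs. I would therefore either state the bounds with $\le$, or restrict to the generic (connected, non-regular) situation, in which the per-power bounds $\|\widehat{A}_\epsilon^{t}\|_{\infty} \le \rho^{t}$ are not simultaneously tight across the relevant range of $t$ and the inequalities become strict. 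This strict/non-strict distinction is the main point requiring attention; everything else is direct.
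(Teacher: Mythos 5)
Your proof is correct and takes essentially the same route as the paper's: the exact row-sum computation $\rho = \max_i \deg(v_i)/(\epsilon+\deg(v_i)) = 1/(1+\epsilon/d_{\text{max}})$ for part 1, and submultiplicativity of the induced $\infty$-norm plus the triangle inequality and geometric series for part 2. Your remark about strictness is well taken and is in fact a point the paper glosses over: the paper's own proof also only establishes $\le$ while the proposition asserts $<$, and your regular-graph example (where $\widehat{A}_\epsilon \mathbf{1} = \rho\mathbf{1}$ forces equality at every $N$) shows the strict version genuinely fails, so the bounds in the statement should read $\le$.
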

\begin{proof}
	\changed{ 
Part 1: Since the elements of $\widehat{A}_\epsilon$ are all positive,
\begin{eqnarray}
	\|\widehat{A}_\epsilon\|_\infty  & = & \max_{i} \left( \sum_{j} [\widehat{A}_\epsilon]_{ij} \right)\nonumber \\
				& & \qquad~\text{\small (where $[\widehat{A}_\epsilon]_{ij}$ is the $(i,j)$-th element of $\widehat{A}_\epsilon$)} \nonumber \\
	& = & \max_{i} \left(\sum_{j} [(\epsilon I + D)^{-1} A ]_{ij} \right) \nonumber \\
	& = & \max_{i} \left(\sum_{j} \frac{1}{\epsilon + \text{deg}(v_{i})} A_{ij} \right)\nonumber \\
	& = & \max_{i} \left(\frac{1}{\epsilon + \text{deg}(v_{i})} \sum_{j} A_{ij} \right)\nonumber \\
	& = & \max_{i} \left(\frac{\text{deg}(v_{i})}{\epsilon + \text{deg}(v_{i})} \right)\nonumber \\
	& = & \max_{i} \left(\frac{1}{1 + \epsilon / \text{deg}(v_{i})} \right)\nonumber \\
	& = & \frac{1}{1 + \epsilon / d_{\text{max}} } \nonumber \\
	& < & 1
\end{eqnarray}

Part 2:
{\small \begin{eqnarray}
\|S_N\|_\infty ~=~ \left\| \sum_{t=0}^{N} \widehat{A}_\epsilon^t \right\|_\infty 
 & \leq & \sum_{t=0}^{N} \|\widehat{A}_\epsilon \|_\infty^t \nonumber \\
 & =  & \sum_{t=0}^{N} \rho^t ~=~ \frac{1 - \rho^{N+1}}{1 - \rho}
\end{eqnarray}}

Thus, $\lim_{N\rightarrow \infty} \|S_N\|_\infty \leq \frac{1}{1 - \rho}$.

Furthermore,
{\small \begin{eqnarray}
\|S_\infty - S_N\|_\infty & = & \left\| \sum_{t=N+1}^{\infty} \widehat{A}_\epsilon^t \right\|_\infty
~=~ \left\| \widehat{A}_\epsilon^{N+1} \sum_{t=0}^{\infty} \widehat{A}_\epsilon^t \right\|_\infty \nonumber \\
&\leq & \rho^{N+1} \sum_{t=0}^{\infty} \rho^t ~=~ \frac{\rho^{N+1}}{1 - \rho}
\end{eqnarray}}
Thus the partial sums, $S_N$, get closer to $S_\infty$ exponentially fast 
as the value of $N$ increases.
}
\end{proof}
\vspace{0.5em}

\changedC{
The consequence of the above proposition is that the $q$-aggregated resistance vector can be further approximated as
\begin{equation*} 
\breve{\mathbf{r}}^\text{aprox-II} ~:=~ \epsilon~ \text{diag}\left( B^\mathsf{T} \left( \sum_{t=0}^{p} \widehat{A}_\epsilon^{t} \right) (\epsilon I + D)^{-1} B \right)
\end{equation*}
where $\widehat{A}_\epsilon = (\epsilon I + D)^{-1} A$.
We next discuss how the computation of $\breve{\mathbf{r}}^\text{aprox-II}$ can be distributed over the graph.
\vspace{0.5em}
}

\changed{
Since $(\epsilon I + D)$ is a diagonal matrix and $A$ \changedF{is} the adjacency matrix, 
the computation of the $t$-th power of $\widehat{A}_\epsilon = (\epsilon I + D)^{-1} A$ is \changedB{amenable} to distributed computation.
In order to describe the distributed computation algorithm we observe that the $(i,j)$-th element of $\widehat{A}_\epsilon^{t+1}$ can be written as}
\begin{eqnarray}
[\widehat{A}_\epsilon^{t + 1}]_{ij} & = & \sum_{l} [\widehat{A}_\epsilon]_{il} ~[\widehat{A}_\epsilon^{t}]_{lj} \nonumber \\ 
  & = & \changed{\sum_{\changedB{\{l | v_l \in \mathscr{N}_G(v_i)\}}} [\widehat{A}_\epsilon]_{il} ~[\widehat{A}_\epsilon^{t}]_{lj}} \nonumber \\
   & = & \changed{\sum_{\changedB{\{l | v_l \in \mathscr{N}_G(v_i)\}}} \frac{A_{il}}{\epsilon + D_{ii}} ~[\widehat{A}_\epsilon^{t}]_{lj}}
\label{eq:iterN}
\end{eqnarray}
\changed{where $\mathscr{N}_G\changedB{(v_i) = \{v_j \,|\, (v_i,v_j) \in \mathcal{E}(G) \}}$ indicates the set of vertices that are neighbors of the $i$-th vertex.
The \changedB{second} equality in \changedF{(\ref{eq:iterN})}  holds because $[\widehat{A}_\epsilon]_{il}$ is non-zero only if \changedB{$(v_i,v_l)\in \mathcal{E}(G)$}.

In a distributed computation, the vertex $i$ will store only the elements of the $i$-th row of $\widehat{A}_\epsilon^{t}$ and the computation of the $i$-th row of the partial sum matrix, $S\!p := \sum_{t=0}^{p} \widehat{A}_\epsilon^t$. In practice each vertex can maintain associative arrays for the rows, which are populated as new non-zero entries are computed.
This is illustrated in Algorithm~\ref{alg:distributedComputationt}.}

\changed{
\begin{algorithm}[h]
	\caption{\changedB{Distributed} computation of $i$-th row of the partial sum $S\!p$, \changedB{and the resistance values on the edges emanating from $v_i$.}}\label{alg:distributedComputationt} 
	
	\textbf{Inputs:} 
	\textit{\textbf{i.}} $(i,l)$-th element of Adjacency Matrix, $\mathsf{A}_{il} ~\forall~ \changedB{\{l \,|\, v_l \in \mathscr{N}_G(v_i)\}}$; 
	~\textit{\textbf{ii.}} Degree of the vertex, $\mathsf{D}_i$; 
	~\textit{\textbf{iii.}} $\epsilon$.
	
	\textbf{Outputs:}
	\textit{\textbf{i.}} Associative array $\mathsf{Sp}_i$ representing $i$-th row of $S\!p$;
	\textit{\textbf{ii.}} \changedB{For every $ \changedB{\{l \,|\, v_l \in \mathscr{N}_G(v_i)\}}$, the value of the resistance $\breve{\mathsf{r}}^\text{aprox-II}[i,l]$ on the edge $(v_i,v_l)$.}
	
	\begin{algorithmic}[1]
		\State Associative array $\mathsf{\widehat{A}Pow}_i \leftarrow \{(i\!:1)\}$ \algcomment{identity matrix}
		\State $\mathsf{t}_i \leftarrow 0$ \algcomment{iteration number for data in $\mathsf{\widehat{A}Pow}_i$}
		\State Associative array $\mathsf{\widehat{A}PowTmp}_i \leftarrow \{\}$ 
		\State $\mathsf{done}_i \leftarrow 0$ \algcomment{iteration number for data in $\mathsf{\widehat{A}PowTmp}_i$}
		\State Associative array $\mathsf{Sp}_i \leftarrow \{\}$
		\State \changedB{Associative array $\breve{\mathsf{r}}^\text{aprox-II} \leftarrow \{\}$}
		
		\While {$\mathsf{t}_i < p$}
			\For {$\changedB{\{l \,|\, v_l \!\in\! \mathscr{N}_G(v_i)\}}$} \algcomment{communicate w/neighbors} 
				\While{ $\mathsf{t}_l < \mathsf{t}_i$} 
					\textbf{wait} \algcomment{wait for latest data} 
				\EndWhile
				\For {$(j\!:\changedB{\mathsf{s}_j})\in \mathsf{\widehat{A}Pow}_l$} 
					\State $\mathsf{\widehat{A}PowTmp}_i[j] \leftarrow \frac{\mathsf{A}_{il}}{\epsilon + \mathsf{D}_i} * \changedB{\mathsf{s}_j}$ 
				\EndFor
			\EndFor
			\State $\mathsf{done}_i \leftarrow \mathsf{t}_i + 1$
			\For {$\changedB{\{l \,|\, v_l \!\in\! \mathscr{N}_G(v_i)\}}$} \algcomment{wait for neighbors} 
				\While{$\mathsf{done}_l < \mathsf{t}_i + 1$}
					\textbf{wait}
				\EndWhile
			\EndFor
			$\mathsf{\widehat{A}Pow}_i \leftarrow \mathsf{\widehat{A}PowTmp}_i$
			
			\State $\mathsf{t}_i \leftarrow \mathsf{t}_i + 1$ 
			

			\For {$(j\!:\mathsf{s}_j)\in \mathsf{\widehat{A}Pow}_i$} \algcomment{update partial sum}
				\If{$\mathsf{Sp}_i[j]$ does not exist}
					\State $\mathsf{Sp}_i[j] \leftarrow \mathsf{s}_j$
				\Else
					\State $\mathsf{Sp}_i[j] \leftarrow \mathsf{Sp}_i[j] + \mathsf{s}_j$
				\EndIf
			\EndFor
		\EndWhile
\changedB{
		\For {$\changedB{\{l \,|\, v_l \!\in\! \mathscr{N}_G(v_i)\}}$} \algcomment{wait for neighbors}
			\While{ $\mathsf{t}_l < p$} 
				\textbf{wait}  
			\EndWhile
		\EndFor
		\For {$\changedB{\{l \,|\, v_l \!\in\! \mathscr{N}_G(v_i)\}}$} \algcomment{communicate w/neighbors}
			\State $\breve{\mathsf{r}}^\text{aprox-II}[i,l] \leftarrow \epsilon \left(
						\frac{ \mathsf{Sp}_{i}[i] - \mathsf{Sp}_{l}[i]}{\epsilon + \mathsf{D}_{i}} 
						+ \frac{\epsilon \mathsf{Sp}_{l}[l] - \mathsf{Sp}_{i}[l]}{\epsilon + \mathsf{D}_{l}} 
						\right) $
		\EndFor
}
	\end{algorithmic}
\end{algorithm}
}

\changed{
Suppose \changedB{the $k$-th edge connects the $i$-th to the $l$-th vertex (\emph{i.e.}, $v_l\in \mathscr{N}_G(v_i)$ and $e_k = (v_i,v_l) \in \mathcal{E}(G)$)}.
Using the $i$- and $l$-th rows of the partial sum matrix, $S\!p$, stored on vertices $i$ and $l$ respectively, and using their respective degrees, 
the resistance on the edge (the $k$-th component of the resistance vector) can be \changedB{computed} in a distributed manner
as follows:
\vspace{0.5em}
\begin{eqnarray}
\breve{r}_k & \simeq & \breve{r}_k^\text{aprox-II}\nonumber \\
& := &
\epsilon \left[ B^{\mathsf{T}} \left( \sum_{t=0}^{p} \widehat{A}_\epsilon^{t} \right) (\epsilon + D)^{-1} B \right]_{kk}
\nonumber \\
& = & \epsilon \sum_{\changedB{a,b}} B_{ak} ~{S\!p}_{ab} 
~\frac{1}{\epsilon + D_{bb}}
~B_{bk} \nonumber \\
& = & \epsilon \sum_{a,b \in \{i,l\}} B_{ak} ~{S\!p}_{ab} 
~\frac{1}{\epsilon + D_{bb}}
~B_{bk} \nonumber \\
& = & \frac{\epsilon {S\!p}_{ii}}{\epsilon + D_{ii}} - \frac{\epsilon {S\!p}_{li}}{\epsilon + D_{ii}} 
+ \frac{\epsilon {S\!p}_{ll}}{\epsilon + D_{ll}} - \frac{\epsilon {S\!p}_{il}}{\epsilon + D_{ll}}  \quad
\end{eqnarray}
where the first approximation \changedF{follows} from Equation \eqref{eq:tilde-r}, Proposition~\ref{prop:approximation} \changedF{and} Equation \eqref{eq:neumannSeries}, and the simplification of the summation domain from \changedB{every pair of vertex indices} to $\{i,l\}$ is due to the fact that $B_{ck}$ is zero for any $c\notin \{i,l\}$.
The important thing to note here is that this approximate computation of the resistance on the $k$-th edge requires only values stored on its bounding vertices, thus making this computation distributable as well.
\changedB{The complete algorithm for the distributed computation of the resistances is described in Algorithm~\ref{alg:distributedComputationt}.}

 Figure~\ref{fig:compDistributed} illustrates a result in computation of the resistance vector using this distributed implementation.
}


\section{\changed{Inter-cluster Transmission Rate Control Using Modal Barriers}} \label{sec:information-containment}

\subsection{Transmission \changed{Model}}

\begin{figure}
	\centering
	\includegraphics[width=0.8\columnwidth]{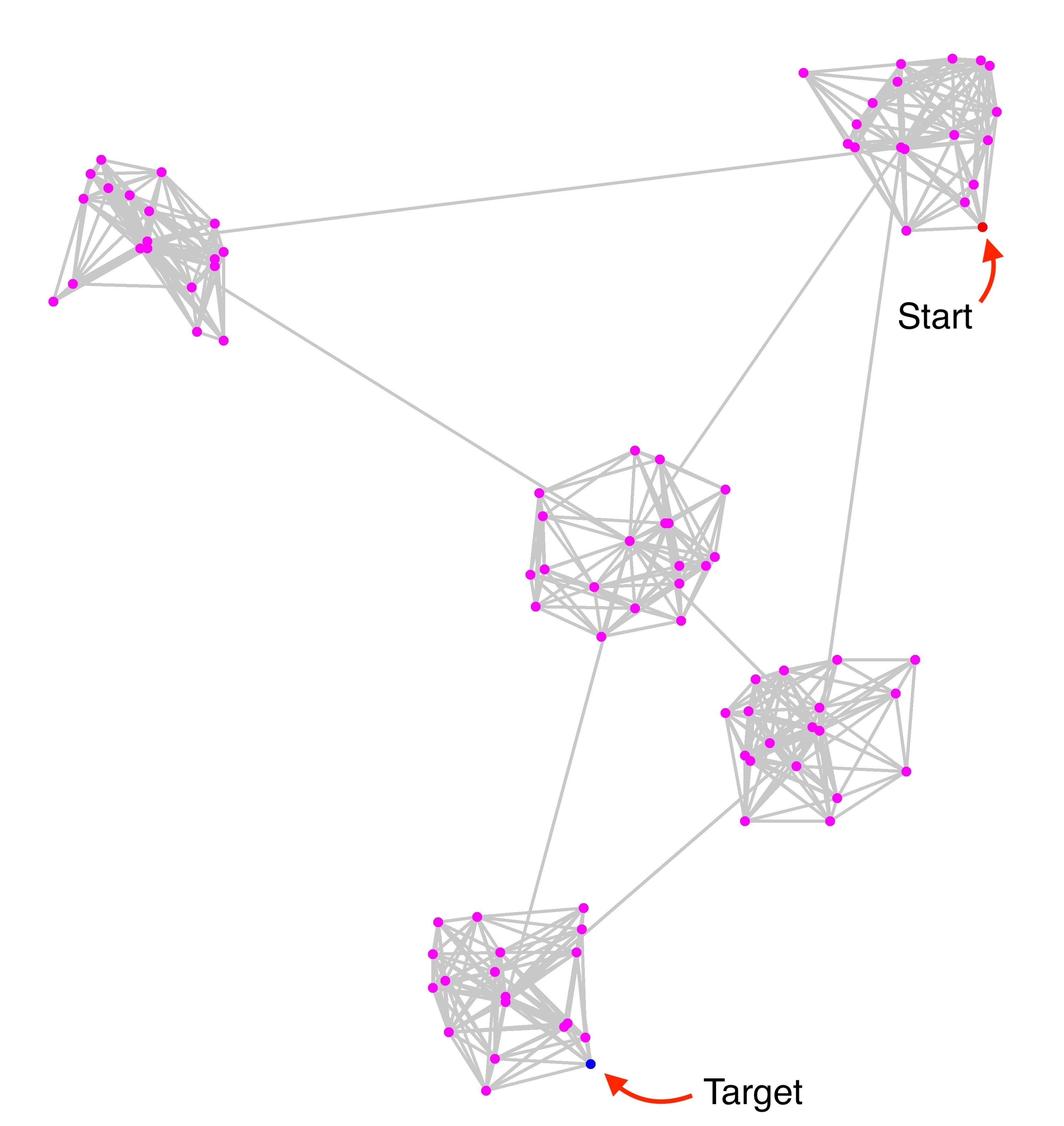}
	\vspace{-0.1in} \caption{Example of start and target vertices for transmission. \changed{This figure shows the setup for the simulation \changedF{result} in Figure~\ref{fig:trans_sim}}.}\vspace{-0.1in} 
	\label{fig:start_target_position}
\end{figure}

\changed{For a given graph and weighted Laplacian matrix, $L' = D' - A'$ (where $A'$ is the weighted adjacency matrix, and $D'$ the corresponding diagonal degree matrix),} consider the following \changed{dynamics on the graph}

\begin{equation} \label{eq:diff_equation}
\dot{\mathbf{x}} = - L' \mathbf{x}
\end{equation}
\changed{where $\mathbf{x}\in\mathbb{R}^n$ corresponds to a distribution over the vertices of the graph. 
	
A first-order} discrete time form of equation \eqref{eq:diff_equation} can be \changed{written} as 
\begin{eqnarray} \label{eq:discrete_eq}
          \mathbf{x}(\tau + \Delta \tau) &=& (I - \Delta \tau ~ L) \mathbf{x}(\tau) \nonumber \\
           \text{or more simply,} \quad     \mathbf{x}^{(t + 1)} &=& (I - \kappa L) \mathbf{x}^{(t)}     
\end{eqnarray}
\changed{where $\tau\in\mathbb{R}$ is the continuous time, while $t\in\mathbb{Z}$ is the (re-scaled) discrete time.}
\changed{$\kappa \ll \frac{2}{\lambda_n}$ is a small number chosen to ensure that $(I-\kappa L)$ has the same \changedF{eigenspace} as $L$, but its highest eigenvalue is $1$ corresponding to the eigenvector $\mathbf{u}_0$, while all other eigenvalues are in $(-1,1)$.}

As $t \rightarrow +\infty$, $\mathbf{x}^{(+\infty)}$ thus converges to the first eigenvector, $\mathbf{u}_{0}$, of $L$, which \changed{corresponds to \changedF{a} uniform distribution over the vertices. The effect of \changedF{the} dynamics is \emph{diffusion} -- starting with any non-zero distribution $\mathbf{x}^{(0)}$, the dynamics of \eqref{eq:diff_equation} or \eqref{eq:discrete_eq} ends up \changedF{re-distributing} the values on the vertices to their neighbors until an equilibrium (uniform distribution) is attained. 
\changedB{In fact the dynamics in \eqref{eq:diff_equation} can be written as
	\begin{equation} \begin{array}{rcl}
	\dot{x}_i & = & \displaystyle - (D'_i x_i ~- \!\!\!\! \sum_{\{j| \atop (v_i,v_j)\in\mathcal{E}(G)\}} \!\!\!\! A'_{ij} x_j ) \\
	& = & \displaystyle \!\!\!\! \sum_{\{j| \atop v_j \in \mathscr{N}_G(v_i)\}} \!\!\!\! A'_{ij} x_j ~-~ \left( \sum_{\{j| \atop v_j \in \mathscr{N}_G(v_i)\}} \!\!\!\! A'_{ij} \right) x_i
	\end{array} \label{eq:dynamics-i} \end{equation}
	In this form it's easy to see that the \changedF{diffusion} rate from $v_j$ to $v_i$ (\emph{i.e.}, across the edge $(v_i,v_j)$) is determined by the weight on the edge, $A'_{ij}$ (the $(i,j)$-th element of the weighted adjacency matrix). \changedF{The lower the rate, the slower the} diffusion across the edge. 
	Furthermore, it is clear that \changedF{the} dynamics required only neighborhood communication/transmission, since updating $x_i$ requires the values of $x_j$ for only the neighbor indices in $\{j \,|\, v_j \in \mathscr{N}_G(v_i)\}$.
}
	
We thus use this as a model for \changedF{network transmission} (\changedF{such as} information \changedF{in a communication network context}, or infection in context of \changedF{disease spread}) in the network. \changedF{The entity being transmitted is represented by the values of the elements of $\mathbf{x}$, with $X_k$ denoting the value on vertex $v_k$.}

In particular we consider the situation where the \changedF{entity being transmitted} originates from one particular \changedF{start} vertex, \changedF{$v_\mathsf{start}$} (\emph{i.e.}, the $k$-th element of $\mathbf{x}^{(0)}$ is $x_k^{(0)} = 1, \text{when}~k=s,~\text{or}~ x_k^{(0)} = 0~\text{otherwise}. $). We are interested in studying (and reducing) the time taken by the dynamics to make the value on a \changedF{target} vertex, \changedF{$v_\mathsf{target}$} (belonging to a different \changedF{cluster} than $v_\mathsf{start}$), go above a threshold value of $\gamma$, And we refer to the value of the target vertex $v_\mathsf{target}$ as $x_\mathsf{target}$}.
Figure \ref{fig:start_target_position} shows an example of start and target vertices for transmission on a simple graph.

\subsection{Edge Weight Computation} \label{sec:weights}

\changed{
%
For \changedB{the edge} weights 
\changedF{modeling the} dynamics \changedB{(elements of the weighted adjacency matrix that govern the dynamics \eqref{eq:dynamics-i})}
we consider three different cases:}
\begin{itemize}
	\item[i.] \changed{\emph{Unrestricted Transmission:} In this case we set \changedF{a} uniform unit \changedF{weight} for every edge in the graph. The corresponding adjacency matrix, degree matrix and the Laplacian matrix are denoted by $A^U, D^U$ and $L^U$. The (continuous-time) dynamics of unrestricted transmission is thus $\dot{\mathbf{x}} = - L^U \mathbf{x}$.}
	\item[ii.] \changed{\emph{Inter-cluster Barriered Transmission:} In order to reduce transmission across the network we propose to decrease weights on selected edges (reducing the rate of transmission across those edges). 
	For this approach, we use the resistance vector computed using the Laplacian for unrestricted transmission, $L^U$, to compute new weights. In particular, if $\breve{r}_k$ is the resistance computed for the $k$-th edge, we define the new \changedF{barrier weights} as 
	\begin{equation} \label{eq:bottleneckWeightVectorEle}
		w^B_{k} = \frac{\epsilon_B}{\epsilon_B + 
			\breve{r}_{k}}
	\end{equation}
	where $0 < \epsilon_B \ll 1$ is a small 
	positive \changedF{constant}. These barrier weights thus take \changedF{values} in $(0,1]$ \changedF{and} edges with computed resistance close to $0$ (such as the intra-cluster edges) \changedF{have} barrier weights close to $1$, while the barrier weights are lower for edges on which the computed resistance is high (for example the inter-cluster edges) 
	\footnote{Note that a lower weight implies lower transmission rate across an edge, and hence constitutes a \changedF{barrier}, while weights closer to $1$ correspond to \changedF{less restricted} transmission.}.
	The corresponding adjacency matrix, degree matrix and Laplacian are thus 
	\begin{eqnarray}
	A^B_{ij} & = & \left\{ \begin{array}{l} w^B_{k}, ~~\text{if } \exists~ e_k = (v_i,v_j) \in \mathcal{E}(G) \text{ or } \\ \qquad\qquad \exists~ e_k = (v_j,v_i) \in \mathcal{E}(G) \\ 0, ~~\text{otherwise.}\end{array} \right. \nonumber \\
	D^B & = & \mathrm{diag}_j \left( \sum_{i=1}^n A^B_{ij} \right) \nonumber \\
	L^B & = & D^B - A^B
	\end{eqnarray}
	The (continuous-time) dynamics of transmission with inter-cluster barrier is then defined as $\dot{\mathbf{x}} = - L^B \mathbf{x}$.}
	\item[iii.] \changed{\emph{Shuffled Weight Transmission:} In order to be able to compare/benchmark the performance of the barriered transmission described above, we randomly shuffle the \changedF{barrier weights} as computed in \eqref{eq:bottleneckWeightVectorEle} across the different edges of the graph. \changedF{We} define \emph{shuffled weight} on the $k$-th edge as $w^S_k = w^B_{\sigma(k)}$, for some permutation, $\sigma$, of $(1,2,\cdots,m)$.
	The corresponding adjacency matrix, degree matrix and Laplacian matrix are denoted by $A^S, D^S$ and $L^S$, and the (continuous-time) dynamics of transmission with shuffled weights is defined as $\dot{\mathbf{x}} = - L^S \mathbf{x}$.
	
	\changedF{If we regard decreasing the weight on an edge below 1 as consuming a certain amount of available resource}, the simple shuffling of the weights ensure that the same amount of \changedF{resource} is being used in lowering the weights (\emph{i.e.}, the same number of edges with the same weights as in the barrierred transmission case), but the edges are chosen randomly without using our method for mode-based barrier construction. This gives a baseline method to compare against.}
\end{itemize}



%


\section{Results} \label{sec:results}

\subsection{Inter-cluster Edge Detection Comparison} \label{sec:detectionComp}

We compare the performance of our method in detecting inter-cluster edges with a recent algorithm~\cite{cheng2017network} based on multiway Cheeger partitioning that uses an integer linear programming formulation to find partitions.
For this algorithm, the number of \changedF{clusters} is required as input and, for a total of $q$ clusters and $n$ vertices, the first $q - 1$ clusters are required to contain \changedF{less than an average of} $n/q$ vertices.
This requirement affects the accuracy of the bottleneck detection as shown in the example of Figure \ref{fig:cheegerExample}.
In this case, some vertices are forced to be separated from their well-aggregated cluster due to \changedF{the} vertex count limit per cluster, resulting in sub-optimal inter-cluster edge detection. 
Furthermore, because of the integer linear programming implementation, the multiway Cheeger partition algorithm took $54,304.6$ seconds to compute the partitions in this particular example on \changedF{an intel cpu}.
In comparison, our method took less than one second to 
compute the \emph{resistance}, \changedB{$\breve{\mathbf{r}}$,} on the edges \changedB{using \changedF{(\ref{eq:directEqn}})}, \changedF{and the result is shown}
in Figure \ref{fig:multiLinkEdgeValues}.


\begin{figure}
	\centering
	{\includegraphics[width = 0.8\columnwidth]{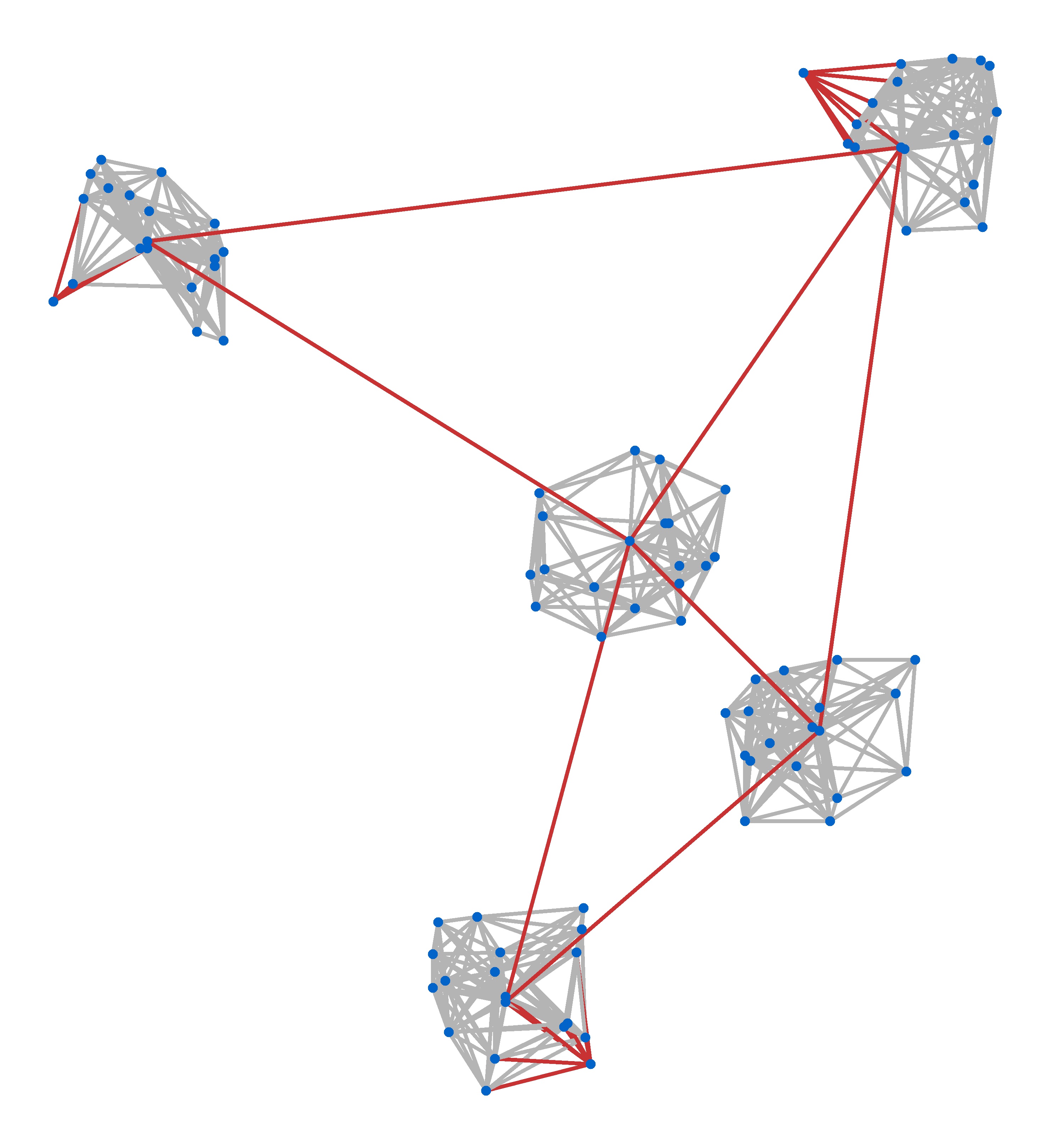}} \hspace{0.0in}
	\caption{Detected bottleneck edges using \changedF{the} multiway Cheeger partition algorithm~\cite{cheng2017network}.
	\changedF{The method is less accurate in clustering and significantly more computationally complex than our proposed resistance computation, illustrated in Figure ~\ref{fig:multiLinkEdgeValues} (see text Section \ref{sec:detectionComp}). }}
	\label{fig:cheegerExample}
\end{figure}

\subsection{Transmission Rate} \label{sec:transRate}

\begin{figure}
	\centering
	\includegraphics[width=0.95\columnwidth]{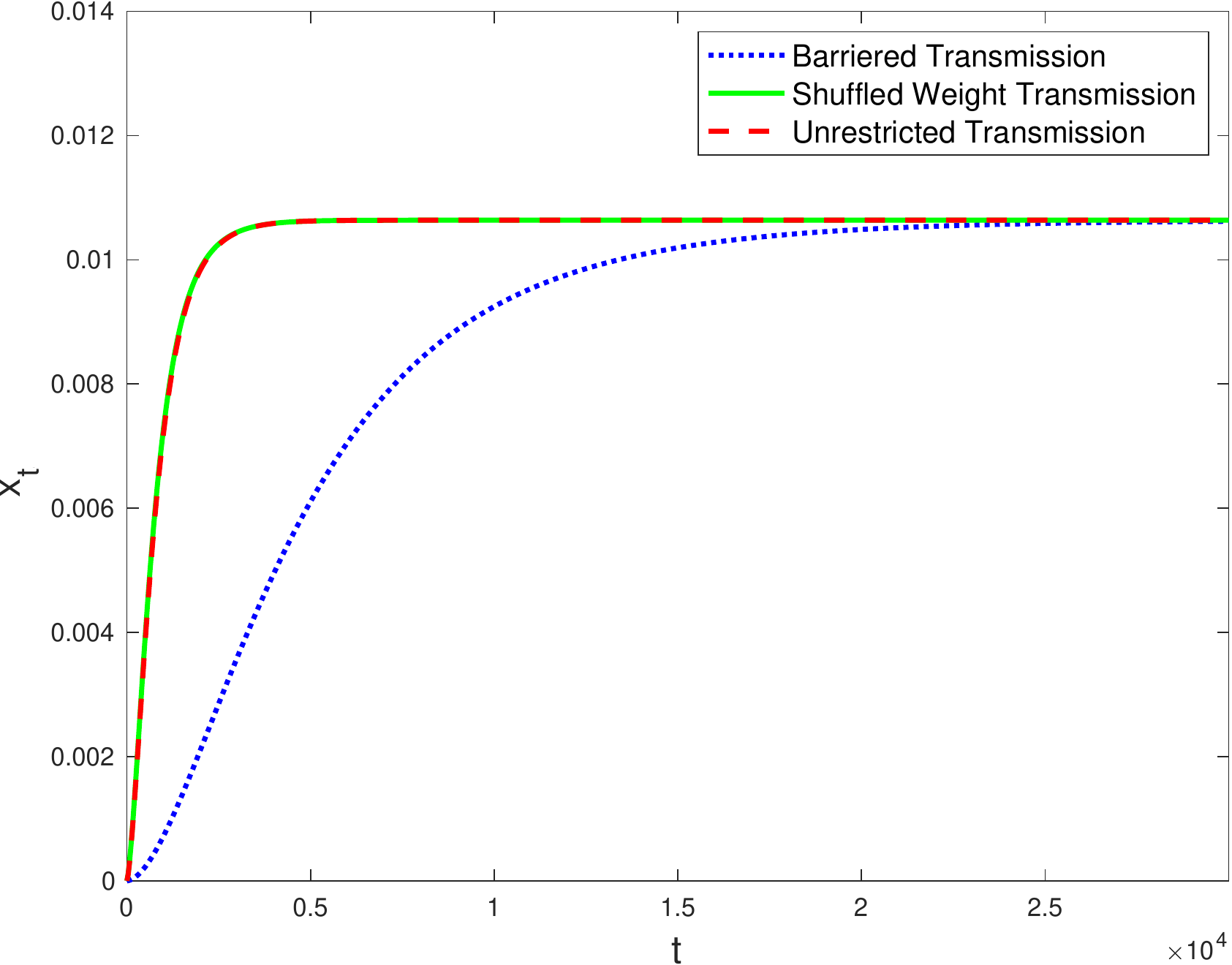}
	\vspace{-0.1in} \caption{Discrete time simulations of information transmission \changed{using the setup of Figure~\ref{fig:start_target_position}.
	\changedF{The barriered transmission slows the information conveyed across the network from start to target (see the example description, Section \ref{sec:transRate} ).}}}
	\vspace{-0.1in} 
	\label{fig:trans_sim}
\end{figure}

\changed{\changedF{A} comparison of transmission rates using the different weight assignment methods described in Section~\ref{sec:weights} is shown in Figure \ref{fig:trans_sim}.
As can be seen from the plot, the value of $\changedF{x_\mathsf{target}}$ grows significantly slower when using the \changedF{Barriered transmission} weights than using the \changedF{Unrestricted transmission} weights or the \changedF{Shuffled weights}.
While eventually any non-zero weights will result in convergence to a uniform distribution (corresponding to the null-space vector of either of $L^U$, $L^B$ or $L^S$), the objective of slowing the rate of transmission using the \changedF{barriered} weights is clearly achieved.}

\begin{figure}[h]
\centering
\includegraphics[width=0.95\columnwidth]{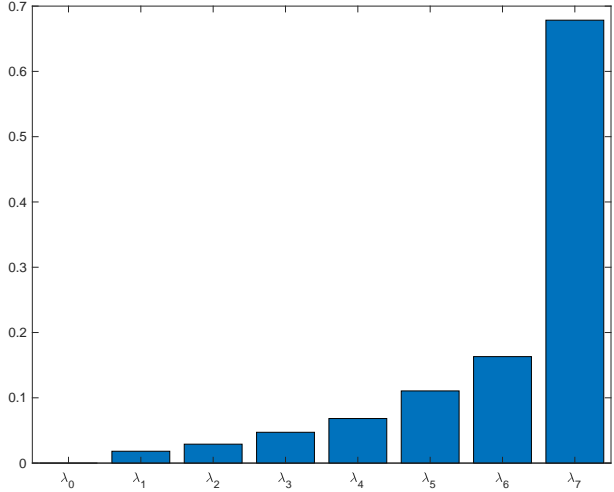} 
\vspace{-0.1in} \caption{First (smallest) 8 eigenvalues of the graph Laplacian of ego-Facebook network}\vspace{-0.1in} 
\label{fig:eigval_facebook}
\end{figure}

\begin{figure}[h]
\centering
\includegraphics[width=0.95\columnwidth]{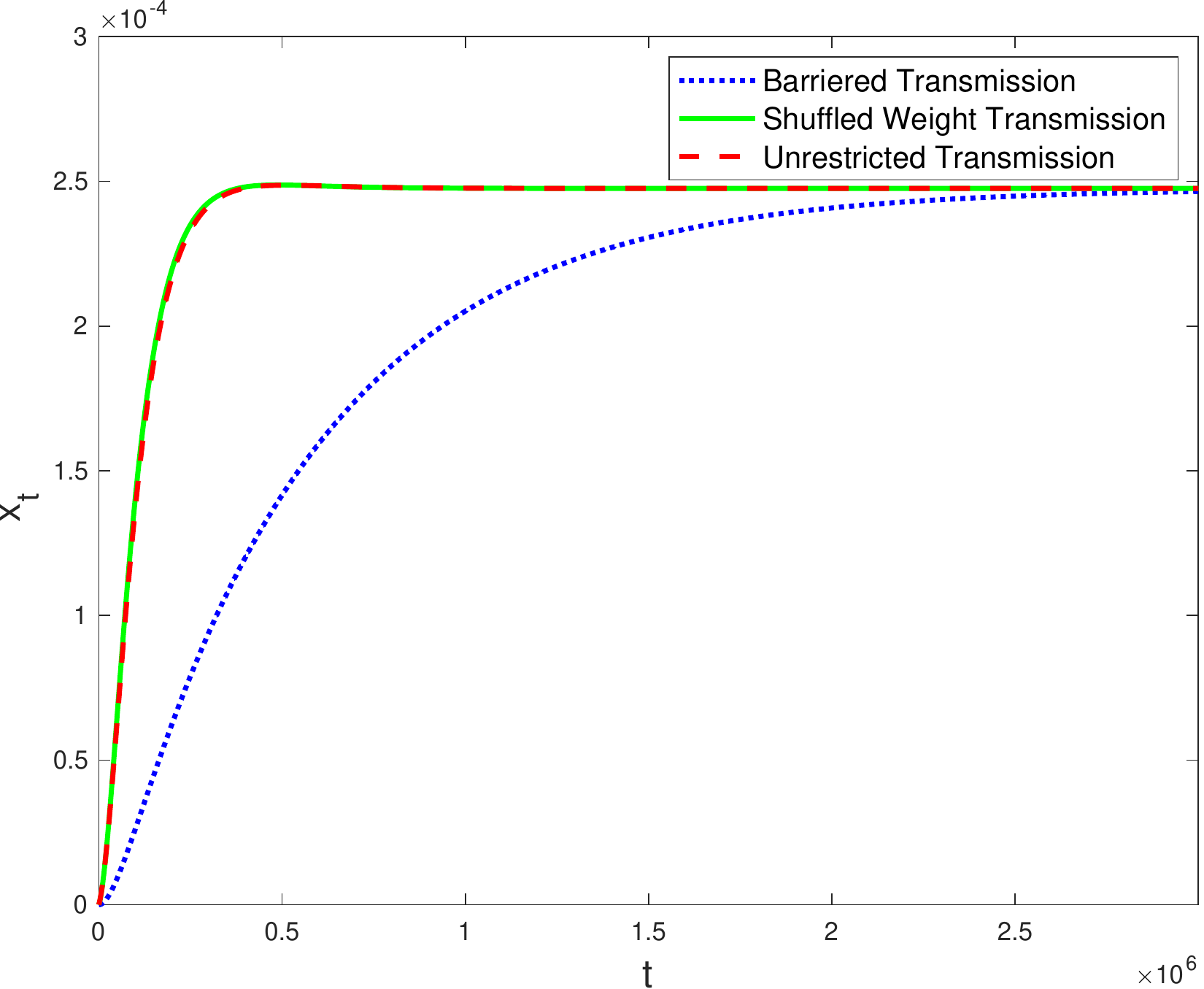}
\vspace{-0.1in} \caption{Discrete time simulations of information transmission in the ego-Facebook network. 
}\vspace{-0.1in} 
\label{fig:facebookTrans}
\end{figure}

For a large scale \changedF{network graph example}, we use an open network dataset called \textit{ego-Facebook} from \changedF{the} \textit{Stanford Large Network Dataset Collection}~\cite{snapnets}.
This anonymized social network has $4039$ nodes and $88234$ edges in total.
As shown is Figure \ref{fig:eigval_facebook}, there is a big gap between $\lambda_{6}$ and $\lambda_{7}$ \changed{for this network}.
We choose $q = 7$ in Equation \eqref{eq:tildeU} to construct our barrier weighted graph Laplacian. 
We also choose node $\#1$ as the start vertex and node $\#4038$ as the target vertex \changed{for information transmission across the network.}
The simulation result in Figure \ref{fig:facebookTrans} shows a similar trend as the result of the small scale simulation (Figure \ref{fig:trans_sim}).
The \changed{barrier} edge weights are able to slow the information transmission effectively.

\subsection{Comparison of Approximate Methods for Distributed Computation}

We use 
$3$ different methods 
for computing $\breve{U}\breve{U}^{\mathsf{T}}$ for computing the resistance vector, $\breve{\mathbf{r}}$, in equation~(\ref{eq:tilde-r}).
\changed{The first is the direct eigen-decomposition of the graph Laplacian for computing $\breve{U}$. The second is the approximate method for \changedF{mode-independent computation} as was described under Section~\ref{sec:approx-mode-independent}. And finally we use the approximation method for \changedF{distributed computation} as was described in Section~\ref{sec:approx-distributed}.}


\begin{figure*}
\centering
\subfloat[\changed{Direct computation of $\breve{\mathbf{r}}$.}]{\includegraphics[width = .33\linewidth]{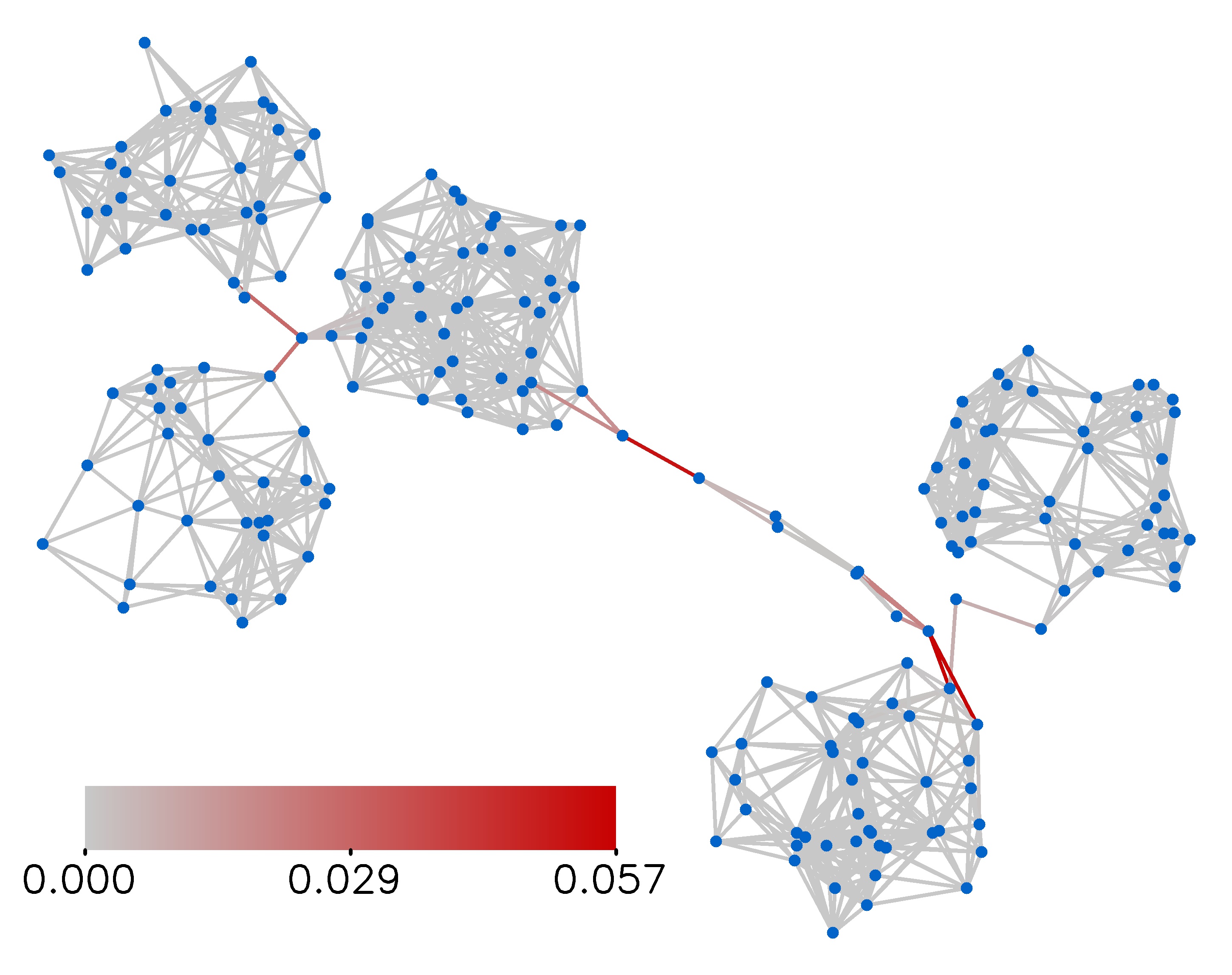}\label{fig:compBench}}
\subfloat[\changed{Approximation for mode-independent computation for computing $\breve{\mathbf{r}}_\text{aprox-I}$.}]{\includegraphics[width = .33\linewidth]{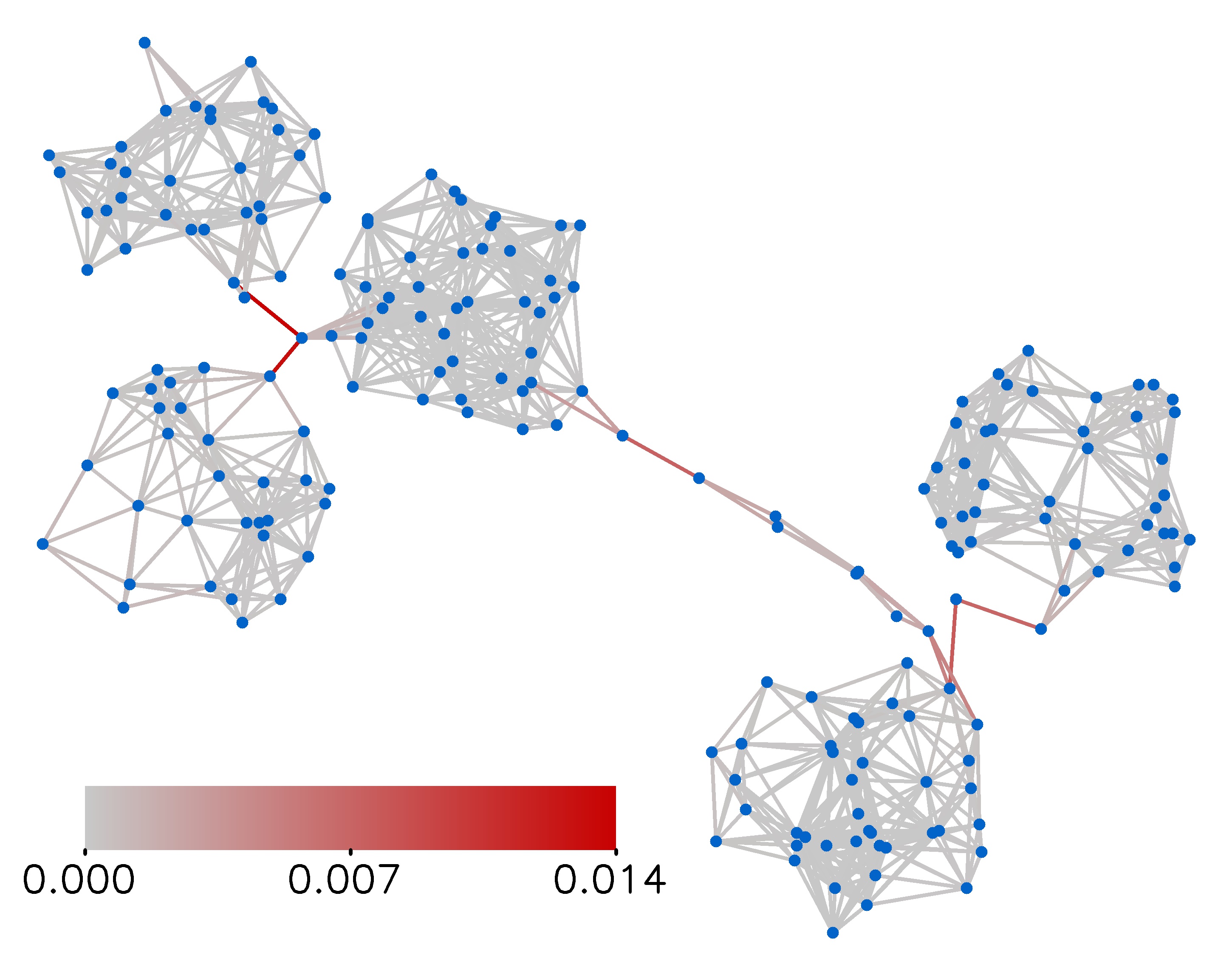}\label{fig:compDirect}}
\subfloat[\changed{Approximation for distributed computation for computing $\breve{\mathbf{r}}_\text{aprox-II}$.}]{\includegraphics[width = .33\linewidth]{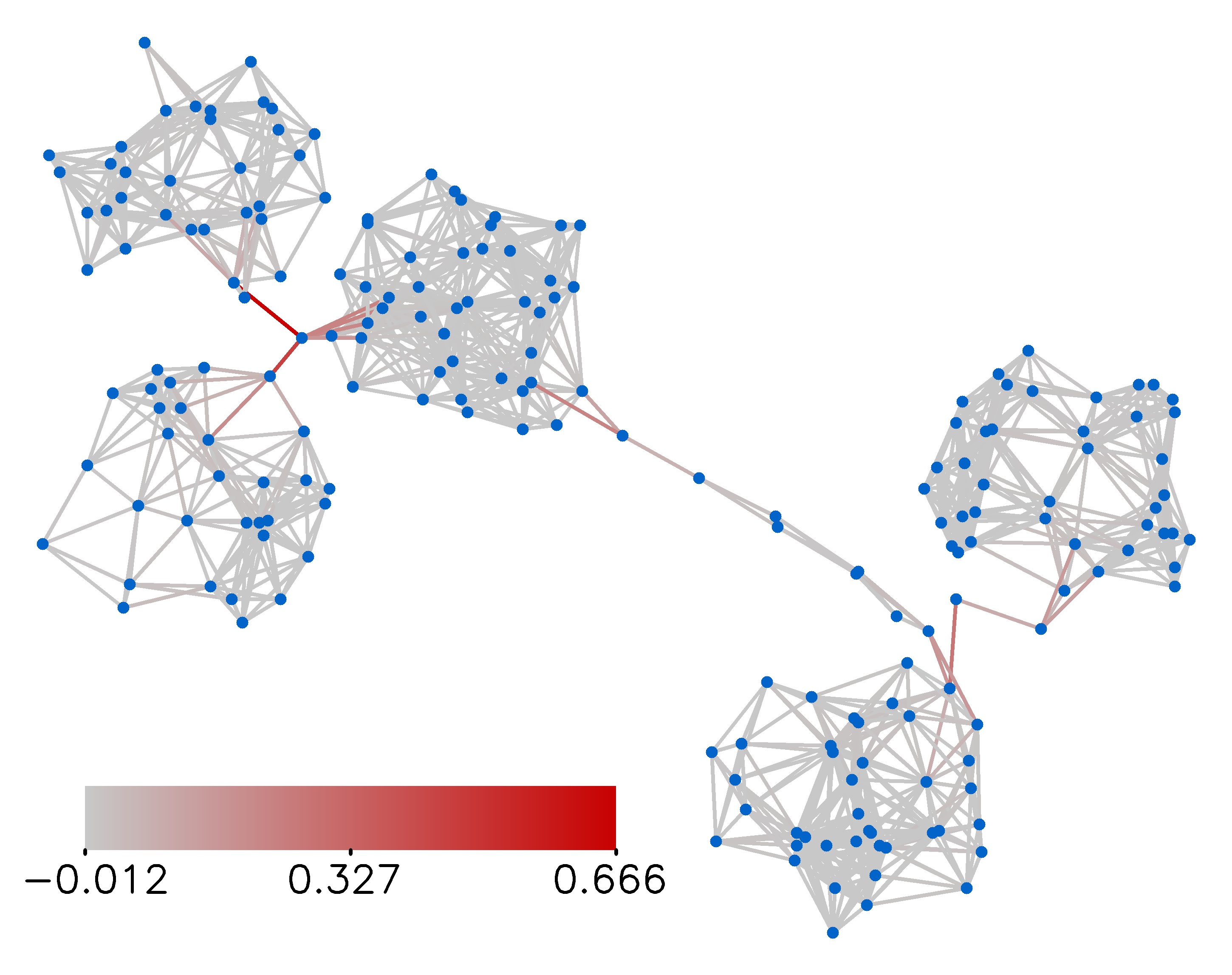}\label{fig:compDistributed}}
\caption{A qualitative comparison of accuracy in computing $\breve{\mathbf{r}}$. 
$\epsilon = 0.1$ is chosen in the approximation for mode-independent computation.
$p = \frac{N}{2}$ is chosen for the distributed approximation approach.
The resulting $L2$ norms$: \|\breve{\mathbf{r}}_\text{aprox-I} - \breve{\mathbf{r}}\| = 0.095$ and $\|\breve{\mathbf{r}}_\text{aprox-II} - \breve{\mathbf{r}}\| = 1.076$.
}
\label{fig:accuracyComparison}
\end{figure*}

Figure \ref{fig:accuracyComparison} shows the \changed{results obtained on a simple graph using} these three methods.
\changed{As can be seen from the value of $\|\breve{\mathbf{r}}_\text{aprox-I} - \breve{\mathbf{r}}\|$, the} result 
\changedF{using} the approximation methods for mode-independent computation is very close to the \changed{direct computation}.
The advantage of the approximation is that it can be \changed{computed in a} decentralized manner, which is especially useful when dealing with
\changedF{a large scale graph whose nodes have limited computational resources and communicate only with edge-connected neighbors.}

\subsection{Pandemic simulation}

\begin{figure}[h]
\centering
\includegraphics[width=0.95\columnwidth]{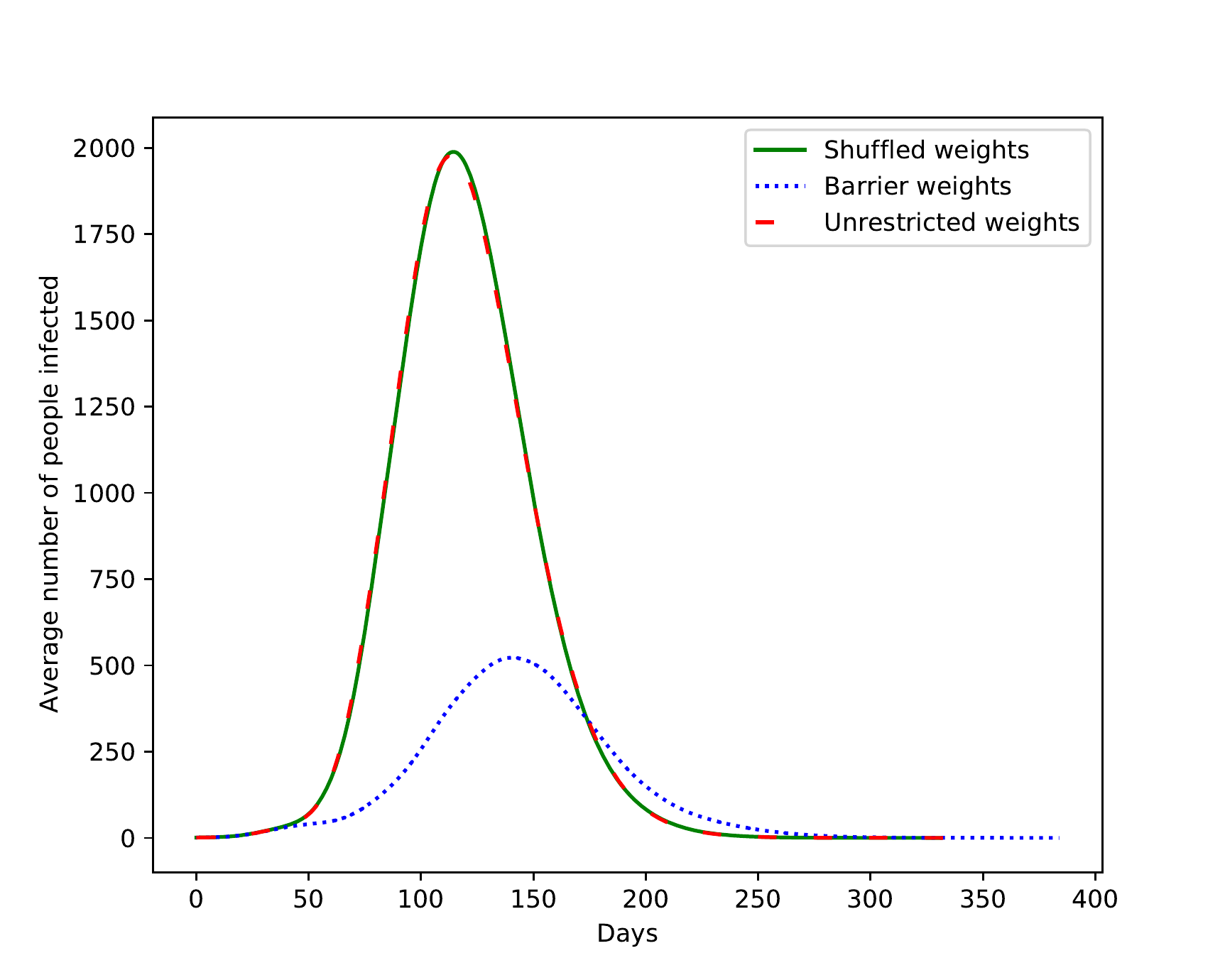} 
\vspace{-0.1in} \caption{Simulation results of the number of people infected using \changed{unrestricted weights}, barrier weights (inter-cluster restrictions) and shuffled weights. For each day, the number of people infected is averaged over $1000$ simulations (each simulation has \changed{randomized} initial patient-zero). 
}\vspace{-0.1in} 
\label{fig:pandemicCurve}
\end{figure}

\begin{figure*}
\centering
\subfloat[Day 1]{\includegraphics[width = .24\linewidth]{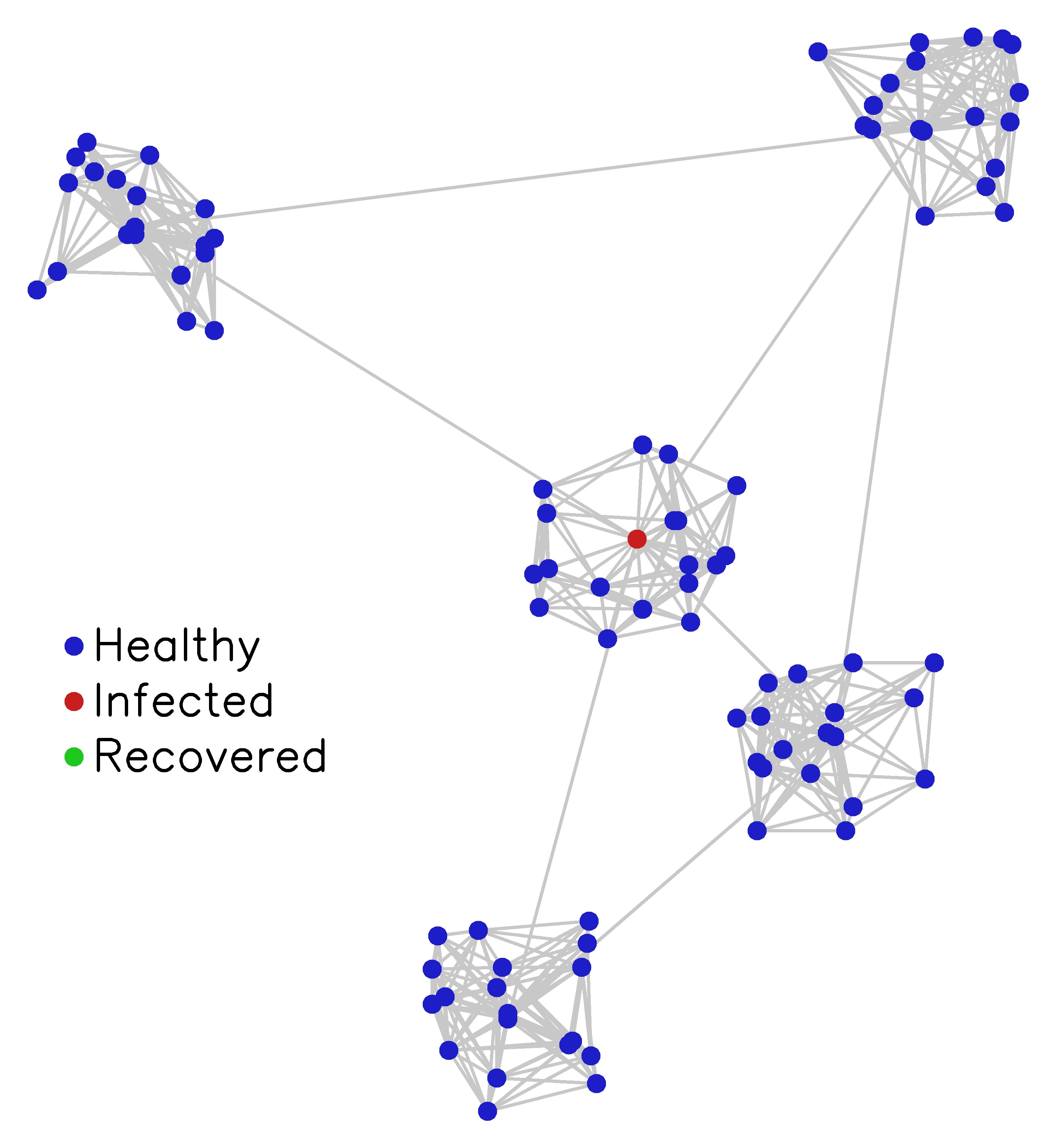}}
\subfloat[Day 30]{\includegraphics[width = .24\linewidth]{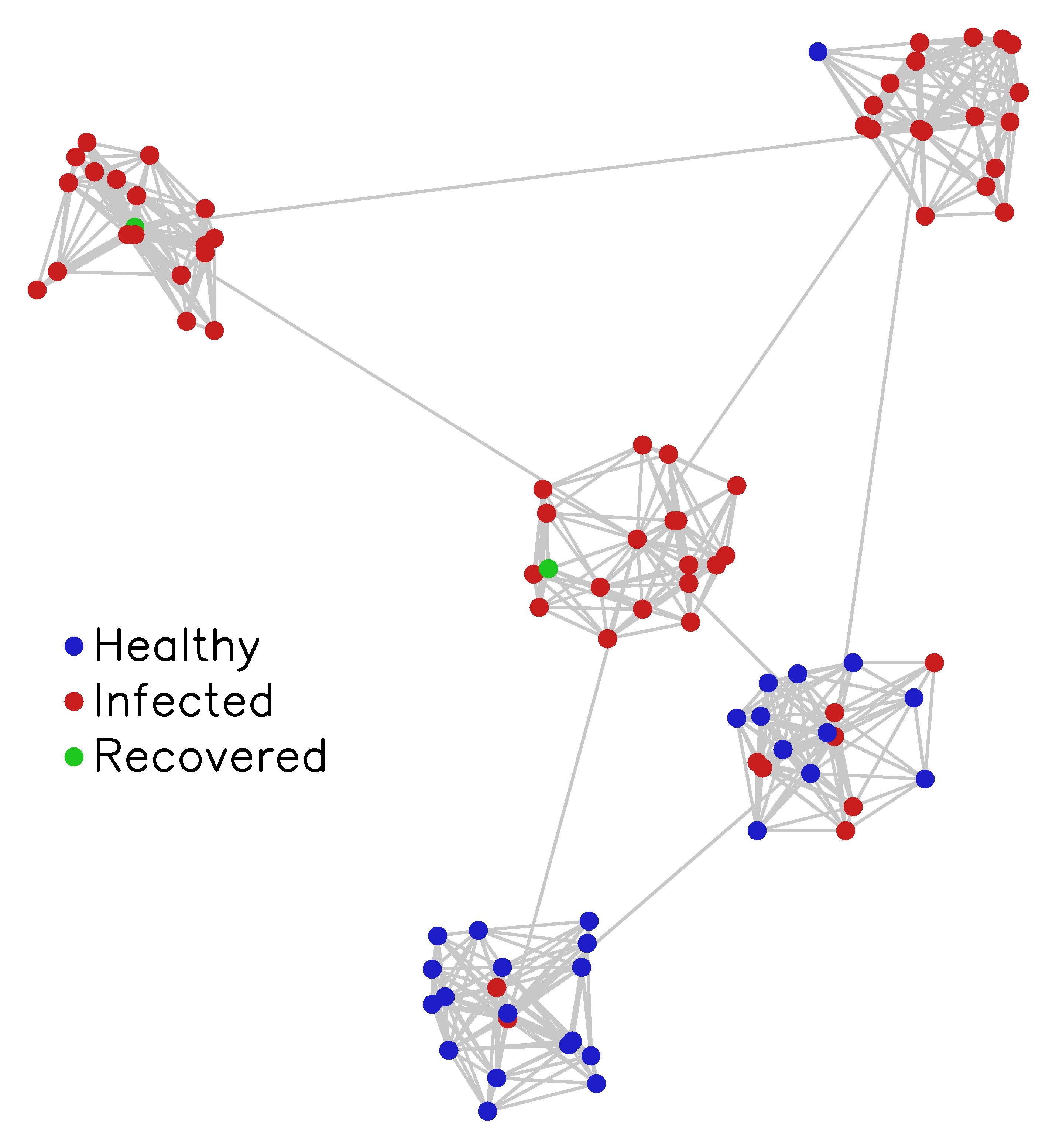}}
\subfloat[Day 60]{\includegraphics[width = .24\linewidth]{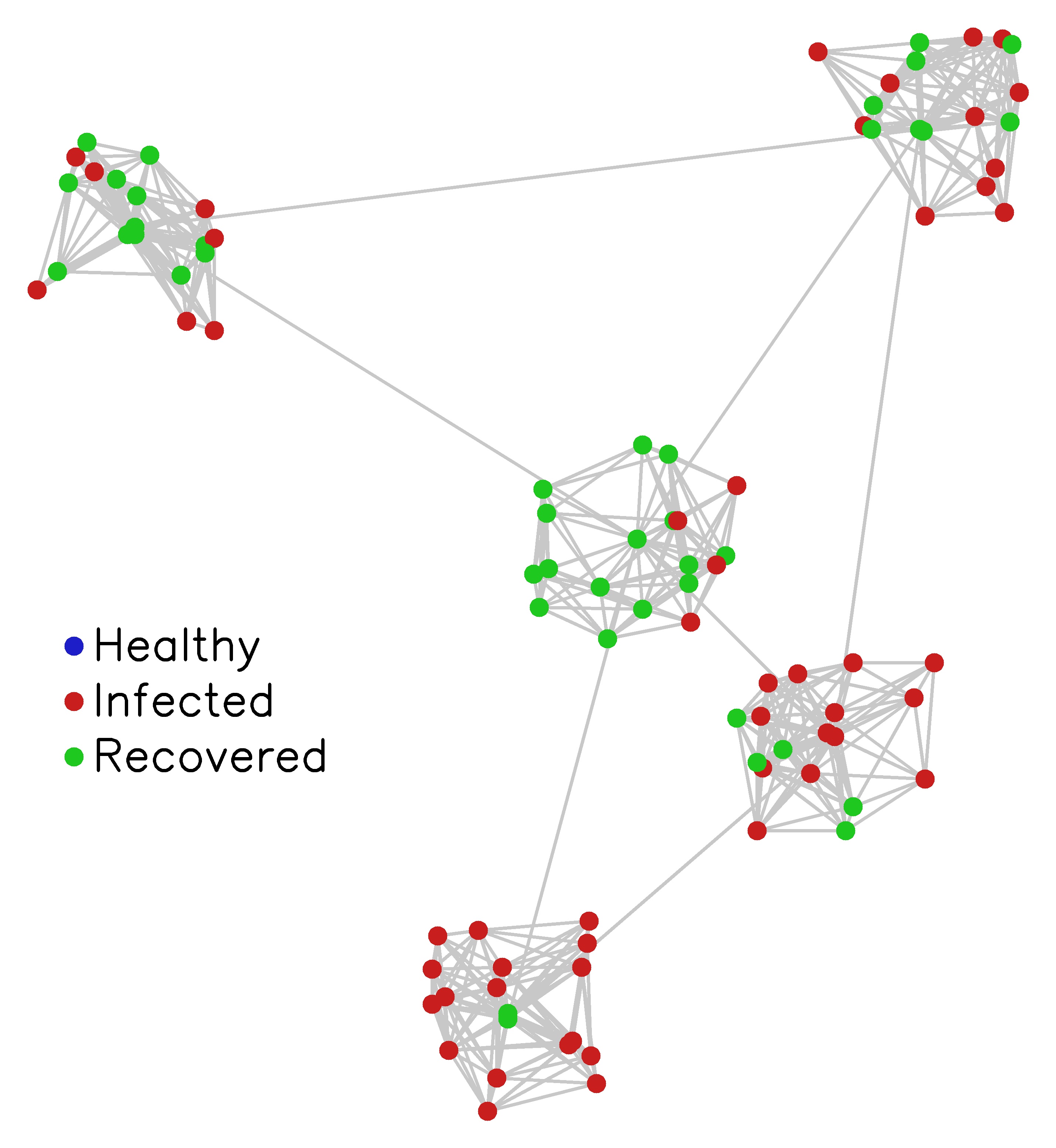}}
\subfloat[Day 90]{\includegraphics[width = .24\linewidth]{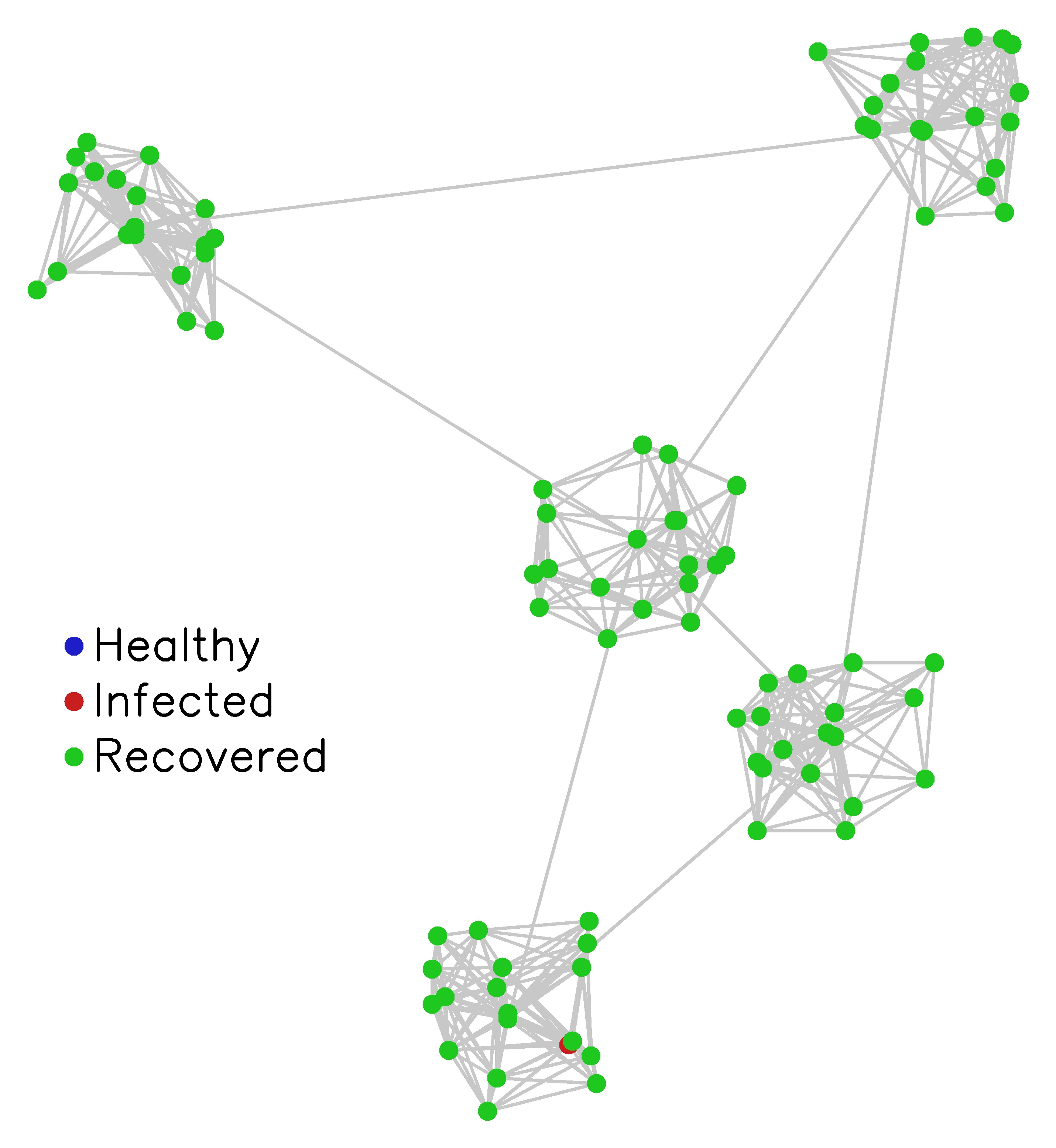}}\\
\subfloat[Day 1] {\includegraphics[width = .24\linewidth]{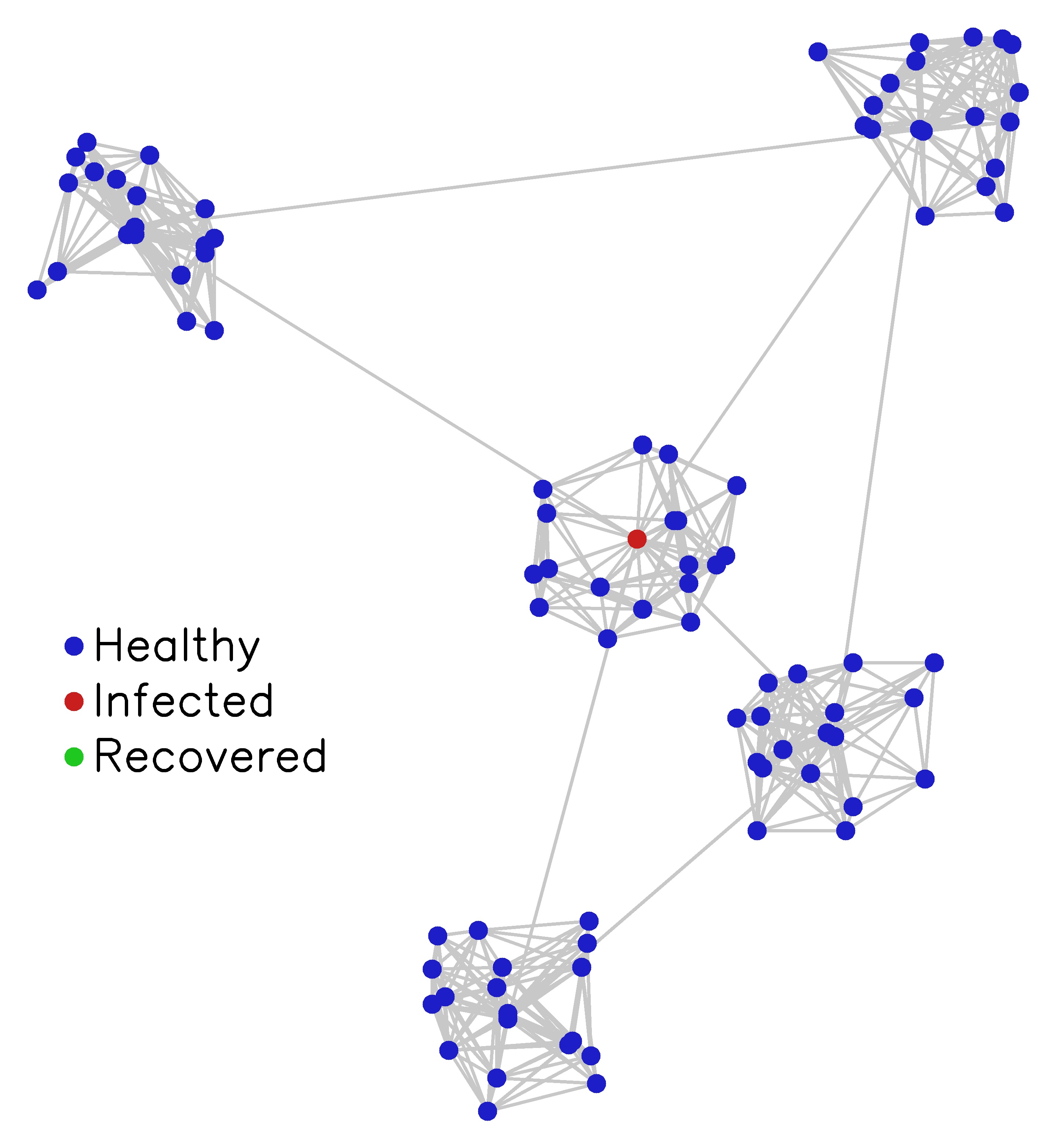}}
\subfloat[Day 45] {\includegraphics[width = .24\linewidth]{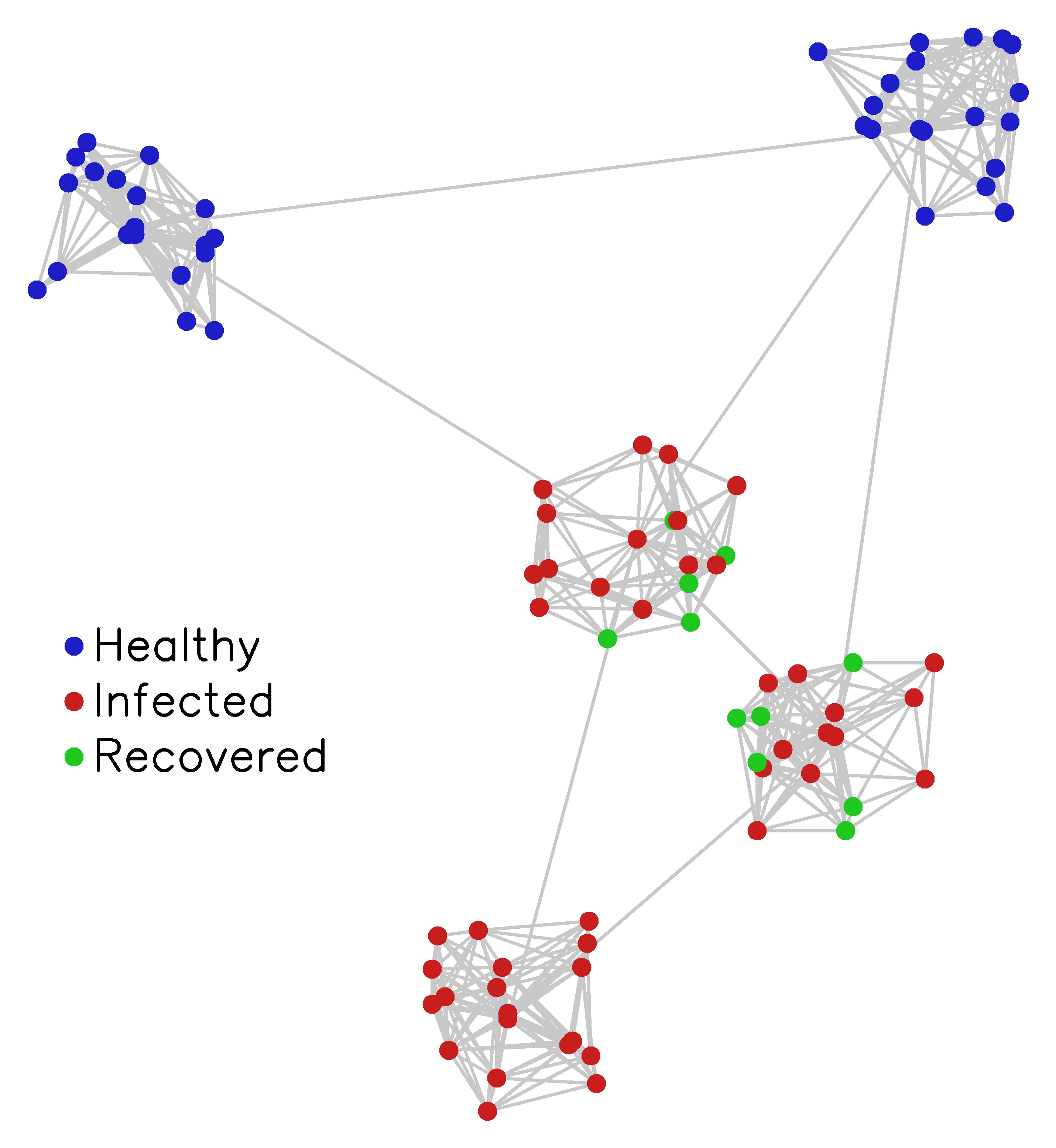}}
\subfloat[Day 90] {\includegraphics[width = .24\linewidth]{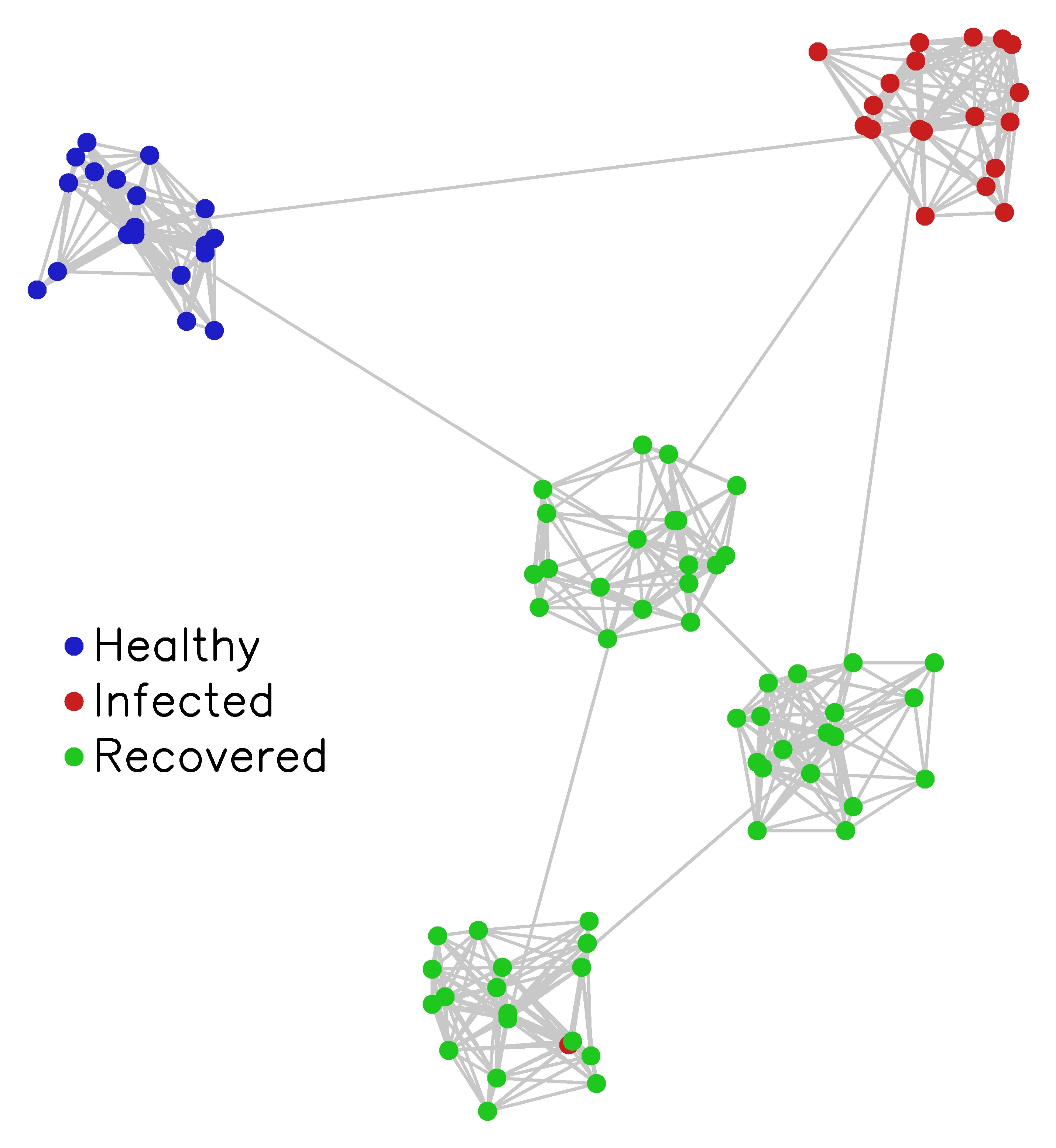}}
\subfloat[Day 135] {\includegraphics[width = .24\linewidth]{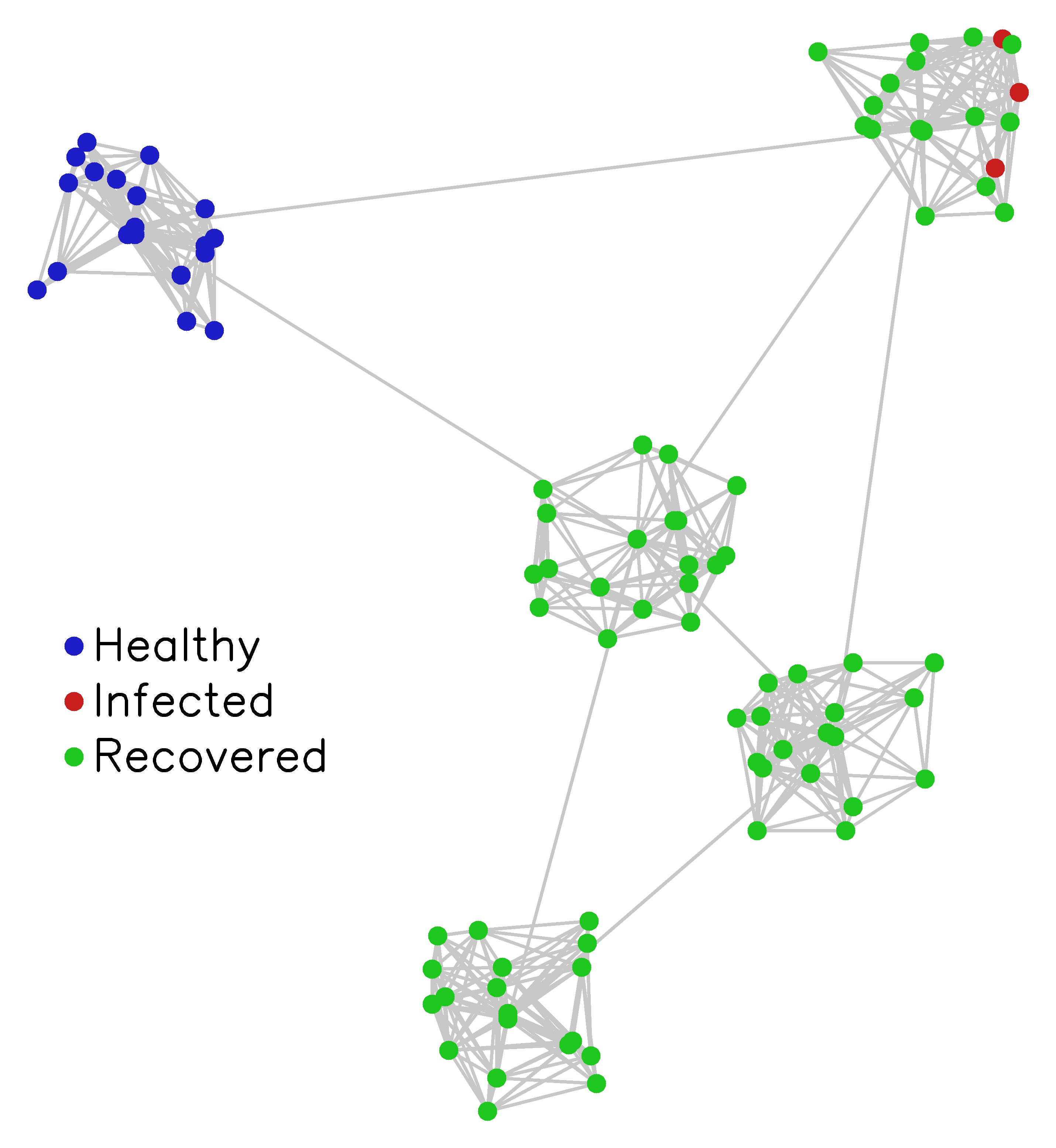}}
\caption{Snapshots of \changedF{the viral spread} simulation. (a), (b), (c) and (d) are snapshots from the simulation without social restrictions; (e), (f), (g) and (h) are snapshots from the simulation with inter-cluster social restrictions \changed{using barrier weights}.}
\label{fig:pandemicSnapshots}
\end{figure*}

As of writing this paper, our world is undergoing the \textit{COVID-19} pandemic.
Social restriction \changedB{has been shown to be} a proven way of slowing the spread of the virus \changedF{and providing time for treatment and eradication measures}~\changedB{\cite{covid-social-dist}}.
From our previous simulations, we can already see how effective the graph modal barriers are \changedF{for} slowing transmission across a network.
We thus apply the method developed in this paper to a pandemic model that can lower the probability of virus transmission along inter-cluster connections.
In this context a \emph{cluster} can be a representation of a closely-knit community of individuals (\changedF{e.g,} with a vertex representing a household and the flow representing interaction/travel of individuals).
Thus, given limited resources for imposing restrictions on travel or social interactions, identifying the most critical edges that \changedF{predominate} in transmitting the virus from one community to another (as is done by the proposed modal barrier method) is essential. 

For the simulation, 
the daily contagious probability of the $k$-th edge, $e_k = (v_i,v_j)$ (\emph{i.e.}, the probability that $v_j$ will get infected if $v_i$ is infected), is assumed to be
\begin{equation}
P_{k} = w'_{k} P_{a} 
\end{equation}
where $P_{a}$ is a constant which is set to be $0.03$.
\changedF{The lower} the weight on the $k$-th edge, \changedF{the lower} is the probability of transmission.

\changedF{Initially} we start with only one infected vertex (``patient zero'').
Once infected, a vertex will turn contagious in a range of $3$ to $5$ days, and will stay infected for a range of $15$ to $70$ days.
We compare the results with uniform unrestricted weights, $w^U_k (=1~\forall k)$, barriered weights, $w^B_k$, and shuffled weights, $w^S_k$.

%
Results for the ego-Facebook network are shown in Figure \ref{fig:pandemicCurve}. Figure~\ref{fig:pandemicSnapshots} shows some snapshots on the spread of the infection across the example \changedF{graph} shown in Figure \ref{fig:multiLinkEdgeValues}.

\section{Conclusion} \label{sec:conclusion}

\changed{In this paper we used the eigenvectors of \changedF{the} graph Laplacian to construct \changedF{modal barrier weights} on the edges of a \changedF{graph} to lower the transmission rate on \changedF{those critical} edges. Our method relies on using the eigenvectors to \changedF{identify} clusters in the \changedF{graph} and hence the edges that form weak inter-cluster connections form the basis of preventing inter-cluster transmission. We provided theoretical foundations \changedF{for} the proposed method and developed \changedF{an} approximation that \changedF{enabled low complexity} distributed computation of the barrier weights. We demonstrated that the proposed method is more effective in slowing transmission across a network compared to a method that uses the same amount of \changed{resources} in lowering the edge weights but does so in a randomized manner. We \changedF{discussed} the applicability of the proposed method in deciding how to impose social distancing and travel restrictions in preventing transmission of infection across communities.}

\changed{

\appendix

\subsection{Proof of Proposition~\ref{prop:q-conn-dist}} \label{appendix:prop-q-conn-dist-proof}

We use the same notations as in Definition~\ref{def:q-conn}, Definition~\ref{def:relout} and Proposition~\ref{prop:q-conn-dist}.



\vspace{0.5em}
\begin{lemma}\label{lemma:reloput-2-norm}
	\[ \sqrt{2} ~ \mathrm{relout}_G(G_j) ~\geq~ \|(L-\overline{L}) \overline{\mathbf{u}}_j\|_2 \]
\end{lemma}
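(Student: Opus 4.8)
The plan is to recognize the matrix $L - \overline{L}$ as a graph Laplacian in its own right. Since $\overline{G}$ retains exactly the intra-cluster edges of $G$, I would write $A = \overline{A} + C$, where $C$ collects only the weights of the inter-cluster (``cut'') edges, and correspondingly $D = \overline{D} + \Delta$, with $\Delta$ the diagonal matrix whose $i$-th entry is the total cut-weight incident on $v_i$, so that $\Delta_{ii} = \sum_l C_{il}$. This gives $L - \overline{L} = \Delta - C$, which is precisely the Laplacian of the subgraph consisting only of the inter-cluster edges.

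First I would compute the action of $L - \overline{L}$ on $\overline{\mathbf{u}}_j$ componentwise, splitting the vertices into those inside $G_j$ and those outside. The essential simplification is that $\overline{\mathbf{u}}_j$ is constant (equal to $1/\sqrt{|\mathcal{V}(G_j)|}$) on $G_j$ and zero elsewhere, while $C$ carries no intra-cluster entries; hence for $v_i \in G_j$ the off-diagonal sum $\sum_l C_{il}(\overline{\mathbf{u}}_j)_l$ vanishes, because its nonzero terms land on vertices outside $G_j$ where $\overline{\mathbf{u}}_j$ is zero, leaving only the diagonal term. A short calculation then shows that the $i$-th entry of $(L-\overline{L})\overline{\mathbf{u}}_j$ equals $\frac{1}{\sqrt{|\mathcal{V}(G_j)|}}$ times the outgoing weight at $v_i$ when $v_i \in G_j$, and equals $-\frac{1}{\sqrt{|\mathcal{V}(G_j)|}}$ times the weight of edges from $v_i$ into $G_j$ when $v_i \notin G_j$.

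Next I would square and sum these entries. Writing $W_j$ for the total cut-weight leaving $G_j$, the contributions from the two vertex groups are $\frac{1}{|\mathcal{V}(G_j)|}\sum_{v_i\in G_j}(o_i)^2$ and $\frac{1}{|\mathcal{V}(G_j)|}\sum_{v_i\notin G_j}(p_i)^2$, where $o_i$ and $p_i$ are the nonnegative outgoing and incoming cut-weights identified above. Both groups consist of nonnegative numbers, and each sums to $W_j$ (the incoming total equals the outgoing total by symmetry of $A$). Applying the elementary inequality $\sum_i a_i^2 \le \big(\sum_i a_i\big)^2$ for nonnegative $a_i$ to each group separately bounds each sum by $W_j^2$, so altogether $\|(L-\overline{L})\overline{\mathbf{u}}_j\|_2^2 \le \frac{2W_j^2}{|\mathcal{V}(G_j)|} = 2\big(\mathrm{relout}_G(G_j)\big)^2$, using $\mathrm{relout}_G(G_j) = W_j/\sqrt{|\mathcal{V}(G_j)|}$. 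Taking square roots yields the claim.

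The computation is routine; the only real content is the componentwise identification of $(L-\overline{L})\overline{\mathbf{u}}_j$ together with the observation that the $\sqrt{2}$ factor arises precisely from splitting the vertices into the two nonnegative groups and applying superadditivity of squares to each. The main obstacle is thus careful bookkeeping — keeping straight which edge weights feed into which entry, and confirming that the incoming and outgoing cut-weights each total $W_j$.
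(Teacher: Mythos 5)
Your proof is correct and follows essentially the same route as the paper's: a componentwise evaluation of $(L-\overline{L})\overline{\mathbf{u}}_j$ split over vertices inside and outside $G_j$, the bound $\sum_h a_h^2 \le \big(\sum_h a_h\big)^2$ for nonnegative terms applied to each group, and symmetry of $A$ to equate the incoming and outgoing cut totals, yielding the factor $\sqrt{2}$. Your framing of $L-\overline{L} = \Delta - C$ as the Laplacian of the cut-edge subgraph is a pleasant conceptual shortcut to the same entries the paper computes via $D-\overline{D}$ and $A-\overline{A}$, but the argument is otherwise identical.
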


\vspace{0.5em}
\noindent\emph{Lemma~\ref{lemma:reloput-2-norm}} \!\!\!\!\!\!\!\!\!\!\!\!
\begin{proof}
Suppose $D_k$ and $\overline{D}_k$ are the degrees of the vertex $v_k\in \mathcal{V}(G_j)$ in the graphs $G$ and $\overline{G}$ respectively.
Then $D_k$ and $\overline{D}_k$ are equal iff all the neighbors of $v_k$ are in $G_j$. Otherwise $D_{k} - \overline{D}_{k}$ is the net outgoing degree of the vertex $v_k$ from the sub-graph $G_j$. That is, if $v_k\in \mathcal{V}(G_j)$,
\begin{eqnarray} \label{eq:D-diff}
D_{k} - \overline{D}_{k} & = & 
\sum_{\{l\,|\, v_l\in \mathcal{V}(G_i), i\neq j\}} A_{kl}
\end{eqnarray}

Suppose $A$ and $\overline{A}$ are the adjacency matrices of $G$ and $\overline{G}$.
An edge $(v_k,v_l)$ exists (and have the same weight, \emph{i.e.}, $A_{kl} = \overline{A}_{kl}$) in both $G$ and $\overline{G}$ iff $v_k$ and $v_l$ belong to the same sub-graph. Otherwise $\overline{A}_{kl}=0$ (the edge is non-existent in $\overline{G}$). Thus,
\begin{equation} \label{eq:A-diff}
A_{kl} - \overline{A}_{kl} ~=~ \left\{ \begin{array}{l} A_{kl}, ~\text{if $v_k \in \mathcal{V}(G_j), v_l \in \mathcal{V}(G_i),i\neq j$} \\ 0, ~~\text{otherwise.} \end{array} \right.
\end{equation}


Next we consider the vector $\overline{\mathbf{u}}_j$ (for $j=0,1,\cdots,q-1$), which by definition is non-zero and uniform only on vertices in the sub-graph $G_j$.
Let $\overline{u}_{lj}$ be the $l$-th element of the unit vector $\overline{\mathbf{u}}_j$. Thus,
\begin{equation}
\overline{u}_{lj} = \left\{ \begin{array}{l} \frac{1}{\sqrt{|\mathcal{V}(G_j)|}}, ~~\text{if $v_l \in \mathcal{V}(G_j)$} \\ 0, ~~\text{otherwise} \end{array} \right.
\end{equation}

\noindent
Thus the $k$-th element of the vector $(L-\overline{L}) \overline{\mathbf{u}}_j$,
{\small \begin{align} 
 & [(L-\overline{L}) \overline{\mathbf{u}}_j]_k \nonumber \\
 = & \displaystyle \sum_{l} (D_{kl} - A_{kl} - \overline{D}_{kl} + \overline{A}_{kl}) \overline{u}_{lj} \nonumber \\
 = & \displaystyle (D_{k} - \overline{D}_{k}) \overline{u}_{kj} - \sum_{l} (A_{kl} - \overline{A}_{kl}) \overline{u}_{lj} \nonumber \\
 		& \qquad\qquad\qquad\qquad\text{\small(since $D$ and $\overline{D}$ are diagonal marices)} \nonumber \\
 = & \displaystyle \left( \frac{1}{\sqrt{|\mathcal{V}(G_j)|}} \left\{ \begin{array}{l}  (D_{k} - \overline{D}_{k}), ~\text{if $v_k \in \mathcal{V}(G_j)$} \nonumber \\ 0, ~\text{otherwise} \end{array} \right.\right) \nonumber \\
 		& \displaystyle \qquad\qquad - \left( \frac{1}{\sqrt{|\mathcal{V}(G_j)|}} \sum_{\{l\,|\,v_l \in \mathcal{V}(G_j)\}} (A_{kl} - \overline{A}_{kl}) \right) \nonumber \\
		& \qquad\qquad\qquad\qquad\qquad\qquad\text{\small(using the definition of $\overline{u}_{lj}$)} \nonumber \\
= & \displaystyle \frac{1}{\sqrt{|\mathcal{V}(G_j)|}} \left( \left\{ \begin{array}{l}  \displaystyle \sum_{\{l\,|\, v_l\in \mathcal{V}(G_i), i\neq j\}} \!\! A_{kl}, ~~~\text{if $v_k \in \mathcal{V}(G_j)$} \nonumber \\ 0, \qquad\qquad\text{otherwise} \end{array} \right.\right. \nonumber \\
		& \displaystyle \qquad\qquad - \left. \sum_{\{l\,|\,v_l \in \mathcal{V}(G_j)\}} \left\{ \begin{array}{l} A_{kl},~~\text{if $v_k\in \mathcal{V}(G_i), i\neq j$} \nonumber \\ 0, ~\text{otherwise.} \end{array} \right. \right) \nonumber \\
		& \qquad\qquad\qquad\qquad\qquad\qquad\qquad\text{\small(using \eqref{eq:D-diff} and \eqref{eq:A-diff})} \nonumber \\
= & \displaystyle \frac{1}{\sqrt{|\mathcal{V}(G_j)|}} 
		\left\{ \begin{array}{l}  
				\displaystyle \sum_{\{l\,|\, v_l\in \mathcal{V}(G_i), i\neq j\}} \!\!\!\! A_{kl}, ~~~\text{if $v_k \in \mathcal{V}(G_j)$} \\
				\displaystyle -\sum_{\{l\,|\,v_l \in \mathcal{V}(G_j)\}} A_{kl}, ~~~~~\text{if $v_k\in \mathcal{V}(G_i), i\neq j$} 
		 \end{array} \right.
%
\end{align}}

\noindent
Thus, 
{\small \begin{align} 
& \|(L-\overline{L}) \overline{\mathbf{u}}_j\|_2^2  \nonumber \\
= & \displaystyle \frac{1}{|\mathcal{V}(G_j)|}~~
		\left(
			\displaystyle \sum_{\{k\,|\,v_k \in \mathcal{V}(G_j)\}} \left( \sum_{\{l\,|\, v_l\in \mathcal{V}(G_i), i\neq j\}} \!\!\!\! A_{kl} \right)^2 \right. \nonumber \\ &
			\quad\qquad\qquad + \left. \displaystyle \sum_{\{k\,|\,v_k\in \mathcal{V}(G_i), i\neq j\}} \left( \sum_{\{l\,|\,v_l \in \mathcal{V}(G_j)\}} A_{kl} \right)^2
		\right) \nonumber \\
\leq & \displaystyle \frac{1}{|\mathcal{V}(G_j)|} \left( 
			\left( \sum_{\{k,l \,|\, v_k \in \mathcal{V}(G_j), v_l\in \mathcal{V}(G_i), i\neq j \}} \!\!\!\! A_{kl} \right)^2 \right. \nonumber \\ &
			\quad\qquad\qquad\quad\qquad\qquad + \left. \left(\displaystyle \sum_{\{k,l\,|\, v_k\in \mathcal{V}(G_i), v_l \in \mathcal{V}(G_j), i\neq j \}} \!\!\!\! A_{kl} \right)^2 
		\right) \nonumber \\
		& \qquad\qquad\qquad\text{\small(since for postive $\alpha_h$, $\sum_h \alpha_h^2 \leq (\sum_h \alpha_h)^2$)} \nonumber \\
= & \displaystyle \frac{2}{|\mathcal{V}(G_j)|} \left( \sum_{\{k,l \,|\, v_k \in \mathcal{V}(G_j), v_l\in \mathcal{V}(G_i), i\neq j \}} \!\!\!\! A_{kl} \right)^2 \nonumber \\
& \qquad\qquad\qquad\qquad\qquad\qquad\text{\small(since $A_{kl}=A_{lk}$.)} \nonumber \\
= & 2 ~\left( \mathrm{relout}_G(G_j) \right)^2
\end{align}}
\end{proof}


\vspace{0.5em}
\noindent\emph{Proposition~\ref{prop:q-conn-dist}} \!\!\!\!\!\!\!\!\!\!\!\!
\begin{proof}
	For any 
	$j \in \{0,1,\cdots,q-1\}$,
	{\small \begin{eqnarray}
	 \|(L-\overline{L}) \overline{\mathbf{u}}_j\|_2^{\changedC{2}} & = & \|L \overline{\mathbf{u}}_j \|_2^{\changedC{2}} ~~\text{\small (since $\overline{\mathbf{u}}_j$ is in the null-space of $\overline{L}$)} \nonumber \\
	 & = & \left\|L \sum_{l=0}^{n-1} (\mathbf{u}_l^\mathsf{T} \overline{\mathbf{u}}_j) \mathbf{u}_l \right\|_2^{\changedC{2}} \nonumber \\ & & \quad\text{\small ($\overline{\mathbf{u}}_j$ expressed in basis $\{\mathbf{u}_0,\mathbf{u}_1,\cdots,\mathbf{u}_n\}$)} \nonumber \\
	 & = & \left\|\sum_{l=0}^{n-1} (\mathbf{u}_l^\mathsf{T} \overline{\mathbf{u}}_j) \lambda_l \mathbf{u}_l \right\|_2^{\changedC{2}} ~~\text{\small (since $L \mathbf{u}_l = \lambda_l \mathbf{u}_l$)} \nonumber \\
	& = & \changedC{ \sum_{l=0}^{n-1} |\mathbf{u}_l^\mathsf{T} \overline{\mathbf{u}}_j|^2 \lambda_l^2 } ~~\changedC{\text{\small (since $\{\mathbf{u}_l\}_{l=0,1,\cdots,n-1}$}} \nonumber \\ & & \qquad\qquad\qquad\quad \changedC{\text{\small forms an orthogonal basis.)} } \nonumber \\
	& = & \changedC{ \sum_{k=0}^{q-1} \lambda_k^2 ~|\mathbf{u}_k^\mathsf{T} \overline{\mathbf{u}}_j|^2 ~+~ \sum_{k=q}^{n-1} \lambda_k^2 ~|\mathbf{u}_k^\mathsf{T} \overline{\mathbf{u}}_j|^2 } 
%
	 \label{eq:fundamental-equality}
	\end{eqnarray}}
\changedC{From the above, 
}
{\small \begin{eqnarray}
\|(L-\overline{L}) \overline{\mathbf{u}}_j\|_2^2 
			& \geq & \sum_{k=q}^{n-1} \lambda_k^2 ~| \overline{\mathbf{u}}_j^\mathsf{T} \mathbf{u}_k |^2 \nonumber \\
			& \geq & \left( \min_{k \in \{q,q+1,\cdots,n-1\}} \lambda_k^2 \right) \left( \sum_{k=q}^{n-1} | \overline{\mathbf{u}}_j^\mathsf{T} \mathbf{u}_k |^2 \right) \nonumber \\
	& = & \changedC{\lambda_q^2}  \sum_{k=q}^{n-1} | \overline{\mathbf{u}}_j^\mathsf{T} \mathbf{u}_k |^2  
	\end{eqnarray}}
Thus, using the above and Lemma~\ref{lemma:reloput-2-norm},
\begin{equation*}
2 \left( \mathrm{relout}_G(G_j) \right)^2 ~~\geq~~ \changedC{\lambda_q^2}  ~\sum_{k=q}^{n-1} | \overline{\mathbf{u}}_j^\mathsf{T} \mathbf{u}_k |^2  
\end{equation*}
Since the above is true for any $j \in \{0,1,\cdots,q-1\}$, 
{\small \begin{equation}\begin{array}{rl}
& 2 \displaystyle \sum_{j=0}^{q-1} \left( \mathrm{relout}_G(G_j) \right)^2
		 ~\geq~ ~\changedC{\lambda_q^2}  ~ \sum_{j=0}^{q-1} \sum_{k=q}^{n-1} | \overline{\mathbf{u}}_j^\mathsf{T} \mathbf{u}_k |^2   \\
\Rightarrow & \displaystyle \frac{1}{q(n-q)}  ~ \sum_{j=0}^{q-1} \sum_{k=q}^{n-1} | \overline{\mathbf{u}}_j^\mathsf{T} \mathbf{u}_k |^2 \nonumber \\ & \qquad\qquad\qquad
		 ~~\leq~~ \frac{\displaystyle 2}{\displaystyle \lambda_q^2 \changedC{(n-q)}} ~\displaystyle \frac{1}{q} \sum_{j=0}^{q-1} \left( \mathrm{relout}_G(G_j) \right)^2 \\
\Rightarrow & \mathrm{modaldist}_{G}(\widetilde{\mathcal{G}}) 
~~\leq~~ \changedC{\frac{
	\displaystyle 1}{\displaystyle \lambda_q} \sqrt{\displaystyle \frac{2}{n-q}} } ~\mathrm{avgrelout}_{G}(\widetilde{\mathcal{G}})
\end{array}\nonumber \end{equation}}
\end{proof}


\vspace{1em}
\subsection{Proof of Proposition~\ref{prop:lambda-q-1-upper-bound}}  \label{appendix:prop-lambda-q-1-upper-bound}

\changedC{
\noindent\emph{Proposition~\ref{prop:lambda-q-1-upper-bound}} \!\!\!\!\!\!\!\!\!\!\!\!
\begin{proof}
Since \changedC{\eqref{eq:fundamental-equality}} is true for any $j \in \{0,1,\cdots,q-1\}$,
\begin{align}
& \sum_{j=0}^{q-1} \|(L - \overline{L}) \overline{\mathbf{u}}_j\|_2^2 \nonumber \\
=~ & \sum_{k=0}^{q-1} \lambda_k^2 \sum_{j=0}^{q-1} |\mathbf{u}_k^\mathsf{T} \overline{\mathbf{u}}_j|^2 
	~+~ \sum_{k=q}^{n-1} \lambda_k^2 \sum_{j=0}^{q-1} |\mathbf{u}_k^\mathsf{T} \overline{\mathbf{u}}_j|^2  \nonumber \\
=~ & \sum_{k=0}^{q-1} \lambda_k^2 \left( 1 - \sum_{j=q}^{n-1} |\mathbf{u}_k^\mathsf{T} \overline{\mathbf{u}}_j|^2 \right) 
	~+~ \sum_{k=q}^{n-1} \lambda_k^2 \sum_{j=0}^{q-1} |\mathbf{u}_k^\mathsf{T} \overline{\mathbf{u}}_j|^2  \nonumber \\
=~ & \sum_{k=0}^{q-1} \lambda_k^2  
	~-~ \sum_{k=0}^{q-1} \lambda_k^2 \sum_{j=q}^{n-1} |\mathbf{u}_k^\mathsf{T} \overline{\mathbf{u}}_j|^2 
	~+~ \sum_{k=q}^{n-1} \lambda_k^2 \sum_{j=0}^{q-1} |\mathbf{u}_k^\mathsf{T} \overline{\mathbf{u}}_j|^2  \nonumber \\
\geq~ & \lambda_{q-1}^2
	~-~ \lambda_{q-1}^2 \sum_{k=0}^{q-1} \sum_{j=q}^{n-1} |\mathbf{u}_k^\mathsf{T} \overline{\mathbf{u}}_j|^2 
	~+~ \lambda_q^2 \sum_{k=q}^{n-1} \sum_{j=0}^{q-1} |\mathbf{u}_k^\mathsf{T} \overline{\mathbf{u}}_j|^2  \nonumber \\
	& \qquad\qquad \text{\small (since for $\alpha_h,\beta_h\geq 0$,~~ $\sum_h \alpha_k \geq \max_h \alpha_h$,} \nonumber \\ & \qquad\qquad\qquad\qquad\qquad \text{\small $\sum_h \alpha_h \beta_h \leq \max_h \alpha_h \sum_h \beta_h$, and} \nonumber \\ & \quad\qquad\qquad\qquad\qquad\qquad\qquad \text{\small $\sum_h \alpha_h \beta_h \geq \min_h \alpha_h \sum_h \beta_h$.)} \nonumber \\
=~ & \lambda_{q-1}^2 \left(1 - \sum_{k=0}^{q-1} \sum_{j=q}^{n-1} |\mathbf{u}_k^\mathsf{T} \overline{\mathbf{u}}_j|^2 \right)
	~+~ \lambda_q^2 \sum_{k=q}^{n-1} \sum_{j=0}^{q-1} |\mathbf{u}_k^\mathsf{T} \overline{\mathbf{u}}_j|^2  \nonumber \\
\end{align}
Using Lemma~\ref{lemma:reloput-2-norm},
\begin{align}
 & 2 \sum_{j=0}^{q-1} \left( \mathrm{relout}_G(G_j) \right)^2 ~~\geq~~ \nonumber \\
	 & \qquad \lambda_{q-1}^2 \left(1 - \sum_{k=0}^{q-1} \sum_{j=q}^{n-1} |\mathbf{u}_k^\mathsf{T} \overline{\mathbf{u}}_j|^2 \right)
	 ~+~ \lambda_q^2 \sum_{k=q}^{n-1} \sum_{j=0}^{q-1} |\mathbf{u}_k^\mathsf{T} \overline{\mathbf{u}}_j|^2  \nonumber \\
\Rightarrow~ & \frac{2}{\lambda_q^2} ~q \left( \mathrm{avgrelout}_G(\widetilde{G}) \right)^2 \nonumber \\
	& \qquad \geq~ \frac{\lambda_{q-1}^2}{\lambda_q^2} \left(1 - \sum_{k=0}^{q-1} \sum_{j=q}^{n-1} |\mathbf{u}_k^\mathsf{T} \overline{\mathbf{u}}_j|^2 \right)
		~+~  \sum_{k=q}^{n-1} \sum_{j=0}^{q-1} |\mathbf{u}_k^\mathsf{T} \overline{\mathbf{u}}_j|^2  \nonumber \\
	& \qquad \geq~ \frac{\lambda_{q-1}^2}{\lambda_q^2} \left(1 - \sum_{k=0}^{q-1} \sum_{j=q}^{n-1} |\mathbf{u}_k^\mathsf{T} \overline{\mathbf{u}}_j|^2 \right)   \nonumber \\
	& \qquad =~ \frac{\lambda_{q-1}^2}{\lambda_q^2} \left(1 - q(n-q) \left( \mathrm{modaldist}_G(\widetilde{G}) \right)^2 \right)   \nonumber \\
	& \qquad \geq~ \frac{\lambda_{q-1}^2}{\lambda_q^2} \left(1 - \frac{2 q}{\lambda_q^2} \left( \mathrm{avgrelout}_G(\widetilde{G}) \right)^2 \right)  \nonumber \\ & \qquad\qquad\qquad\qquad\qquad\qquad\qquad \text{\small (using Proposition~\ref{prop:q-conn-dist})} \nonumber \\
%
	& \qquad \geq~ \frac{\lambda_{q-1}^2}{\lambda_q^2} \left(1 - \alpha^2 \right) ~~\text{\small (since $\widetilde{\mathcal{G}}$ is an $\alpha$-realizable $q$-partition,} \nonumber \\ & \qquad\qquad\qquad\qquad \text{\small using Definition~\ref{def:alpha-realizable}, $\mathrm{avgrelout}_G(\widetilde{G}) \leq \frac{\lambda_q \alpha}{\sqrt{2q}}$)} \nonumber \\ 
\Rightarrow~ & \frac{\lambda_{q-1}}{\lambda_q} ~\leq~ \frac{1}{\sqrt{1-\alpha^2}} \frac{\sqrt{2q}}{\lambda_q} \mathrm{avgrelout}_G(\widetilde{G}) \label{eq:lambda-ratio-avgrelout} \\ 
 & \qquad~~ ~\leq~ \frac{\alpha}{\sqrt{1-\alpha^2}} ~~~~\text{\small (using Definition~\ref{def:alpha-realizable})} \label{eq:lambda-ratio-alpha}
\end{align}
\end{proof}
}

\vspace{1em}
\subsection{\changedB{Proof of Proposition~\ref{prop:r-robustness}}}  \label{appendix:prop-r-robustness}

\noindent\emph{Proposition~\ref{prop:r-robustness}} \!\!\!\!\!\!\!\!\!\!\!\!
\begin{proof}
\changedB{
	Let $\{\overline{\mathbf{u}}_0, \overline{\mathbf{u}}_1, \cdots, \overline{\mathbf{u}}_{n-1}\}$ be the eigenvectors of the $q$-partitioned graph $\overline{G}$ corresponding to the partition $\widetilde{\mathcal{G}}$ of $G$. Then,
	\begin{align} 
	& \displaystyle \breve{U} \breve{U}^\mathsf{T} \nonumber \\ 
	= & \displaystyle \sum_{l=0}^{q-1} \mathbf{u}_l \mathbf{u}_l^\mathsf{T}  \nonumber \\
	= & \displaystyle \sum_{l=0}^{q-1} \left( \sum_{j=0}^{n-1} (\overline{\mathbf{u}}_j^\mathsf{T} \mathbf{u}_l) \overline{\mathbf{u}}_j \right) \mathbf{u}_l^\mathsf{T}  \nonumber \\ & \qquad\qquad\qquad \text{\small ($\mathbf{u}_l$ written in basis $\{\overline{\mathbf{u}}_j\}_{j=0,1,\cdots,n-1}$)}  \nonumber \\
	= & \displaystyle \left( \sum_{j=0}^{n-1} \overline{\mathbf{u}}_j \overline{\mathbf{u}}_j^\mathsf{T} \right)
	\left( \sum_{l=0}^{q-1} \mathbf{u}_l \mathbf{u}_l^\mathsf{T} \right)  \nonumber \\ & \qquad \text{\small (since $(\overline{\mathbf{u}}_j^\mathsf{T} \mathbf{u}_l) \overline{\mathbf{u}}_j \mathbf{u}_l^\mathsf{T} =  \overline{\mathbf{u}}_j (\overline{\mathbf{u}}_j^\mathsf{T} \mathbf{u}_l) \mathbf{u}_l^\mathsf{T} = (\overline{\mathbf{u}}_j \overline{\mathbf{u}}_j^\mathsf{T}) (\mathbf{u}_l \mathbf{u}_l^\mathsf{T})$)}  \nonumber \\
	= & \displaystyle \left( \sum_{j=0}^{q-1} \overline{\mathbf{u}}_j \overline{\mathbf{u}}_j^\mathsf{T} + \sum_{j=q}^{n-1} \overline{\mathbf{u}}_j \overline{\mathbf{u}}_j^\mathsf{T}\right)
	\left( \sum_{l=0}^{q-1} \mathbf{u}_l \mathbf{u}_l^\mathsf{T} \right)  \nonumber \\
	= & \displaystyle \left( \sum_{j=0}^{q-1} \overline{\mathbf{u}}_j \overline{\mathbf{u}}_j^\mathsf{T} \right) \left( \sum_{l=0}^{n-1} \mathbf{u}_l \mathbf{u}_l^\mathsf{T} - \sum_{l=q}^{n-1} \mathbf{u}_l \mathbf{u}_l^\mathsf{T} \right)  \nonumber \\
	& \displaystyle \qquad\qquad +~ \left( \sum_{j=q}^{n-1} \overline{\mathbf{u}}_j \overline{\mathbf{u}}_j^\mathsf{T}\right) \left( \sum_{l=0}^{q-1} \mathbf{u}_l \mathbf{u}_l^\mathsf{T} \right)  \nonumber \\
	= & \displaystyle \sum_{j=0}^{q-1} \overline{\mathbf{u}}_j \overline{\mathbf{u}}_j^\mathsf{T} ~-~ 
	\sum_{j=0}^{q-1} \sum_{l=q}^{n-1} \overline{\mathbf{u}}_j \overline{\mathbf{u}}_j^\mathsf{T} \mathbf{u}_l \mathbf{u}_l^\mathsf{T} + \sum_{j=q}^{n-1} \sum_{l=0}^{q-1} \overline{\mathbf{u}}_j \overline{\mathbf{u}}_j^\mathsf{T}  \mathbf{u}_l \mathbf{u}_l^\mathsf{T}  \nonumber \\ & \qquad\qquad\qquad\qquad\qquad\qquad \text{\small (since $\sum_{l=0}^{n-1} \mathbf{u}_l \mathbf{u}_l^\mathsf{T} = I$)}  \nonumber \\
	\end{align}
	
	\noindent
	Thus,
	\begin{equation} \begin{array}{rl}
	\breve{U} \breve{U}^\mathsf{T} - \breve{\overline{U}} \breve{\overline{U}}^\mathsf{T} = & \displaystyle
	\sum_{j=0}^{q-1} \sum_{l=q}^{n-1} 
	\left(  \overline{\mathbf{u}}_l \overline{\mathbf{u}}_l^\mathsf{T}  \mathbf{u}_j \mathbf{u}_j^\mathsf{T} 
	- \overline{\mathbf{u}}_j \overline{\mathbf{u}}_j^\mathsf{T} \mathbf{u}_l \mathbf{u}_l^\mathsf{T} \right)
	\end{array} \label{eq:UUT-diff} \end{equation}
	where $\breve{\overline{U}} = [\overline{\mathbf{u}}_{0}~\overline{\mathbf{u}}_{1}~\cdots~\overline{\mathbf{u}}_{q-1}]$.
	
	We next consider the square of the Frobenius norm~\cite{Meyer:matrix:algebra} of the quantities on both sides of \eqref{eq:UUT-diff}.
	By the definition of Frobenius norm, $\displaystyle \|M\|_F^2 = \sum_{a,b} |M_{ab}^2| = \text{tr}(M^\mathsf{T} M)$. Thus,
	\begin{align}  
	& \| \breve{U} \breve{U}^\mathsf{T} - \breve{\overline{U}} \breve{\overline{U}}^\mathsf{T} \|_F^2 \nonumber \\
	= &~ \displaystyle 
	\text{tr}\Bigg( \sum_{j,j'=0}^{q-1} ~~\sum_{l,l'=q}^{n-1} 
		\left(  \overline{\mathbf{u}}_{l'} \overline{\mathbf{u}}_{l'}^\mathsf{T}  \mathbf{u}_{j'} \mathbf{u}_{j'}^\mathsf{T} \right)^\mathsf{T} \left(  \overline{\mathbf{u}}_l \overline{\mathbf{u}}_l^\mathsf{T}  \mathbf{u}_j \mathbf{u}_j^\mathsf{T} \right) \nonumber \\
		& \qquad\qquad\qquad\quad - \left( \overline{\mathbf{u}}_{l'} \overline{\mathbf{u}}_{l'}^\mathsf{T}  \mathbf{u}_{j'} \mathbf{u}_{j'}^\mathsf{T} \right)^\mathsf{T} \left( \overline{\mathbf{u}}_j \overline{\mathbf{u}}_j^\mathsf{T} \mathbf{u}_l \mathbf{u}_l^\mathsf{T} \right) \nonumber \\
		& \qquad\qquad\qquad\quad ~- \left( \overline{\mathbf{u}}_{j'} \overline{\mathbf{u}}_{j'}^\mathsf{T} \mathbf{u}_{l'} \mathbf{u}_{l'}^\mathsf{T} \right)^\mathsf{T} \left( \overline{\mathbf{u}}_l \overline{\mathbf{u}}_l^\mathsf{T}  \mathbf{u}_j \mathbf{u}_j^\mathsf{T} \right) \nonumber \\
		& \qquad\qquad\qquad\quad ~~+ \left( \overline{\mathbf{u}}_{j'} \overline{\mathbf{u}}_{j'}^\mathsf{T} \mathbf{u}_{l'} \mathbf{u}_{l'}^\mathsf{T} \right)^\mathsf{T} \left( \overline{\mathbf{u}}_j \overline{\mathbf{u}}_j^\mathsf{T} \mathbf{u}_l \mathbf{u}_l^\mathsf{T} \right) 
	\Bigg) \nonumber \\ 
	= & \displaystyle
	\sum_{j,j'=0}^{q-1} ~~\sum_{l,l'=q}^{n-1} \Big(
	 ( \mathbf{u}_{j'}^\mathsf{T} \overline{\mathbf{u}}_{l'} ) ( \overline{\mathbf{u}}_{l'}^\mathsf{T}
	\overline{\mathbf{u}}_l ) ( \overline{\mathbf{u}}_l^\mathsf{T}  \mathbf{u}_j) \,\mathsf{tr}( \mathbf{u}_{j'} \mathbf{u}_j^\mathsf{T} ) \nonumber \\
	& \qquad\qquad\qquad\quad -   ( \mathbf{u}_{j'}^\mathsf{T} \overline{\mathbf{u}}_{l'} ) ( \overline{\mathbf{u}}_{l'}^\mathsf{T} \overline{\mathbf{u}}_j ) ( \overline{\mathbf{u}}_j^\mathsf{T} \mathbf{u}_l) \,\mathsf{tr}( \mathbf{u}_{j'} \mathbf{u}_l^\mathsf{T} ) \nonumber \\
	& \qquad\qquad\qquad\quad ~-  ( \mathbf{u}_{l'}^\mathsf{T} \overline{\mathbf{u}}_{j'} ) ( \overline{\mathbf{u}}_{j'}^\mathsf{T} \overline{\mathbf{u}}_l ) ( \overline{\mathbf{u}}_l^\mathsf{T}  \mathbf{u}_j ) \,\mathsf{tr}( \mathbf{u}_{l'} \mathbf{u}_j^\mathsf{T} ) \nonumber \\
	& \qquad\qquad\qquad\quad ~~+  ( \mathbf{u}_{l'}^\mathsf{T} \overline{\mathbf{u}}_{j'} ) ( \overline{\mathbf{u}}_{j'}^\mathsf{T} \overline{\mathbf{u}}_j ) ( \overline{\mathbf{u}}_j^\mathsf{T} \mathbf{u}_l ) \,\mathsf{tr}( \mathbf{u}_{l'} \mathbf{u}_l^\mathsf{T} )
	\Big) \nonumber \\ 
	= & \displaystyle
	\sum_{j=0}^{q-1} \sum_{l=q}^{n-1} \left( |\overline{\mathbf{u}}_l^\mathsf{T} \mathbf{u}_j|^2 ~+~ |\overline{\mathbf{u}}_j^\mathsf{T} \mathbf{u}_l|^2 \right) \nonumber \\ 
	&  \qquad\qquad \text{\small (since 
	$\text{tr}(\mathbf{a} \mathbf{b}^\mathsf{T}) = \mathbf{a}^\mathsf{T} \mathbf{b}$ and
	$\mathbf{u}_c^\mathsf{T} \mathbf{u}_d = \overline{\mathbf{u}}_c^\mathsf{T} \overline{\mathbf{u}}_d = \delta_{cd}$)} \nonumber \\
	= &~ \displaystyle 2 q(n-q) \left( \mathrm{modaldist}_{G}(\widetilde{\mathcal{G}}) \right)^2 \text{\small ~~(Definition~\ref{def:q-conn})} \nonumber \\
	\leq &~ \displaystyle 2 q(n-q)  \frac{2}{\lambda_q^2 \changedC{(n-q)}} \left( \mathrm{avgrelout}_G (\widetilde{\mathcal{G}}) \right)^2 ~~\text{\small (Proposition~\ref{prop:q-conn-dist})} \nonumber \\
	\leq &~ \displaystyle 2 q(n-q)  \frac{2}{\lambda_q^2 \changedC{(n-q)}} \left( \frac{\alpha^2 \lambda_q^2}{2 q} \right) ~~\text{\small (Definition~\ref{def:alpha-realizable})} \nonumber \\
	= &~ 2 \alpha^2
	\label{eq:frob-norm-1}
	\end{align}
	
	\noindent
	Using a similar analysis we get
	\begin{equation} \label{eq:frob-norm-2}
	\| \breve{U}' \breve{U}'{}^\mathsf{T} - \breve{\overline{U}} \breve{\overline{U}}^\mathsf{T} \|_F^2 ~\leq~ 2 \alpha^2
	\end{equation}
	Using triangle inequality of Frobenius norm,
	{\small \begin{eqnarray}
	\| \breve{U} \breve{U}^\mathsf{T} - \breve{U}' \breve{U}'{}^\mathsf{T} \|_F 
	& \leq & \| \breve{U} \breve{U}^\mathsf{T} - \breve{\overline{U}} \breve{\overline{U}}^\mathsf{T} \|_F + \| \breve{U}' \breve{U}'{}^\mathsf{T} - \breve{\overline{U}} \breve{\overline{U}}^\mathsf{T} \|_F \nonumber \\
	& \leq & 2 \sqrt{2} \alpha ~~\text{\small (using \eqref{eq:frob-norm-1} and \eqref{eq:frob-norm-2})}
	\end{eqnarray}}
	
	\noindent
	For notational simplicity, define $\Delta = \breve{U} \breve{U}^\mathsf{T} - \breve{U}' \breve{U}'{}^\mathsf{T}$.
	Suppose the $k$-th edge connects the $a$-th and $b$-th vertices (\emph{i.e.} $e_k = (v_a,v_b) \in \mathcal{E}(G)$). 
	Then from \eqref{eq:tilde-r} it follows
	\begin{equation}\begin{array}{rl}
	& \displaystyle \breve{r}_k - \breve{r}'_k \\
	= & \displaystyle \sum_{i,j=0}^n B_{ik} \Delta_{ij} B_{jk} \\
	= &  \displaystyle \sum_{i,j\in \{a,b\}} B_{ik} \Delta_{ij} B_{jk} ~~\text{\small (since $B_{ik} = 0$ for $i\notin \{a,b\}$)} \\
	= & \displaystyle \Delta_{aa} - \Delta_{ab} - \Delta_{ba} + \Delta_{bb} \\ & \qquad\qquad \text{\small (since $B_{ak}, B_{bk} \in \{1,-1\}$ and $B_{ak} \neq B_{bk}$)}
	\end{array} \label{eq:dr-Delta} \end{equation}
	Thus,
	\begin{equation}\begin{array}{l}
	\frac{1}{4} |\breve{r}_k - \breve{r}'_k |^2 ~\leq~  |\Delta_{aa}|^2 + |\Delta_{ab}|^2 + |\Delta_{ba}|^2 + |\Delta_{bb}|^2 
	\\ \quad\qquad \text{\small (since $(a+b+c+d)^2 \leq 2((a+b)^2 + (c+d)^2) $} \\ \qquad\qquad\qquad ~~\text{\small $\leq 4(a^2+b^2+c^2+d^2)$ for any $a,b,c,d\in\mathbb{R}$)}
	\end{array}\end{equation}
	We next proceed to sum both sides of the above equation over all edges in $G$. 
	\begin{equation}\begin{array}{l}
	\displaystyle \frac{1}{4} \sum_{k=1}^m 
	|\breve{r}_k - \breve{r}'_k |^2 \nonumber \\ 
	\qquad\qquad \leq \displaystyle \sum_{\{a,b \,|\, \atop (v_a,v_b)\in\mathcal{E}(G)\}} \!\!\! \left( |\Delta_{aa}|^2 + |\Delta_{ab}|^2 + |\Delta_{ba}|^2 + |\Delta_{bb}|^2  \right)
	\end{array}\end{equation}
	In the above equation 
	we observe that, in the sum on the right hand side, terms of the form $|\Delta_{ii}|^2$ appears as many times as there are edges connected to the $i$-th vertex (\emph{i.e.}, the degree of the vertex). However, terms of the form $|\Delta_{ij}|^2$ (for $i\neq j$) appear twice for every edge $(v_i,v_j)$ that exists. Thus
	we have
	\begin{align*}
	  \displaystyle \frac{1}{4} \sum_{k=1}^m |\breve{r}_k - \breve{r}'_k |^2 \!\! & ~\leq~ \displaystyle \sum_{i=1}^n \text{deg}(v_i) |\Delta_{ii}|^2 ~~+~~ 2 \!\!\!\!\!\!\sum_{\{i,j \,|\, \atop (v_i,v_j) \in \mathcal{E}(G) \}} \!\!\!\! |\Delta_{ij}|^2  \\
	& ~\leq~ \displaystyle \left( \max_{v_i \in \mathcal{V}(G)} \text{deg}(v_i) \right) \sum_{i=1}^n |\Delta_{ii}|^2  \\
	&        \displaystyle \qquad\qquad\qquad +~~~ 2 \!\!\!\!\!\! \sum_{\{i,j \,| \atop v_i,v_j \in \mathcal{V}(G), i\neq j \}} \!\!\!\!\!\! |\Delta_{ij}|^2  \\
	& ~\leq~ \displaystyle \left( \max_{v_i \in \mathcal{V}(G)} \text{deg}(v_i) \right) \sum_{i=1}^n \sum_{j=1}^n |\Delta_{ij}|^2  \\
	& \qquad \text{\small (for a connected graph with at least $2$} \\ & \qquad\qquad \text{\small vertices, $\max_{v_i \in \mathcal{V}(G)} \text{deg}(v_i) \geq 2$.)} \\
	& ~=~ \displaystyle d_{\text{max}} ~\|\Delta\|_F^2  \\ 
	& ~=~ \displaystyle 2 d_{\text{max}} \alpha^2 ~~\text{\small (from \eqref{eq:frob-norm-1})} \\
	\Rightarrow ~ \| \breve{\mathbf{r}} - \breve{\mathbf{r}}' \|_2 & ~\leq~~ 2 \alpha \sqrt{2 d_\text{max}}
	\end{align*} 
	
}
%
\end{proof}


}


\changedC{
\vspace{1em}
\subsection{Proof of Proposition~\ref{prop:approximation}}  \label{appendix:prop-approximation}

\noindent\emph{Proposition~\ref{prop:approximation}}} \!\!\!\!\!\!\!\!\!\!\!\!
\begin{proof}
\changed{\changedC{From the definition of $\breve{U}$, since it constitutes of the first $q$ columns of $U$, we have} $\breve{U} \breve{U}^\mathsf{T} = U \text{diag}([\mathbf{1}_q, \mathbf{0}_{n-q}]) U^\mathsf{T}$.
	\changedC{Furthermore,} $\epsilon(\epsilon I + L)^{-1} = U \text{diag}_{i=0}^{n-1}(\frac{\epsilon}{\epsilon + \lambda_{i}}) U^\mathsf{T}$ \changedC{(since $U$ is the orthogonal matrix that diagonalizes $L$)}.
	Thus, \begin{equation}\begin{array}{@{}l}
	\breve{U} \breve{U}^\mathsf{T} - \epsilon(\epsilon I + L)^{-1} \\
	= U ~\text{diag}\left(
	1 \!-\! \frac{\epsilon}{\epsilon + \lambda_{0}}, 1 \!-\! \frac{\epsilon}{\epsilon + \lambda_{1}}, \cdots, 1 \!-\! \frac{\epsilon}{\epsilon + \lambda_{q-1}}, \right.
	\\ \qquad\qquad\qquad
	\left. -\frac{\epsilon}{\epsilon + \lambda_{q}}, -\frac{\epsilon}{\epsilon + \lambda_{q+1}}, \cdots, -\frac{\epsilon}{\epsilon + \lambda_{n-1}}
	\right) U^\mathsf{T}
	\end{array} \end{equation}
	Furthermore, since the matrices are symmetric 
	their operator $2$-norms are the 
	maximum of the absolute values of their eigenvalues. Thus,
	%
	{\small \begin{align}
		& \frac{ \|\breve{U} \breve{U}^\mathsf{T} - \epsilon (\epsilon I + L)^{-1}\|_\text{2} }{ \|\breve{U} \breve{U}^\mathsf{T} \|_\text{2} } \nonumber \\
		=~ & \frac{ \max\left( 
			\max_{i=0}^{q-1}\left( \frac{\lambda_i}{\epsilon + \lambda_{i}} \right),
			\max_{i=q}^{n-1}\left( \frac{\epsilon}{\epsilon + \lambda_{i}} \right)
			\right) }{1} \nonumber \\
		=~ & \max\left( \frac{\lambda_{q-1}}{\epsilon + \lambda_{q-1}} , \frac{\epsilon}{\epsilon + \lambda_{q}} \right) \nonumber \\
		%
		<~ & \max\left( \frac{\lambda_{q-1}}{\epsilon} , \frac{\epsilon}{\lambda_{q}} \right)
		\nonumber \\
		\leq~ & \max \left( 
		\frac{1}{\epsilon} \frac{\displaystyle \sqrt{\changedC{2q}} ~\textrm{avgrelout}_{G}(\widetilde{\mathcal{G}})}{\displaystyle \sqrt{1-\alpha^2}} ~,
		\epsilon \frac{\alpha}{\sqrt{\changedC{2q}} ~\textrm{avgrelout}_{G}(\widetilde{\mathcal{G}}) }
		\right) \nonumber\\
		& \qquad\text{(using 
			\changedC{equation \eqref{eq:lambda-ratio-avgrelout} from the derivation of} Proposition~\ref{prop:lambda-q-1-upper-bound}
			} \nonumber\\ & \qquad\qquad\qquad\quad\text{and Definition~\ref{def:alpha-realizable} \changedC{for} $\alpha$-realizable $q$-partition)} \nonumber \\
		\leq~ & \max \left( 
		\frac{1}{\epsilon} \frac{\displaystyle \sqrt{\changedC{2q}} ~\widehat{a}/\beta}{\displaystyle \sqrt{1-\alpha^2}} ~,
		\epsilon \frac{\alpha}{\sqrt{\changedC{2q}} ~\widehat{a}\beta }
		\right) \nonumber\\
		\leq~ & \changedC{ \max \left( 
			\frac{1}{\epsilon} \frac{\displaystyle \sqrt{\changedC{2q}} ~\widehat{a}/\beta}{\displaystyle \sqrt{1-\widehat{\alpha}^2}} ~,
			\epsilon \frac{\widehat{\alpha}}{\sqrt{\changedC{2q}} ~\widehat{a}\beta }
			\right)  \qquad\text{(since $\alpha \leq \widehat{\alpha}$)} } \nonumber\\
		=~ & \max \left( 
		\changedC{\frac{\left(\widehat{\alpha}\sqrt{1-\widehat{\alpha}^2}\right)^{1/2}}{\sqrt{2q} ~\widehat{a}}} \frac{\displaystyle \sqrt{\changedC{2q}} ~\widehat{a}/\beta }{\displaystyle \sqrt{(1-\alpha^2)}  } ~,~ 
		\changedC{\frac{\sqrt{2q} ~\widehat{a}}{\left(\widehat{\alpha}\sqrt{1-\widehat{\alpha}^2}\right)^{1/2}}} \frac{\changedC{\widehat{\alpha}}}{\sqrt{\changedC{2q}} ~\widehat{a}\beta }
		\right) \nonumber \\ & \qquad\qquad\qquad\qquad\qquad\qquad 
		\qquad\text{(setting $\changedC{\epsilon = \displaystyle \frac{\sqrt{2q} ~\widehat{a}}{\left(\widehat{\alpha}\sqrt{1-\widehat{\alpha}^2}\right)^{1/2}} }$)} \nonumber \\
		=~ &  \frac{1}{\beta} ~\max \left( 
		\changedC{ \frac{\sqrt{\widehat{\alpha}}}{\left(1-\widehat{\alpha}^2\right)^{1/4}} } ,~
		\changedC{ \frac{\sqrt{\widehat{\alpha}}}{\left(1-\widehat{\alpha}^2\right)^{1/4}} }
		\right)  \nonumber\\
		%
		%
		=~ & \frac{\changedC{1}}{\beta} \changedC{ \frac{\sqrt{\widehat{\alpha}}}{\left(1-\widehat{\alpha}^2\right)^{1/4}} }
		\end{align}}
}%
\end{proof}

\bibliographystyle{IEEEtranS}
\bibliography{references}

\end{document}